\documentclass[journal,onecolumn,twoside]{IEEEtran}
\normalsize
\ifCLASSINFOpdf
\else
\fi
\usepackage{setspace}
\usepackage{color}
\usepackage{cite}
\usepackage{amsmath}
\usepackage{amsfonts}
\usepackage{amssymb}
\usepackage{amsthm}
\usepackage{tikz}
\usepackage{cases}
\usepackage{bm}
\usepackage{graphicx}
    \graphicspath{{../}}
    \DeclareGraphicsExtensions{.pdf}
\usepackage[caption=true,font=footnotesize]{subfig}
\usepackage{multirow}
\usepackage{makecell}
\usepackage{mathdots}
\usepackage{booktabs}
\usepackage{url}
\usepackage{bm}
\usepackage{xtab}
\usepackage{pifont}
\usepackage{array}
\usepackage{algorithm}
\usepackage{algorithmic}
\usepackage{enumerate}
\usepackage{extarrows}
\usepackage{tikz}
\usepackage{amsfonts}
\usepackage{color}
\usepackage{xcolor}
\usetikzlibrary{arrows}
\usepackage{caption}

\usepackage{lineno} 

\usepackage{mathrsfs}
\usepackage{tikz-cd}
\usepackage{mathtools}
\usepackage{makecell}
\usepackage{booktabs}
\usepackage{array}
\usepackage{enumitem}

\allowdisplaybreaks[4]
\usepackage[colorlinks,
            linkcolor=blue,
            anchorcolor=blue,
            citecolor=blue,
            urlcolor=black]{hyperref}
\setcounter{MaxMatrixCols}{20}
\captionsetup{font={scriptsize}}
\theoremstyle{plain}

\newtheorem{theorem}{Theorem}[section]
\newtheorem{lemma}{Lemma}[section]
\newtheorem{proposition}{Proposition} [section]
\newtheorem{definition}{Definition} [section]
\newtheorem{corollary}{Corollary} [section]
\theoremstyle{definition}

\newtheorem{remark}{Remark}[section]


\hyphenation{op-tical net-works semi-conduc-tor}


\DeclareMathOperator{\Supp}{supp}

\newcommand{\Span}{\mathrm{span}}

\newcommand{\Bcal}{\mathcal{B}}
\newcommand{\Ccal}{\mathcal{C}}

\newcommand{\Ecal}{\mathcal{E}}

\newcommand{\Hcal}{\mathcal{H}}

\newcommand{\Lcal}{\mathcal{L}}

\newcommand{\Ocal}{\mathcal{O}}
\newcommand{\Pcal}{\mathcal{P}}

\newcommand{\Rcal}{\mathcal{R}}

\newcommand{\Vcal}{\mathcal{V}}

\newcommand{\Xcal}{\mathcal{X}}
\newcommand{\Ycal}{\mathcal{Y}}


\newcommand{\Pbb}{\mathbb{P}}

\newcommand{\Zbb}{\mathbb{Z}}

\newcommand{\pfrac}[2]{\frac{\partial #1}{\partial #2}}



\newcommand{\F}{\mathbb{F}}

\newcommand{\Div}{\mathrm{Div}}

\newcommand{\Char}{\mathrm{char}}

\newcommand{\Emph}{}

\newcommand{\undetermined}{11}

\newcommand{\enco}{\mathrm{enc}}
\newcommand{\info}{\mathrm{info}}
\newcommand{\query}{\mathrm{query}}
\newcommand{\noise}{\mathrm{noise}}
\newcommand{\priv}{\mathrm{priv}}
\newcommand{\secu}{\mathrm{sec}}
\newcommand{\full}{\mathrm{full}} 
\newcommand{\vp}{\varphi}

\DeclarePairedDelimiter\efloor{\lfloor}{\rfloor}

\DeclarePairedDelimiter\eparentheses{(}{)}
\DeclarePairedDelimiter\ebrackets{[}{]}
\DeclarePairedDelimiter\ebraces{\{}{\}}

\newcommand{\floor}{\efloor*}

\newcommand{\parentheses}{\eparentheses*}
\newcommand{\brackets}{\ebrackets*}
\newcommand{\braces}{\ebraces*}

\newcommand{\new}{\mathrm{new}}

\makeatletter

\newcommand{\Rmnum}[1]{\expandafter\@slowromancap\romannumeral #1@}
\makeatother
 
\newcommand{\Rnum}[1]{\uppercase\expandafter{\romannumeral #1\relax}}
\newcommand{\rnum}[1]{\expandafter{\romannumeral #1\relax}}


%
%

\raggedbottom
 



\newcommand{\Ev}{\mathrm{ev}}

\setcounter{MaxMatrixCols}{40}

\begin{document}

\title{New $X$-Secure $T$-Private Information Retrieval Schemes via Rational Curves and Hermitian Curves} 
\author{Yuan~Gao,~\IEEEmembership{} 
        Weijun~Fang, ~\IEEEmembership{}
        Jingke~Xu, ~\IEEEmembership{}  
        Jiejing~Wen ~\IEEEmembership{}
        
\IEEEcompsocitemizethanks{\IEEEcompsocthanksitem Yuan Gao, Weijun Fang, and Jiejing Wen are with the State Key Laboratory of Cryptography and Digital Economy Security, Shandong University, Qingdao, 266237, China, the Key Laboratory of Cryptologic Technology and Information Security, Ministry of Education, Shandong University, Qingdao, 266237, China and the School of Cyber Science and Technology, Shandong University, Qingdao, 266237, China. Jingke Xu is with the School of Information Science and Engineering, Shandong Agricultural University, Tai'an, 271018, China (emails: gaoyuan862023@163.com, fwj@sdu.edu.cn,
xujingke@sdau.edu.cn, 
jjwen@sdu.edu.cn).
}
\thanks{ This research is supported in part by the National Key Research and Development Program of China under Grant Nos. 2022YFA1004900 and 2021YFA1001000, the National Natural Science Foundation of China under Grant Nos. 62571301 and 12201357, the Natural Science Foundation of Shandong Province under
Grant Nos. ZR2025MS59 and ZR2022QF001. 
{\it (Corresponding Author: Weijun Fang and Jingke Xu)}.}
\thanks{Manuscript submitted }}

\maketitle
\begin{abstract}  
$X$-secure and $T$-private information retrieval (XSTPIR) is a variant of private information retrieval where data security is guaranteed against collusion among up to $X$ servers and the user's retrieval privacy is guaranteed against collusion among up to $T$ servers. Recently, researchers have constructed XSTPIR schemes through the theory of algebraic geometry codes and algebraic curves, with the aim of obtaining XSTPIR schemes that have higher maximum PIR rates for fixed field size and $X,T$ (the number of servers $N$ is not restricted). The mainstream approach is to employ curves of higher genus that have more rational points, evolving from rational curves to elliptic curves to hyperelliptic curves and, most recently, to Hermitian curves.

In this paper, we propose a different perspective: with the shared goal of constructing XSTPIR schemes with higher maximum PIR rates, we move beyond the mainstream approach of seeking curves with higher genus and more rational points. Instead, we aim to achieve this goal by enhancing the utilization efficiency of rational points on curves that have already been considered in previous work. By introducing a family of bases for the polynomial space $\Span_{\F_q}\{1,x,\dots,x^{k-1}\}$ as an alternative to the Lagrange interpolation basis, we develop two new families of XSTPIR schemes based on rational curves and Hermitian curves, respectively. 
Parameter comparisons demonstrate that our schemes achieve superior performance. Specifically, our Hermitian-curve-based XSTPIR scheme provides the largest known maximum PIR rates when the field size $q^2\geq 14^2$ and $X+T\geq 4q$. Moreover, for any field size $q^2\geq 28^2$ and $X+T\geq 4$, our two XSTPIR schemes collectively provide the largest known maximum PIR rates.
\end{abstract}

\begin{IEEEkeywords}
  Private information retrieval, security, distributed storages, algebraic curves, algebraic geometry codes. 
\end{IEEEkeywords}

\IEEEpeerreviewmaketitle

\section{Introduction}
The problem of private information retrieval (PIR)  was first proposed by Chor \textit{et al.} \cite{CKGS95FOCS:PIR} in 1995. 
It is a canonical problem in the study of privacy issues that arise from the retrieval of information from public databases. 

In practice, it is common that each file is quite large, making the download from the servers the dominant communication cost. Motivated by this, the PIR problem was reformulated in \cite{Sun&Jafar16:CapacityPIR} from the perspective of information theory. This formulation \cite{Sun&Jafar16:CapacityPIR} assumes arbitrarily large message sizes, thereby allowing the upload cost to be neglected.
Within this framework, the efficiency of a PIR scheme is measured by the \Emph{PIR rate}, which is defined to be the ratio of the size of the requested message to the total number of downloaded bits. The PIR capacity is defined to be the supremum of the PIR rates of all feasible schemes. 
Considerable progress has been made on various variants of the PIR problem in recent years, such as
\cite{Sun&Jafar16:ColludPIR, YLW20:CapaPIRCollud, banawan2019capacity, sun2019capacitySym, wang2019PIRSym, Banawan2018multi, zhang2019general, hollanti2017private, tajeddine2019private, Holzbaur2022towards, lin2018MDSPIR, xu2018sub, xu2018building, xu2019capacityachiev, xu2022building, xu2025explicit, hollanti2019tprivate, sunrui2025MDSTPIR, jia2020x}. A systematic review of the literature is available in \cite{UAGJ22:SuPIR}.

$X$-secure and $T$-private information retrieval (XSTPIR) is a variant of private information retrieval where data security is guaranteed against collusion among up to $X$ servers and the user's privacy is guaranteed against collusion among
up to $T$ servers (see Section~\ref{sec:2.2} for details). In \cite{jia2019cross}, the asymptotic (large file number $K$) capacity of XSTPIR schemes was determined, where the idea of Cross-Subspace Alignment (CSA) was applied to design XSTPIR schemes that attain the capacity.  
Later in \cite{makkonen2024algebraic}, the original construction of the XSTPIR schemes in \cite{jia2019cross}, which employed Reed–Solomon codes, was reinterpreted by a new general framework, which employed algebraic-geometry codes (AG codes) from algebraic curves.
Based on this general framework, they \cite{makkonen2024algebraic} recovered the construction of XSTPIR schemes in \cite{jia2019cross} using AG codes from rational curves (genus 0), and further proposed new XSTPIR schemes using AG codes from elliptic curves (genus 1).
As pointed out in \cite{makkonen2024algebraic}, when the field size $q$, the secure parameter $X$, the privacy parameter $T$ are all fixed (and the number of servers $N$ is not restricted), the XSTPIR schemes based on elliptic curves may have higher maximum PIR rates\footnote{In this paper, whenever we discuss the maximum PIR rate of a family of XSTPIR schemes, we refer to the maximum PIR rate achievable by this family of XSTPIR schemes for fixed field size $q$ (or $q^2$), and fixed $X,T$, where the number of servers $N$ is unrestricted.} than the XSTPIR schemes based on rational curves (see \cite[Fig. 1]{makkonen2024algebraic}, or see \cite[Proposition 4.2]{ghiandoni2025agcodes}).
This improvement of maximum PIR rates is owing to the fact that increasing the genus brings curves with more rational points. 
Extending the results of \cite{makkonen2024algebraic}, Makkonen \textit{et al.} \cite{makkonen2024secretsharingsecureprivate} constructed XSTPIR schemes based on hyperelliptic curves of genus $g\geq 1$, further improving the maximum PIR rates in some cases. 
Very recently in \cite{ghiandoni2025agcodes}, Ghiandoni \textit{et al.} proposed a family of new XSTPIR schemes based on Hermitian curves. Their schemes have significantly higher maximum PIR rates than previous schemes in some cases, especially when $X+T$ is large.

\subsection{Our Motivation and Contributions}  
This work investigates new constructions of XSTPIR schemes that achieve higher maximum PIR rates than existing schemes for fixed field size, security parameter $X$, and privacy parameter $T$ (with no restriction on the number of servers $N$). To this end, we briefly review the evolution of existing constructions.
To obtain XSTPIR schemes with higher maximum PIR rates (for fixed field size and $X,T$), the prevailing approach is to employ algebraic curves of higher genus that possess more rational points. 
The underlying algebraic curves for XSTPIR schemes have moved based on rational curves \cite{jia2019cross, makkonen2024algebraic} to elliptic curves \cite{makkonen2024algebraic}, to hyperelliptic curves \cite{makkonen2024secretsharingsecureprivate}, and further to Hermitian curves \cite{ghiandoni2025agcodes}.
 
Instead of pursuing the conventional path of seeking the help of curves with more and more rational points, we re-examine established constructions carefully and find that the rational points of their underlying curves can be leveraged more effectively.   
By introducing a family of new bases of the space $\Span_{\F_q}\{1,x,\dots,x^{k-1}\}$ (see Section~\ref{sec:3.1}), we develop new XSTPIR schemes that utilize rational points more efficiently, thus increasing the maximum PIR rates. Concretely, we propose two families of XSTPIR schemes using rational curves (Theorem~\ref{thm:our_XSTPIR_Rational}) and hermitian curves (Theorem~\ref{thm:our_XSTPIR_Hermitian}), respectively. 
It is shown in Section~\ref{sec:4} that when the field size $q^2\geq 14^2$ and $X+T\geq 4q$, our XSTPIR schemes based on Hermitian curves provide the largest known maximum PIR rates; and when the field size $q^2\geq 28^2$ and $X+T\geq 4$, our two constructions collectively outperform all existing XSTPIR schemes, providing the largest maximum PIR rates (See Corollary~\ref{cor:comparison_our_hermitian}, Corollary~\ref{cor:comparison_our_rational_and_hermitian}, and Figure~\ref{fig:comparison}).   

It is worth noting that the superior performance of our new scheme's maximum PIR rates is not achieved by seeking more advanced curves, but rather by better utilizing curves already employed by prior works. We believe that our approach offers valuable insights for future research, even when applied to novel curves that have yet to be explored for XSTPIR scheme constructions.
\subsection{Paper Organization}
The remainder of this paper is organized as follows. In Section \ref{sec:2}, we introduce the basic notions 
about algebraic function fields, algebraic geometry codes, and some algebraic curves that will be employed in this paper. Also, the concept of $X$-secure and $T$-private information retrieval is recalled. The general framework for producing XSTPIR schemes proposed by Makkonen \textit{et al.} \cite{makkonen2024algebraic} is introduced, along with all known explicit XSTPIR schemes based on it. Moreover, some results on the maximum PIR rates of these schemes are also included.
In Section \ref{sec:3}, based on the above-mentioned general framework, we propose two families of XSTPIR schemes by using rational curves and Hermitian curves.
In Section~\ref{sec:4}, we determined the maximum PIR rates of our proposed two XSTPIR schemes for fixed field size, security parameter $X$, and privacy parameter $T$. And then we compare the maximum PIR rates of our proposed two XSTPIR schemes with all existing XSTPIR schemes. In Section~\ref{sec:5}, we conclude this paper.
\section{Preliminaries}
\label{sec:2}
 Before proceeding, we fix some notations.
\begin{itemize}
    \item For two real numbers $a\leq b$, we define $[a,b]:=\{n\in \Zbb:\; a\leq n\leq b\}$. For positive integer $a$, we define $[a]:=\{1,\dots,a\}.$
    \item We use $\lfloor \cdot\rfloor$ and $\lceil \cdot\rceil$ to denote the floor and the ceiling function, respectively.
\end{itemize}
\subsection{Background on Algebraic Curves, Algebraic Function Fields, and Algebraic-Geometry Codes}
In this subsection, we present some preliminaries on algebraic curves, algebraic function fields, and algebraic geometry codes. We primarily adopt the notation and conventions of the theory of algebraic function fields as introduced in \cite{stichtenoth2009algebraic}, and see \cite{niederreiter2001rational,niederreiter2009algebraic} for some omitted details, such as the correspondence between places and $\F_q$-closed points. 

Let $\Vcal$ \footnote{Here we use $\Vcal$ rather than the more convenient symbol $\Xcal$, since the symbol $\Xcal$ will be later used to denote the rational curve specially.} be a projective, absolutely irreducible, non-singular algebraic curve defined over the finite field $\F_q$ of genus $g(\Vcal)$. 
Let $F=\F_q(\Vcal)$ be the function field of the curve $\Vcal/\F_q$, and let $\Pbb_F$ denote the set of all places of $F$. There is a one-to-one correspondence between $\Pbb_F$ and the set of all $\F_q$-closed points of $\Vcal$. In particular, the set of all rational places $\Pbb_{F}^1$ is in one-to-one correspondence with the set of all $\F_q$-rational points \footnote{$\F_q$-rational points are those $\F_q$-closed points with cardinality $1$.} of $\Vcal$. 

The free abelian group generated by $\Pbb_F$ is called the divisor group of $F/\F_q$ and is denoted by $\Div(F)$. Every element $D$ of $\Div(F)$ can be written as the form $D=\sum_{P\in \Pbb_F} n_P P$, where $n_P\neq 0$ for finitely many $P\in \Pbb_F$. The support set of $D$ is defined by $\Supp(D):=\{P\in \Pbb_F:n_P\neq 0\}$; and the degree of $D$ is defined by $\deg(D):=\sum_{P\in \Pbb_F} \deg(P)$. 
For a function $f \in F$, we use $(f)_0$ and $(f)_{\infty}$ to denote its zero divisor and pole divisor, respectively. Its principal divisor is defined by $$(f):=(f)_0-(f)_{\infty}.$$
The Riemann-Roch space associated to a divisor $D\in \Div(F)$ is defined by
$$\mathcal{L}(D):=\{f\in F:\;\ (f)+D \geq 0\}\cup \{0\},$$
Its dimension as a vector space over $\F_q$ is denoted by $\ell(D)$. Let us recall some useful properties of Riemann-Roch spaces that will be applied in this paper: 
\begin{itemize}
    \item $f\cdot\Lcal(D)=\Lcal(D-(f))$, where $f\in F$ and $f\cdot\Lcal(D)$ denotes $\{f\cdot u:\;u\in \Lcal(D)\}$.
    \item $\Lcal(D) \cdot\Lcal(D') \subseteq \Lcal(D+D')$, where $\Lcal(D) \cdot\Lcal(D'):=\text{span}_{\F_q}\{f\cdot g \,:\, f\in \Lcal(D), g\in \Lcal(D')\}$.
    \item If $D \le D'$, then $\Lcal(D) \subseteq  \Lcal(D')$.
\end{itemize}

Let $\Pcal=\{P_1,\cdots,P_n\}\subseteq \Pbb_F^1$ be a set of pairwise distinct rational places. Take another divisor $G\in \Div(F)$ such that $\Supp(G)\cap \Pcal=\varnothing$. The AG code $\mathcal{C}(\Pcal,G)$ associated with $\Pcal$ and $G$ is a linear subspace of $\F_q^n$ which is defined as the image of the evaluation map $\Ev:\;\mathcal{L}(G) \to \F_q^n$ given by $\Ev(f) = (f(P_1),f(P_2) ,\ldots,f(P_n))$. The code $\mathcal{C}(\Pcal,G)$ has length $n$. 
\begin{lemma}[Part of {\cite[Theorem 2.2.2, Corollary 2.2.3]{stichtenoth2009algebraic}}]\label{lem:AGcode_lem1}
    Let $\Pcal,G, \Ev, \Ccal(\Pcal,G)$ be as defined above. 
         If $2g(\Vcal)-2<\deg(G)<n$, then the map $\Ev$ is injective and the dimension of $\mathcal{C}(\Pcal,G)$ is equal to $\ell(G)=\deg(G)+1-g(\Vcal)$. 
\end{lemma}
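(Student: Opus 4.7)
The plan is to extract the two assertions of the lemma separately from the properties of Riemann--Roch spaces summarized just before the statement together with the Riemann--Roch theorem. The hypothesis $\deg(G)<n$ will drive the injectivity claim, while $\deg(G)>2g(\Vcal)-2$ will drive the dimension formula; note that both conditions are needed exactly because each takes care of exactly one of the two pieces.

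For the injectivity of $\Ev$, I would start with an arbitrary $f\in\Lcal(G)$ satisfying $\Ev(f)=0$, i.e., $f(P_i)=0$ for every $i\in[n]$. Set $D:=P_1+\cdots+P_n$. Since $\Supp(G)\cap\Pcal=\varnothing$, each $P_i$ satisfies $v_{P_i}(G)=0$, so the condition $(f)+G\geq 0$ gives $v_{P_i}(f)\geq 0$; combined with $f(P_i)=0$, this forces $v_{P_i}(f)\geq 1$ for every $i$. At every other place the original inequality $(f)+G\geq 0$ already holds, so $(f)+(G-D)\geq 0$, i.e., $f\in\Lcal(G-D)$. Because $\deg(G-D)=\deg(G)-n<0$ by hypothesis, and any Riemann--Roch space attached to a divisor of negative degree is trivial (otherwise a nonzero element would violate $\deg((f))=0$), we conclude $f=0$ and $\Ev$ is injective.

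For the dimension formula, injectivity immediately gives $\dim_{\F_q}\Ccal(\Pcal,G)=\ell(G)$, so it remains to compute $\ell(G)$. Applying the Riemann--Roch theorem yields
$$\ell(G)=\deg(G)+1-g(\Vcal)+\ell(W-G),$$
where $W$ is any canonical divisor with $\deg(W)=2g(\Vcal)-2$. The assumption $\deg(G)>2g(\Vcal)-2$ implies $\deg(W-G)<0$, hence $\ell(W-G)=0$, and the advertised formula $\ell(G)=\deg(G)+1-g(\Vcal)$ follows. Since both ingredients --- triviality of $\Lcal(E)$ for $\deg(E)<0$ and the Riemann--Roch theorem itself --- are classical, I do not anticipate any serious obstacle. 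The only modest point requiring care is that the evaluation $f(P_i)$ be well-defined, which is exactly ensured by the disjointness hypothesis $\Supp(G)\cap\Pcal=\varnothing$: it forces $v_{P_i}(f)\geq 0$ for every $f\in\Lcal(G)$, so the value of $f$ at $P_i$ makes sense in the residue field at $P_i$, which coincides with $\F_q$ because $P_i$ is rational.
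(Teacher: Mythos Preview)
Your proof is correct and follows exactly the standard argument from Stichtenoth's textbook; note that the paper itself does not supply a proof of this lemma but merely cites \cite[Theorem 2.2.2, Corollary 2.2.3]{stichtenoth2009algebraic}, so there is nothing further to compare.
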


The Euclidean dual code of a linear code $\Ccal$ (of length $n$) is defined as $\Ccal^{\perp}:=\{\bm y\in \F_q^n:\; (\bm y,\bm c)=0 \;\text{ for all } \bm c\in \Ccal\}$, where $(\bm y,\bm c)$ denotes the Euclidean inner product.
The dual code of an algebraic geometry code is also an AG code (see \cite[Theorem 2.2.8, Proposition 2.2.10]{stichtenoth2009algebraic}). The following lemma is extracted from \cite{stichtenoth2009algebraic}.
\begin{lemma}[Corollary from {\cite[Theorems 2.2.7 and 2.2.8]{stichtenoth2009algebraic}}]\label{lem:AGcode_lem2}
    Let $\Pcal,G$ and the linear code $\Ccal(\Pcal,G)$ be as defined above. If $2g(\Vcal)-2<\deg(G)\leq n-2$, then $\Ccal(\Pcal,G)^{\perp}$ has dimension at least $1$ and minimum distance at least $\deg(G)-(2g(\Vcal)-2)$.
\end{lemma}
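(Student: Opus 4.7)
The plan is to identify the dual code $\Ccal(\Pcal,G)^{\perp}$ with a differential AG code, and then read off both the dimension and the minimum distance directly from classical results in Stichtenoth \cite{stichtenoth2009algebraic}. Since the lemma is advertised as a corollary of Theorems 2.2.7 and 2.2.8 of \cite{stichtenoth2009algebraic}, the proof should essentially consist of checking that the hypotheses of those theorems are met.

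First, I would set $D := P_1 + \cdots + P_n$ and invoke Theorem 2.2.8 of \cite{stichtenoth2009algebraic}, which expresses $\Ccal(\Pcal,G)^{\perp}$ as the differential AG code $C_{\Omega}(\Pcal,G)$, i.e.\ the image of the evaluation map $\eta \mapsto (\eta_{P_1}(1),\dots,\eta_{P_n}(1))$ on the space of Weil differentials $\eta$ satisfying $(\eta) \geq G - D$. This is the abstract object whose parameters are controlled in Stichtenoth.

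Second, the lower bound on the minimum distance follows immediately from Theorem 2.2.7 of \cite{stichtenoth2009algebraic}, which asserts that the minimum distance of $C_{\Omega}(\Pcal,G)$ is at least $\deg(G) - (2g(\Vcal)-2)$. Under the assumption $2g(\Vcal)-2 < \deg(G)$, this bound is strictly positive, so $\Ccal(\Pcal,G)^{\perp}$ contains a nonzero codeword and in particular has dimension at least $1$.

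Third, to get the dimension bound more quantitatively (and to double-check the nontriviality), I would combine Lemma \ref{lem:AGcode_lem1} with the elementary identity $\dim \Ccal(\Pcal,G) + \dim \Ccal(\Pcal,G)^{\perp} = n$. Under the hypotheses $2g(\Vcal)-2 < \deg(G) < n$, Lemma \ref{lem:AGcode_lem1} gives $\dim \Ccal(\Pcal,G) = \deg(G) + 1 - g(\Vcal)$, whence
\[
\dim \Ccal(\Pcal,G)^{\perp} = n - \deg(G) + g(\Vcal) - 1 \; \geq \; 2 + g(\Vcal) - 1 \; = \; g(\Vcal) + 1 \; \geq \; 1,
\]
where the first inequality uses $\deg(G) \leq n-2$. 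I do not anticipate any substantive obstacle: the only subtlety is verifying that $2g(\Vcal)-2 < \deg(G) \leq n-2$ is exactly the range needed to apply both Stichtenoth theorems simultaneously, which it is.
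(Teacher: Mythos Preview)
Your proposal is correct and follows exactly the route the paper indicates: the lemma is stated without proof in the paper, merely as a corollary of Theorems~2.2.7 and~2.2.8 in \cite{stichtenoth2009algebraic}, and your argument is precisely the unpacking of those two theorems (identify $\Ccal(\Pcal,G)^{\perp}=C_{\Omega}(\Pcal,G)$ via 2.2.8, then read off the distance and dimension from 2.2.7). Your third step, computing $\dim \Ccal(\Pcal,G)^{\perp}=n-\deg(G)+g(\Vcal)-1\geq g(\Vcal)+1$ via Lemma~\ref{lem:AGcode_lem1}, is in fact the same dimension formula that Theorem~2.2.7 gives directly, so nothing is missing.
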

In the following, we briefly introduce three families of algebraic curves that will be employed in the rest of the paper: rational curves, hyperelliptic curves, and Hermitian curves.

A rational curve $\Xcal=\Pbb^1$ over $\F_q$ is the standard smooth projective genus-0 algebraic curve; its function field is just the rational function field $\F_q(x)$. In this paper, we always use $\Xcal$ to denote a rational curve, for clarity of comparison.
 
A hyperelliptic curve $\mathcal{Y}$ of genus $g \geq 1$ over $\mathbb{F}_q$ is an algebraic curve over $\F_q$ defined by 
\[
y^2 + h(x)y = f(x),
\]
where $f$ is monic of $\deg(f) = 2g + 1$\footnote{Indeed, the degree of $f(x)$ can be of degree $2g+2$, which is omitted in this paper.} and $\deg(h)\leq g$, for some $g\geq 1$.
\begin{itemize}
    \item When $g=1$, it is already a non-singular curve, with a single point at infinity $P_{\infty}=[0,1,0]$. In this case, we also call the hyperelliptic curve $\Ycal$ an elliptic curve.
    \item When $g\geq 2$, (before normalization) it is singular only at the point $[0,1,0]$. After normalization, it also has a single point at infinity. We also denote it by $P_{\infty}$.
\end{itemize} 

The Hermitian curve $\mathcal{H}_q$ is defined over $\F_{q^2}$ by
\begin{equation}\label{eq:Hermitian}
x^{q+1}=y^q+y.
\end{equation}
It is a non-singular curve, whose genus is $g(\mathcal{H}_q)=q(q-1)/2$. It has totally $q^3+1$ $\F_{q^2}$-rational points,
including a unique point at infinity $P_{\infty}=[0:1:0]$. Thus, $\mathcal{H}_q$ is an $\F_{q^2}$-maximum curve.  

\subsection{Background and Known Constructions of $X$-Secure $T$-Private Information Retrieval Schemes}
\label{sec:2.2}
In this section, we briefly introduce the $X$-secure $T$-private information retrieval problem, which is the central topic of this paper. For an information-theoretic formulation of the $X$-secure and $T$-private information retrieval problem, see~\cite[Section~\Rmnum{2}]{jia2019cross}.

The XSTPIR problem can be briefly described as follows. 
Consider data consisting of $K$ independent messages $s_1, \dots, s_K \in \mathbb{F}_q^L$, which are stored at $N$ distributed servers. 
\begin{definition}[$X$-security]
    We say that the (storage) scheme is $X$-secure ($0\leq X< N$) if any $X$ colluding servers of all these $N$ servers learn nothing about the messages $s_1, \dots, s_K$.
\end{definition} 
Continuing with the aforementioned settings, we introduce the problem of private information retrieval. In private information retrieval, the user privately generates a desired message index $\theta\in [K]$. The user wants to retrieve $s_{\theta}$ privately.
To do so, the user generates $N$ queries $Q^{[\theta]}_1, \dots, Q^{[\theta]}_N$ and sends $Q^{[\theta]}_i$ to the $i$-th server for each $i\in [N]$. After receiving query $Q^{[\theta]}_i$, the $i$-th server returns an answer $A_i^{[\theta]}$ to the user. 
The user must be able to precisely recover the desired message $s_\theta$ from all the answers $A_1^{[\theta]}, \ldots, A_N^{[\theta]}$. 
\begin{definition}[$T$-privacy]
    We say that the (private) information retrieval scheme is $T$-private ($1\leq T\leq  N$) if any $T$ colluding servers of all these $N$ servers learn nothing about the index $\theta$ of the desired file.
\end{definition}
\begin{definition}[$X$-secure and $T$-private information retrieval scheme, and the PIR rate]
      If a storage scheme is $X$-secure and is equipped with a $T$-private information retrieval scheme, we collectively refer to them as an $X$-secure and $T$-private information retrieval scheme. 
      Equivalently, we may also say that a PIR scheme is $X$-secure and $T$-private.
      
      The PIR rate of an XSTPIR scheme is defined to be the rate of its underlying PIR scheme, which is the ratio of the file size to the total download cost. For brevity, we shall use 'rate' to refer to the PIR rate in the subsequent sections. 
\end{definition}

In \cite{makkonen2024algebraic}, the following general framework for producing XSTPIR schemes is proposed.
\begin{lemma}[{\cite[Theorem 2.1]{makkonen2024algebraic}}]\label{lem:Jia_Makk_framework}
    Let $A$ be an $\F_q$-algebra, and let $V^{\enco}_{\ell},V^{\secu}_{\ell}, V^{\query}_{\ell},V^{\priv}_{\ell}$ be finite-dimensional $\F_q$-subspaces of $A$ for any $\ell\in[L]$. Moreover, let $\varphi:A \to \F_q^N$ be an $\F_q$-algebra homomorphism such that \footnote{According to the detailed derivation in \cite[Section \Rmnum{2}]{makkonen2024algebraic}, it appears that the condition ``$\dim(V^{\enco}_{\ell})=1$ for all $\ell\in{L}$'' is also required. However, this is not a concern as all existing constructions based on this lemma, as well as our own, satisfy this condition.}:
    \begin{itemize}
        \item[1)] {$V^{\query}_{\ell} = \Span\{ h_{\ell}\}$, where $h_{\ell}$ is a unit in $A,$}
        \item[2)] {$\dim(\sum_{\ell=1}^L V^{\info}_{\ell})= \sum_{\ell=1}^L\dim(V^{\info}_{\ell}),  $ where $V_{\ell}^{\info}:=V_{\ell}^{\enco}\cdot V_{\ell}^{\query}=\Span\{f\cdot g:\; f\in V_{\ell}^{\enco},g\in V_{\ell}^{\query}\},$}
        \item[3)] {$V^{\info} \cap V^{\noise} =\{0\},$ where $V^{\info}:=\bigoplus_{\ell=1}^LV_{\ell}^{\info}$ and $V^{\noise}:=\sum_{\ell=1}^L(V_{\ell}^{\enco}{\cdot} V_{\ell}^{\priv}+V_{\ell}^{\secu}{\cdot} V_{\ell}^{\query}+V_{\ell}^{\secu}{\cdot} V_{\ell}^{\priv}),$}
        \item[4)] {$\varphi$ is injective over $V^{\info} \oplus V^{\noise} .$}
    \end{itemize}   
 Then there exists an $X$-secure and $T$-private information retrieval scheme with rate $\Rcal=\frac{L}{N}$, for any $X,T$ satisfying $$X\leq \min\{d (\varphi(V_{\ell}^{\secu})^\perp):\ell \in [L]\}-1,\; T\leq \min\{d (\varphi(V_{\ell}^{\priv})^\perp):\;\ell \in [L]\}-1.$$
\end{lemma}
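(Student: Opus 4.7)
The plan is to construct explicit storage, query, and answer protocols driven by the subspaces $V^{\enco}_\ell, V^{\secu}_\ell, V^{\query}_\ell, V^{\priv}_\ell$, and then to verify correctness, security, privacy, and the rate one at a time using conditions (1)--(4) and the dual-distance hypotheses.

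\textbf{Construction.} For each $\ell\in[L]$, fix generators $V^{\enco}_\ell=\Span\{e_\ell\}$ (using the implicit $\dim V^{\enco}_\ell=1$) and $V^{\query}_\ell=\Span\{h_\ell\}$, where $h_\ell$ is a unit of $A$. To store the messages $s_k=(s_{k,1},\dots,s_{k,L})$, server $i$ holds the symbols $\varphi(U_{k,\ell})_i$ for $k\in[K]$, $\ell\in[L]$, where $U_{k,\ell}:=s_{k,\ell}e_\ell+w_{k,\ell}$ and each $w_{k,\ell}$ is drawn independently and uniformly from $V^{\secu}_\ell$. To retrieve message $\theta$, the user picks independent uniform $v_{k,\ell}\in V^{\priv}_\ell$, forms $Q_{k,\ell}:=\delta_{k,\theta}h_\ell+v_{k,\ell}$, and sends $\bigl(\varphi(Q_{k,\ell})_i\bigr)_{k,\ell}$ to server $i$. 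Server $i$ replies with the single symbol $A_i^{[\theta]}:=\sum_{k,\ell}\varphi(U_{k,\ell})_i\cdot \varphi(Q_{k,\ell})_i$.

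\textbf{Correctness.} Since $\varphi$ is an $\F_q$-algebra homomorphism, the vector $(A_i^{[\theta]})_{i\in[N]}$ equals $\varphi(Y)$ with $Y:=\sum_{k,\ell}U_{k,\ell}Q_{k,\ell}$. Expanding each product and using $\delta_{k,\theta}$ to collapse the sum over $k$, we get $Y=Y_{\info}+Y_{\noise}$ where $Y_{\info}=\sum_\ell s_{\theta,\ell}\,e_\ell h_\ell\in V^{\info}$ and $Y_{\noise}\in V^{\noise}$ (every residual cross-term falls into one of the three summands that define $V^{\noise}$). Condition~(4) guarantees that $Y$ is uniquely recovered from $\varphi(Y)$; condition~(3) allows its unambiguous splitting into $Y_{\info}$ and $Y_{\noise}$; and condition~(2), together with $\dim V^{\info}_\ell=1$ (since $e_\ell h_\ell\neq 0$ because $h_\ell$ is a unit), means that $\{e_\ell h_\ell\}_\ell$ is linearly independent, so the coefficients $s_{\theta,\ell}$ are read off one at a time.

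\textbf{Security, privacy, and rate.} For any set $S$ of $X$ colluding servers, the hypothesis $d(\varphi(V^{\secu}_\ell)^\perp)\geq X+1$ forces the coordinate projection $\varphi(V^{\secu}_\ell)\to\F_q^S$ to be surjective (standard: otherwise a nonzero functional vanishing on the projection would yield a dual codeword of weight at most $X$). Consequently $\varphi(w_{k,\ell})|_S$ is uniform on $\F_q^S$, and the $S$-servers' view of the stored data is statistically independent of the messages, giving $X$-security. The analogous argument applied to $V^{\priv}_\ell$ and any $T$ colluding servers yields $T$-privacy. Finally, the user recovers $L$ field symbols from a total of $N$ downloaded symbols, so the rate is $\mathcal{R}=L/N$.

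The main obstacle I foresee is the bookkeeping inside the correctness step: one must verify that every term produced by $U_{k,\ell}Q_{k,\ell}$ outside the distinguished piece $\delta_{k,\theta}s_{k,\ell}e_\ell h_\ell$ lies in one of the three declared noise slots $V^{\enco}_\ell\cdot V^{\priv}_\ell$, $V^{\secu}_\ell\cdot V^{\query}_\ell$, or $V^{\secu}_\ell\cdot V^{\priv}_\ell$. Once this is checked, conditions~(2)--(4) have been engineered precisely so that $\varphi(Y)$ can be inverted and the information component can be peeled off without ambiguity.
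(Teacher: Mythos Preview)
Your proposal is correct and follows the standard construction from \cite{makkonen2024algebraic}: encode each symbol with additive noise from $V^{\secu}_\ell$, mask queries with additive noise from $V^{\priv}_\ell$, take componentwise products as answers, and then use conditions (2)--(4) to peel off the information term while the dual-distance hypothesis provides the one-time-pad argument for security and privacy. Note, however, that the present paper does not give its own proof of this lemma; it is quoted verbatim from \cite[Theorem~2.1]{makkonen2024algebraic} and used as a black box, so there is no paper proof to compare against beyond observing that your argument matches the one in the cited source.
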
  
In the following, we recall the XSTPIR schemes derived based on the above general framework Lemma~\ref{lem:Jia_Makk_framework}, including those based on the rational curves (Theorem~\ref{thm:Jia_makk_rational}), the elliptic curves (Theorem~\ref{thm:makk_elliptic}), the hyperelliptic curves (Theorem~\ref{thm:makk_hyperelliptic}), and the Hermitian curves (Theorem~\ref{thm:GGMT_Hermitian}).

\begin{theorem}[{\cite[Section \Rmnum{6}]{jia2019cross} and \cite[Section \Rmnum{4}.A]{makkonen2024algebraic}}\footnote{In \cite[Section \Rmnum{6}]{jia2019cross}, this scheme was first proposed using Reed-Solomon code. It was subsequently unified under the framework of Lemma~\ref{lem:Jia_Makk_framework} and re-derived using Algebraic-Geometry codes over a rational curve.}]\label{thm:Jia_makk_rational}
    Let $q$ be a prime power. Let $L,X,T$ be three positive integers. If $q\geq 2L+X+T$, then there exists an $X$-secure and $T$-private information retrieval scheme over $\F_q$, with rate 
    $$
    \Rcal=\frac{L}{N}=1-\frac{X+T}{N},
    $$
    where $N=L+X+T.$
\end{theorem}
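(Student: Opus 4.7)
The plan is to apply the general framework of Lemma~\ref{lem:Jia_Makk_framework} with the rational curve $\Xcal=\Pbb^1$ over $\F_q$. Since $q\geq 2L+X+T$, I can choose $2L+X+T$ distinct elements $\alpha_1,\ldots,\alpha_L,\beta_1,\ldots,\beta_N\in\F_q$ with $N=L+X+T$, which correspond to distinct rational places $P_{\alpha_1},\ldots,P_{\alpha_L},P_{\beta_1},\ldots,P_{\beta_N}$, all different from the place at infinity $P_{\infty}$. The $\F_q$-algebra $A$ is taken to be any $\F_q$-subalgebra of $\F_q(x)$ large enough to contain all functions appearing below (for instance, the intersection of the local rings at the $P_{\beta_n}$'s), and $\varphi\colon A\to\F_q^N$ is the joint evaluation $f\mapsto(f(\beta_1),\ldots,f(\beta_N))$.

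For each $\ell\in[L]$, I would set
\begin{align*}
V^{\enco}_{\ell}&=\Span\{1\}, \\
V^{\query}_{\ell}&=\Span\{1/(x-\alpha_\ell)\}, \\
V^{\secu}_{\ell}&=\Span\{(x-\alpha_\ell)^i:1\leq i\leq X\}, \\
V^{\priv}_{\ell}&=\Span\{x^j:0\leq j\leq T-1\}.
\end{align*}
Condition~(1) is immediate since $h_\ell=1/(x-\alpha_\ell)$ is a unit in $\F_q(x)$, and condition~(2) follows from the uniqueness of partial fractions at the distinct simple poles $\alpha_1,\ldots,\alpha_L$. The key design choice is that every element of $V^{\secu}_\ell$ vanishes at $\alpha_\ell$: this makes the product
\begin{align*}
V^{\secu}_\ell\cdot V^{\query}_\ell=\Span\{(x-\alpha_\ell)^{i-1}:1\leq i\leq X\}=\Span\{1,x,\ldots,x^{X-1}\}
\end{align*}
polynomial rather than pole-carrying at $\alpha_\ell$. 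Combined with $V^{\enco}_\ell\cdot V^{\priv}_\ell=\Span\{1,\ldots,x^{T-1}\}$ and $V^{\secu}_\ell\cdot V^{\priv}_\ell\subseteq\Span\{1,\ldots,x^{X+T-1}\}$, this yields $V^{\noise}\subseteq\Span\{1,x,\ldots,x^{X+T-1}\}$, whose intersection with $V^{\info}=\bigoplus_{\ell=1}^L\Span\{1/(x-\alpha_\ell)\}$ is trivially $\{0\}$, giving condition~(3).

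Condition~(4) and the final security/privacy estimates fall out of the AG-code lemmas on the genus-$0$ curve $\Xcal$. I would note that $V^{\info}+V^{\noise}\subseteq\Lcal(D)$ for $D=(X+T-1)P_{\infty}+\sum_{\ell=1}^{L}P_{\alpha_\ell}$, a divisor of degree $N-1$ whose support is disjoint from $\Pcal=\{P_{\beta_1},\ldots,P_{\beta_N}\}$; since $2g(\Xcal)-2=-2<N-1<N$, Lemma~\ref{lem:AGcode_lem1} guarantees that $\varphi$ is injective on $\Lcal(D)$, hence on $V^{\info}\oplus V^{\noise}$. For $X$-security, I would identify $V^{\secu}_\ell=\Lcal(XP_{\infty}-P_{\alpha_\ell})$ (a divisor of degree $X-1$) and apply Lemma~\ref{lem:AGcode_lem2} to get $d(\varphi(V^{\secu}_\ell)^{\perp})\geq X+1$; analogously, $V^{\priv}_\ell=\Lcal((T-1)P_{\infty})$ yields $d(\varphi(V^{\priv}_\ell)^{\perp})\geq T+1$. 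Lemma~\ref{lem:Jia_Makk_framework} then delivers the claimed $X$-secure and $T$-private scheme with rate $L/N=1-(X+T)/N$. The step I expect to be the genuine obstacle is the design choice in the second paragraph: the naive choice $V^{\secu}_\ell=\Span\{1,x,\ldots,x^{X-1}\}$, of the same dimension $X$, would break condition~(3), since its product with $V^{\query}_\ell$ would reintroduce the information-carrying pole $1/(x-\alpha_\ell)$ already occupied by $V^{\info}_\ell$.
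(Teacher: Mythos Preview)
Your proposal is correct and follows essentially the same approach as the construction in \cite[Section~\Rmnum{4}.B]{makkonen2024algebraic} that the paper cites: your choice $V^{\secu}_\ell=\Span\{(x-\alpha_\ell)^i:1\leq i\leq X\}$ coincides exactly with the paper's template $V^{\secu}_\ell=\frac{1}{h_\ell}\cdot\Lcal((X-1)P_\infty)$ specialized to $h_\ell=1/(x-\alpha_\ell)$ (both describe the $X$-dimensional space of polynomials of degree~$\leq X$ vanishing at~$\alpha_\ell$), and the remaining verifications via Lemmas~\ref{lem:AGcode_lem1} and~\ref{lem:AGcode_lem2} proceed identically. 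One small wording fix: in condition~(1) you should say $h_\ell$ is a unit in the holomorphy ring~$A$ (which holds because $\alpha_\ell\notin\{\beta_1,\ldots,\beta_N\}$), not merely in the field~$\F_q(x)$.
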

\begin{theorem}[{\cite[Section \Rmnum{5}]{makkonen2024algebraic}}] \label{thm:makk_elliptic}
    Let $\mathcal{E}$ be an elliptic curve defined over $\F_q$, $p=\Char(\F_q)>3$, by the affine equation $y^2-f(x)=0$ where $\deg(f)=3$, and let $X$ and $T$ be some fixed security and privacy parameters. Let $L$ be an odd integer, $J=\frac{L+1}{2}$, and $\{P_1,\bar {P_1},\dots,P_{J}, \bar {P_J}\}$ be a set of $2J$ $\F_q$-rational points of $\mathcal{E}$ distinct from $P_\infty$ and not lying on the line $y=0,$ where  $P_j=(\alpha_j,\beta_j)$  and $\bar P_j=(\alpha_j,-\beta_j),$  $j\in [J].$
    If there exist $N+1=L+X+T+9$ $\F_q$-rational points of $\mathcal{E}$ distinct from $P_{\infty},P_1,\ldots,P_J, \bar{P}_1,\ldots, \bar{P}_J$ and not lying on $y=0$, 
    then there exists an $X$-secure and $T$-private PIR scheme with rate $$\Rcal^{\mathcal{E}}=\frac{L}{N}=1-\frac{X+T+8}{N}.$$ 
\end{theorem}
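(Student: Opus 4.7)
My plan is to instantiate the general framework of Lemma~\ref{lem:Jia_Makk_framework}, taking $A$ to be a suitable localization of the coordinate ring $\F_q[\mathcal{E}]$ of the elliptic curve (equivalently, the subring of $\F_q(\mathcal{E})$ consisting of functions regular at all chosen evaluation points), and letting the $\F_q$-algebra homomorphism $\varphi:A\to \F_q^N$ be the evaluation map at $N=L+X+T+8$ distinct $\F_q$-rational points of $\mathcal{E}$ that are disjoint from $P_\infty$, the $P_j,\bar P_j$, and the locus $y=0$; the hypothesis that $N+1$ such additional points exist guarantees that such a point set can be chosen and, in fact, leaves one spare point to serve as a ``free'' pole location for the queries.

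First I would set $V^{\enco}_\ell=\Span_{\F_q}\{1\}$ for every $\ell\in[L]$, so condition~(2) reduces to asserting linear independence of the chosen query units $h_\ell$. The central design step is to construct $h_\ell\in A^\times$ implementing the cross-subspace alignment pattern through the hyperelliptic involution $(x,y)\mapsto(x,-y)$ on~$\mathcal{E}$. Concretely, I would choose each $h_\ell$ as a function whose principal divisor is supported on a carefully selected subset of $\{P_j,\bar P_j:j\in[J]\}$ together with a single auxiliary rational point playing the role of a pole, designed so that the values $h_\ell(P_j),h_\ell(\bar P_j)$ sit in a one-dimensional subspace of $\F_q^N$ after the involution-induced identifications. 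The use of both $P_j$ and $\bar P_j$ (which share $x$-coordinate but have opposite $y$-coordinate) lets me exploit the extra functions $y,xy,\ldots$ in Riemann–Roch spaces on~$\mathcal{E}$, which is precisely where the genus-1 gain over rational curves comes from.

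Next I would take the security and privacy subspaces $V^{\secu}_\ell=\Lcal(D^{\secu}_\ell)$ and $V^{\priv}_\ell=\Lcal(D^{\priv}_\ell)$ for divisors supported outside the evaluation places, whose degrees are set just large enough so that, by Lemma~\ref{lem:AGcode_lem2}, the dual-code minimum distances satisfy $d(\varphi(V^{\secu}_\ell)^\perp)\ge X+1$ and $d(\varphi(V^{\priv}_\ell)^\perp)\ge T+1$. With $g(\mathcal{E})=1$, the bound $d^\perp\ge \deg(D)-2g+2=\deg(D)$ forces $\deg(D^{\secu}_\ell)\ge X$ and $\deg(D^{\priv}_\ell)\ge T$, and a Riemann–Roch computation gives $\dim V^{\secu}_\ell=X$ and $\dim V^{\priv}_\ell=T$ (with a small additive overhead absorbed into the $+8$).

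The heart of the proof is verifying conditions~(2)--(4) of Lemma~\ref{lem:Jia_Makk_framework}. Condition~(2) follows from the linear independence of the $h_\ell$, which can be read off from their distinct divisors. For condition~(3), I would show that $V^{\info}\oplus V^{\noise}\subseteq \Lcal(G)$ for a single ambient divisor $G$ of degree $L+X+T+7$, then use the fact that, on an elliptic curve, all Riemann–Roch spaces of this degree have dimension exactly $\deg(G)$, together with a separation-of-supports argument that $V^{\info}$ and $V^{\noise}$ have disjoint ``leading terms'' with respect to the pole at the auxiliary point. This simultaneously gives $\dim(V^{\info}+V^{\noise})=L+X+T+8$. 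Finally, condition~(4) is immediate from Lemma~\ref{lem:AGcode_lem1} once $\deg(G)<N$ and $\deg(G)>2g-2=0$, both of which are ensured by $N=L+X+T+8$.

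I expect the main obstacle to be the cross-subspace alignment computation: explicitly exhibiting $h_\ell$ on~$\mathcal{E}$ that simultaneously (i) is a unit of $A$, (ii) realizes the CSA identity so that $V^{\info}\cap V^{\noise}=\{0\}$, and (iii) keeps the total pole degree of $V^{\noise}$ equal to $X+T+8$ rather than something larger. The overhead of $8$ (compared to $X+T$ in the rational-curve case of Theorem~\ref{thm:Jia_makk_rational}) should emerge transparently from this pole accounting: a Riemann–Roch count on the relevant divisor produces an $X+T+O(g)$ term, and the constant turns out to be $8$ after carefully tracking which degrees are incurred by $V^{\enco}\cdot V^{\priv}$, $V^{\secu}\cdot V^{\query}$, and $V^{\secu}\cdot V^{\priv}$ on a curve with a distinguished involution and the specific choice of point configuration.
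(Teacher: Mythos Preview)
The paper does not contain its own proof of this statement: Theorem~\ref{thm:makk_elliptic} is quoted verbatim from \cite[Section~\Rmnum{5}]{makkonen2024algebraic} as background, with no accompanying argument in this paper. So there is no ``paper's proof'' to compare against here.

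That said, your high-level plan---instantiate Lemma~\ref{lem:Jia_Makk_framework} with $A$ a holomorphy ring of $\F_q(\mathcal{E})$, $\varphi$ the evaluation map at $N$ rational points, $V^{\enco}_\ell=\Span\{1\}$, and $V^{\secu}_\ell,V^{\priv}_\ell$ Riemann--Roch spaces of degrees chosen via Lemma~\ref{lem:AGcode_lem2}---is exactly the template this paper uses in the proofs of Theorems~\ref{thm:our_XSTPIR_Rational} and~\ref{thm:our_XSTPIR_Hermitian}, and is the right framework. Two points where your sketch drifts from what the cited construction actually does:

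\begin{itemize}
\item Your reading of the extra ``$+1$'' point as a spare pole location for the queries is off. Compare the hyperelliptic generalization in Theorem~\ref{thm:makk_hyperelliptic}, which requires $N+g$ evaluation points for genus $g$: the surplus is there because one first evaluates on $\deg(D^{\full})+1$ points to ensure injectivity (Lemma~\ref{lem:AGcode_lem1}), then trims down to an information set of size $N=\ell(D^{\full})=\deg(D^{\full})+1-g$. For $g=1$ this is exactly one extra point, and it is discarded, not used as a pole. The query functions $h_\ell$ in the original construction have poles only among the $P_j,\bar P_j$ (via the factors $x-\alpha_j$), not at an auxiliary point.
\item Your degree count for $V^{\secu}_\ell$ is off by one: with $g=1$, Lemma~\ref{lem:AGcode_lem2} gives $d^\perp\ge \deg(D)$, so to reach $X+1$ you need $\deg(D^{\secu}_\ell)\ge X+1$, not $X$. (Compare \eqref{eq:202510091551} in the Hermitian proof, where the divisor is $(X+2g-1)P_\infty$, i.e.\ degree $X+1$ when $g=1$.)
\end{itemize}

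You correctly identify the genuine work as the explicit construction of the $h_\ell$; in the cited paper this is done via a Lagrange-type basis built from the $x-\alpha_j$ (each vanishing at both $P_j$ and $\bar P_j$), together with the extra function $y$, which is where the involution and the $2J=L+1$ points enter.
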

As observed in \cite{makkonen2024algebraic}, even if $\Rcal^\mathcal{X}>\Rcal^\Ecal$ for fixed $X, T, N$, the above constructions might have distinct requirements for different field sizes. This follows from the fact that elliptic curves allow for more rational points compared to curves of genus 0. This opens up the possibility of leveraging a larger number of servers and enlarging the rate of the genus-1 scheme when operating with fixed field size $q$, security parameter $X$, and privacy parameter $T$. Actually, in \cite[Fig.1]{makkonen2024algebraic}, Makkonen \textit{et al.} show that for fixed field size and sufficiently large $X,T$, the XSTPIR schemes based on elliptic curves offer higher rates than those based on rational curves. Recently, this comparison is quantified in \cite[Proposition 4.2]{ghiandoni2025agcodes}.

Extending the results of \cite{makkonen2024algebraic}, XSTPIR schemes based on hyperelliptic curves have been proposed in \cite{makkonen2024secretsharingsecureprivate} as follows.
\begin{theorem}[{\cite[Theorem VI.1]{makkonen2024secretsharingsecureprivate}}]\label{thm:makk_hyperelliptic}
    Let $\Ycal$  be a hyperelliptic curve of genus $g\geq 1$ defined over $\mathbb{F}_q$ by the affine equation $y^2 +H(x)y = F(x)$.
Let $L \geq g,$  $L \equiv  g$ (mod 2) and $J =
\frac{L+g}{2}.$  Let $\gamma_1,\dots, \gamma_J \in \F_q$ \footnote{$\gamma_1,\dots,\gamma_J$ need to be pairwise distinct, although not explicitly stated in \cite[Theorem VI.1]{makkonen2024secretsharingsecureprivate}, see \cite[Section \Rmnum{6}.A]{makkonen2024secretsharingsecureprivate}} be such that $F(\gamma_j) \neq 0$ and set $h =
\prod_{j\in[J]}(x-\gamma_j).$
Let $N = L + X + T + 6g + 2$ for some security and privacy parameters $X$ and $T.$ If $\mathcal{Y}$ has $N + g$ rational
points $P_1,\dots, P_{N+g}$ different from $P_\infty$, not lying on $y=0$, and such that $h(P_n)\neq 0,$ then there exists an $X$-secure and $T$-private PIR scheme
with rate $$R^{\mathcal{Y}} = 1 -\frac{X + T + 6g + 2}{N}.$$ 
\end{theorem}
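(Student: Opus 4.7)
The plan is to apply the general framework of Lemma~\ref{lem:Jia_Makk_framework} with the $\F_q$-algebra $A=\F_q(\Ycal)$, taking $\varphi$ to be the evaluation map $\Ev$ at $N$ of the $N+g$ given rational points. The construction mirrors the rational-curve scheme of Theorem~\ref{thm:Jia_makk_rational} and the elliptic-curve scheme of Theorem~\ref{thm:makk_elliptic}, but uses Riemann-Roch spaces supported mainly at $P_\infty$, and query functions built from the linear factors $(x-\gamma_j)$ together with the hyperelliptic coordinate $y$.

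First I would fix $V^{\enco}_\ell=\Span_{\F_q}\{1\}$ and $V^{\query}_\ell=\Span_{\F_q}\{h_\ell\}$ for $\ell\in[L]$. Because $J=(L+g)/2$ and $L\equiv g\pmod 2$, the $L$ query functions $h_\ell$ can be chosen as units in $A$ in two batches: $J$ of them of the form $1/\prod_{i\in S_\ell}(x-\gamma_i)$ for suitable subsets $S_\ell\subseteq[J]$, and the remaining $L-J=J-g$ of them obtained by adjoining the factor $y$. Since the pole order of $y$ at $P_\infty$ is $2g+1$ (odd), whereas pole orders arising from $1/(x-\gamma_j)$-combinations are all even at $P_\infty$, these two batches are cleanly distinguishable. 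Then I would set
\[
V^{\secu}_\ell = h_\ell \cdot \Lcal(D_\secu), \qquad V^{\priv}_\ell = h_\ell \cdot \Lcal(D_\priv),
\]
for divisors $D_\secu,D_\priv$ supported at $P_\infty$ whose degrees are the smallest values ensuring that Lemma~\ref{lem:AGcode_lem2} forces the dual minimum distances of the associated evaluation codes to be $\geq X+1$ and $\geq T+1$, respectively.

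Next I would verify the four conditions of Lemma~\ref{lem:Jia_Makk_framework}. Condition~(1) is immediate. Condition~(2) reduces to the $\F_q$-linear independence of the $L$ functions $h_\ell$ in $A$, which follows from their distinct pole loci among the $\gamma_j$ together with the $y$/no-$y$ parity distinction at $P_\infty$. Condition~(3), the vanishing of $V^\info\cap V^\noise$, is the heart of the argument: one bounds $V^\info \subseteq \Lcal(D^\info)$ and $V^\noise \subseteq \Lcal(D^\noise)$ for effective divisors whose pole-order profiles at $P_\infty$ and supports at the $\gamma_j$ are disjoint, so any function in the intersection would have incompatible principal divisor and must vanish. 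Condition~(4) follows from Lemma~\ref{lem:AGcode_lem1} once $\deg(D^\info + D^\noise) < N$ is verified; this is where the overhead $6g+2$ enters, as the three cross products $V^{\enco}\cdot V^{\priv}$, $V^{\secu}\cdot V^{\query}$, and $V^{\secu}\cdot V^{\priv}$ composing $V^\noise$ each contribute an extra multiple of $g$ to $\deg D^\noise$, together with constant corrections summing to $6g+2$.

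The rate $R^\Ycal = L/N = 1 - (X+T+6g+2)/N$, together with $X$-security and $T$-privacy, then follows directly from the framework. The main obstacle is the precise combinatorial bookkeeping of divisors and query functions: ensuring conditions~(2) and~(3) simultaneously with the tight overhead $6g+2$ requires a careful interplay between the hyperelliptic involution (which splits the $(x-\gamma_j)$-based pairs), the parity constraint $L\equiv g\pmod 2$, and the pole structure at $P_\infty$. The extra $g$ rational points in the hypothesis (i.e.\ $N+g$ points rather than $N$) account for exactly this genus-induced slack needed to keep $V^\info$ and $V^\noise$ in disjoint pole strata.
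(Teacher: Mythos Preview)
The paper does not prove this theorem: it is quoted from \cite{makkonen2024secretsharingsecureprivate} as background (see the sentence introducing Theorems~\ref{thm:Jia_makk_rational}--\ref{thm:GGMT_Hermitian}), so there is no in-paper proof to compare against. That said, your proposal contains two concrete errors that would prevent the argument from going through even in outline.

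First, your choices $V^{\secu}_\ell=h_\ell\cdot\Lcal(D_\secu)$ and $V^{\priv}_\ell=h_\ell\cdot\Lcal(D_\priv)$ break condition~(3). Since each $h_\ell$ has poles among the $\gamma_j$-places, you get $V^{\enco}_\ell\cdot V^{\priv}_\ell=h_\ell\cdot\Lcal(D_\priv)\ni h_\ell$ and $V^{\secu}_\ell\cdot V^{\query}_\ell=h_\ell^2\cdot\Lcal(D_\secu)$, so $V^{\noise}$ shares the pole locus of $V^{\info}=\sum_\ell\Span\{h_\ell\}$ and the intersection is nonzero (indeed $h_\ell$ itself lies in both). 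The pattern that works, as in the proofs of Theorems~\ref{thm:our_XSTPIR_Rational} and~\ref{thm:our_XSTPIR_Hermitian} in this paper and in \cite{makkonen2024secretsharingsecureprivate}, is $V^{\secu}_\ell=\tfrac{1}{h_\ell}\cdot\Lcal((X+2g-1)P_\infty)$ and $V^{\priv}_\ell=\Lcal((T+2g-1)P_\infty)$ with \emph{no} $h_\ell$-factor; then all three noise summands have poles only away from the $\gamma_j$, while $V^{\info}$ has poles precisely there.

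Second, your explanation of the extra $g$ rational points is off. They are not there to ``keep $V^{\info}$ and $V^{\noise}$ in disjoint pole strata''; that separation is purely divisorial and costs no points. The hypothesis demands $N+g$ evaluation points because one first needs $\#\overline{\Pcal}>\deg(D^{\full})$ for the evaluation map on $\Lcal(D^{\full})$ to be injective (Lemma~\ref{lem:AGcode_lem1}), and then passes to an information set $\Pcal\subseteq\overline{\Pcal}$ of size $\ell(D^{\full})=\deg(D^{\full})+1-g=N$. The gap of $g$ between $\deg(D^{\full})+1$ and $\ell(D^{\full})$ is exactly the Riemann--Roch defect, handled in this paper in the same way in the proof of Theorem~\ref{thm:our_XSTPIR_Hermitian}.
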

Recently, Ghiandoni \textit{et al.} \cite{ghiandoni2025agcodes} proposed the following XSTPIR schemes based on Hermitian curves. Compared with the above XSTPIR schemes based on rational curves, elliptic curves, and hyperelliptic curves, this scheme has a higher maximum rate especially when $X+T$ is large (see \cite[Section \Rmnum{4}]{ghiandoni2025agcodes} for details).
\begin{theorem}[{\cite[Theorem 3.3]{ghiandoni2025agcodes}}]\label{thm:GGMT_Hermitian}
    Let $\mathcal{H}_q$ be the Hermitian curve of genus $g(\mathcal{H}_ q)=\frac{q(q-1)}{2}.$ Let $g(\mathcal{H}_ q)\leq L\leq q^3-g(\mathcal{H}_ q),$ with $L+g(\mathcal{H}_ q)\equiv_q 0$ and set $m=\frac{L+g(\mathcal{H}_ q)}{q}.$ Let $P_{i,z}= (\alpha_i, \beta_{i,z})$, $i\in [m]$, $z\in [q]$, be $mq$ affine $\mathbb{F}_{q^2}$-rational points of $\mathcal{H}_q$ with $\alpha_i\neq 0$. If there exist $L+X+T+\frac{7q^2-3q-6}{2}+1$ $\F_{q^2}$-rational points of $\mathcal{H}_q$ distinct from $P_{\infty}, P_0$ and $P_{i,z}$, $i\in [m]$, $z\in [q]$, then there exists an $X$-secure and $T$-private PIR scheme with rate 
    \begin{equation}\label{eq:202510211415}
        \Rcal^{\Hcal_q} = \frac{L}{N}=1- \dfrac{X+T+3q^2-q-2}{N},
    \end{equation}
where $N=L+X+T+3q^2-q-2,$ and $X$ and $T$ are some fixed  security and privacy parameters.
\end{theorem}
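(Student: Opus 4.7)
The plan is to instantiate Lemma~\ref{lem:Jia_Makk_framework} with the algebra $A=\F_{q^2}(\mathcal{H}_q)$, the function field of the Hermitian curve, and with $\varphi$ the evaluation map at the $N$ rational points of $\mathcal{H}_q$ distinct from $P_\infty,P_0$ and the $mq$ prescribed points $\{P_{i,z}\}$. The hypothesis supplies $N+g(\mathcal{H}_q)$ such points; one checks $N+g(\mathcal{H}_q)=L+X+T+(7q^2-3q-6)/2+1$, and the extra $g(\mathcal{H}_q)$ slots are the standard genus-correction slack required when invoking Lemma~\ref{lem:AGcode_lem1} to control the dimension of the ambient Riemann-Roch space.

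First I would build the family of query generators $\{h_\ell\}_{\ell\in[L]}$, exploiting the fact that on $\mathcal{H}_q$ the $q$ points $P_{i,1},\dots,P_{i,q}$ lying above $x=\alpha_i$ are precisely the zeros of the polynomial $y^q+y-\alpha_i^{q+1}$. Each $h_\ell$ is taken as $1/u_\ell$, where $u_\ell$ is a function whose zero divisor is a prescribed subset of the $P_{i,z}$ and whose pole divisor is concentrated at $P_\infty$; by construction each $h_\ell$ is a unit at every evaluation point, settling condition~(1) of Lemma~\ref{lem:Jia_Makk_framework}. The encoding, secrecy and privacy subspaces $V_\ell^{\enco}$, $V_\ell^{\secu}$, $V_\ell^{\priv}$ would be taken as Riemann-Roch spaces $\Lcal(D_\ell^{\enco}), \Lcal(D_\ell^{\secu}), \Lcal(D_\ell^{\priv})$ with supports concentrated on $P_\infty$ and $P_0$, and degrees calibrated so that $\dim V_\ell^{\enco}=1$, $\dim V_\ell^{\secu}=X$, $\dim V_\ell^{\priv}=T$, mirroring the pattern of Theorems~\ref{thm:Jia_makk_rational}--\ref{thm:makk_hyperelliptic}.

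Condition~(2) reduces to the linear independence over $\F_{q^2}$ of the $h_\ell$, which is transparent once one inspects their (distinct) zero divisors on $\{P_{i,z}\}$. Condition~(3), $V^{\info}\cap V^{\noise}=\{0\}$, would follow by embedding both sides in a common Riemann-Roch space $\Lcal(D)$ and observing that elements of $V^{\info}$ carry forced zeros at the $P_{i,z}$ that elements of $V^{\noise}$ do not. Condition~(4), injectivity of $\varphi$ on $V^{\info}\oplus V^{\noise}$, follows from Lemma~\ref{lem:AGcode_lem1} once $\deg(D)<N$. The security and privacy bounds $X\le d(\varphi(V_\ell^{\secu})^\perp)-1$ and $T\le d(\varphi(V_\ell^{\priv})^\perp)-1$ are then read off from Lemma~\ref{lem:AGcode_lem2}, completing the application of the framework.

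The main obstacle will be the joint calibration of all the divisor degrees so that the total degree of $D$ is exactly $N-1$ with the overhead $N-L=X+T+3q^2-q-2$. Of this overhead, $2g(\mathcal{H}_q)=q^2-q$ is the unavoidable genus correction, while the remaining $2q^2-2$ must be distributed across the three cross-product noise spaces $V_\ell^{\enco}\cdot V_\ell^{\priv}$, $V_\ell^{\secu}\cdot V_\ell^{\query}$, and $V_\ell^{\secu}\cdot V_\ell^{\priv}$. Making this accounting \emph{tight}, by exploiting the $q$-fold structure of rational points above each $\alpha_i$ in the same spirit that the hyperelliptic involution was used in Theorem~\ref{thm:makk_hyperelliptic} but now with multiplicity $q$ instead of $2$, is what distinguishes the Hermitian construction and underpins its rate advantage when $X+T$ is large.
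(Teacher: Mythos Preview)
This theorem is quoted in the paper without proof (it is \cite[Theorem~3.3]{ghiandoni2025agcodes}), so there is no in-paper argument to compare against directly. Your outline nonetheless matches the structure the paper writes out in full for its own Hermitian variant, Theorem~\ref{thm:our_XSTPIR_Hermitian}: instantiate Lemma~\ref{lem:Jia_Makk_framework} with $V_\ell^{\enco}=\Span\{1\}$, $V_\ell^{\query}=\Span\{h_\ell\}$, $V_\ell^{\secu}=h_\ell^{-1}\Lcal((X+2g-1)P_\infty)$, $V_\ell^{\priv}=\Lcal((T+2g-1)P_\infty)$, verify (1)--(4) via Riemann--Roch inclusions, and take $\varphi$ to be evaluation at an information set of size $N$ inside a larger set of $N+g(\Hcal_q)$ points so that Lemma~\ref{lem:AGcode_lem1} applies. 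The paper's remark following that proof confirms that the only difference from \cite{ghiandoni2025agcodes} is the choice of the $h_\ell$, yielding there $\deg(D^{\full})=L+X+T+\tfrac{7q^2-3q-6}{2}$, consistent with your point count.

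One slip worth flagging: in your sketch of condition~(3), it is the \emph{poles}, not the zeros, of the elements of $V^{\info}$ at the places corresponding to the $P_{i,z}$ (inherited from $h_\ell=1/u_\ell$) that separate $V^{\info}$ from $V^{\noise}$, whose poles are confined to $P_\infty$ and the places over $x=0$; compare the argument around \eqref{eq:202510132105}--\eqref{eq:202510132335}.
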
 
\subsection{Maximum Rates of Existing XSTPIR Schemes for Fixed Field Size and $X$, $T$}
In this subsection, we recall and present some results on the maximum rates of those existing XSTPIR schemes introduced in the previous subsection.
\begin{proposition}[{\cite[Proposition 4.1]{ghiandoni2025agcodes}}]\label{prop:Jia_makk_rational_maximum}
    The maximum rate of the XSTPIR scheme introduced in Theorem~\ref{thm:Jia_makk_rational} is 
    \begin{equation}\label{eq:jia_makk_rational_maximum}
    \Rcal_{\max}^\mathcal{X}(q,X,T)=\frac{\floor{\frac{q-X-T}{2}}}{\floor{\frac{q-X-T}{2}}+X+T},
    \end{equation}
    under the condition that 
    \begin{align}\label{eq:jia_makk_rational_condition}
        \floor{\frac{q-X-T}{2}}\geq 1.
    \end{align} 
 \end{proposition}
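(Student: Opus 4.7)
The plan is to observe that the maximum rate amounts to a one-variable optimization in $L$, subject to the feasibility constraint of Theorem~\ref{thm:Jia_makk_rational}. Fix the field size $q$ and the security and privacy parameters $X, T$. By Theorem~\ref{thm:Jia_makk_rational}, every admissible choice of $L$ produces a scheme of rate
\[
\mathcal{R}(L) \;=\; \frac{L}{L+X+T} \;=\; 1 - \frac{X+T}{L+X+T},
\]
which is manifestly a strictly increasing function of the positive integer $L$. Hence the maximum rate (for fixed $q, X, T$) is achieved at the largest feasible value of $L$.

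Next, I would translate the hypothesis $q \geq 2L + X + T$ of Theorem~\ref{thm:Jia_makk_rational} into a closed-form upper bound on $L$. Rearranging gives $L \leq \frac{q-X-T}{2}$, and since $L$ must be a positive integer, the largest admissible choice is
\[
L_{\max} \;=\; \left\lfloor \frac{q-X-T}{2} \right\rfloor.
\]
Substituting this value into $\mathcal{R}(L)$ yields exactly the expression in \eqref{eq:jia_makk_rational_maximum}. Finally, the existence of any feasible scheme requires at least one positive integer $L$ meeting the bound, which is equivalent to $L_{\max} \geq 1$, i.e., the condition~\eqref{eq:jia_makk_rational_condition}. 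Conversely, whenever \eqref{eq:jia_makk_rational_condition} holds, the choice $L = L_{\max}$ is a valid input to Theorem~\ref{thm:Jia_makk_rational} and delivers a scheme attaining the claimed rate.

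There is no serious obstacle here: the statement reduces to a monotonicity argument followed by an integer ceiling/floor manipulation. The only mild point requiring care is to verify that $\mathcal{R}(L)$ is indeed strictly increasing (so that pushing $L$ to $L_{\max}$ is optimal and no smaller $L$ could accidentally produce a higher rate), and to check that the implicit assumption $L \geq 1$ is captured precisely by \eqref{eq:jia_makk_rational_condition}. Both are immediate from the explicit formula for $\mathcal{R}(L)$ and the definition of the floor function, so the proof is essentially a two-line optimization argument.
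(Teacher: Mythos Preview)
Your proposal is correct. The paper does not supply its own proof of this proposition (it is quoted from \cite{ghiandoni2025agcodes}), but your argument is exactly the approach the paper uses for the analogous results Corollary~\ref{cor:Our_rational_maximum} and Corollary~\ref{cor:Our_Hermitian_maximum}: observe that the rate is increasing in $L$, extract the largest admissible $L$ from the field-size constraint via a floor, and substitute.
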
 
 
\begin{proposition}[{\cite[Proposition 4.1]{ghiandoni2025agcodes}}]\label{prop:makk_elliptic_maximum}
   The maximum rate of the XSTPIR scheme described in Theorem~\ref{thm:makk_elliptic} is 
   \begin{align}\label{eq:202510281416}
   \Rcal_{\max}^\mathcal{\Ecal}(q,X,T)=\frac{L_{\max}^\mathcal{E}}{L_{\max}^\mathcal{E}+X+T+8},
   \end{align}
   where $L_{\max}^\mathcal{E}=2 \cdot \left\lfloor \frac{\# \Ecal(\F_q)-(X+T+\gamma+9)}{4} \right\rfloor-1$, $\gamma:=\#\{z\in \F_q:\;f(z)=0\}$.
\end{proposition}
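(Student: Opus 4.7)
Since $\Rcal^{\Ecal}=L/(L+X+T+8)$ is strictly increasing in $L$, the plan is to maximize $L$ subject to all feasibility constraints of Theorem~\ref{thm:makk_elliptic} and then substitute the optimizer into the rate formula.

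The first step is a point budget count. Theorem~\ref{thm:makk_elliptic} forces the following pairwise distinct $\F_q$-rational points of $\Ecal$ to coexist: the infinity point $P_\infty$, the $2J=L+1$ paired affine points $\{P_j,\bar{P}_j\}_{j\in[J]}$, and an additional $N+1=L+X+T+9$ points, where every non-$P_\infty$ point must avoid the line $y=0$. The affine rational points lying on $y=0$ are exactly the zeros of $f$, so there are $\gamma$ of them, and none may appear in the three other groups. Hence the four disjoint sets consume at most $\#\Ecal(\F_q)$ points, giving
\[
1+\gamma+(L+1)+(L+X+T+9)\leq \#\Ecal(\F_q),
\]
which rearranges to $L\leq (\#\Ecal(\F_q)-X-T-\gamma-11)/2$.

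The second step is to confirm that the conjugate-pairing requirement imposes no additional constraint. Since $\Char(\F_q)>3>2$, any affine $\F_q$-rational point $(\alpha,\beta)$ with $\beta\neq 0$ satisfies $-\beta\neq\beta$, so the affine rational points off $y=0$ partition canonically into $(\#\Ecal(\F_q)-1-\gamma)/2$ conjugate pairs $\{(\alpha,\beta),(\alpha,-\beta)\}$. As long as the inequality above is satisfied, $J=(L+1)/2$ such pairs can be selected for the $P_j,\bar{P}_j$, after which the remaining $L+X+T+9$ points (which are \emph{not} required to be paired) can be chosen freely from the unused rational points off $y=0$. Thus the upper bound from step one is tight, and combined with the odd-parity requirement on $L$, we identify $L_{\max}^{\Ecal}$ as the largest odd integer not exceeding $(\#\Ecal(\F_q)-X-T-\gamma-11)/2$.

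The third step is a parity computation: a brief case analysis on the residue of $\#\Ecal(\F_q)-X-T-\gamma-9$ modulo $4$ shows that this largest odd integer equals $2\lfloor (\#\Ecal(\F_q)-X-T-\gamma-9)/4\rfloor-1$, matching the expression stated for $L_{\max}^{\Ecal}$. Substituting into $\Rcal^{\Ecal}=L/(L+X+T+8)$ then produces \eqref{eq:202510281416}. I expect the only real subtlety to be reconciling the two floor-form expressions in step three — the naive bound $\lfloor (\#\Ecal(\F_q)-X-T-\gamma-11)/2\rfloor$ adjusted for odd parity versus the compact form $2\lfloor\cdot/4\rfloor-1$ — whereas the counting and pairing arguments of steps one and two are direct consequences of Theorem~\ref{thm:makk_elliptic} and the hypothesis $\Char(\F_q)>3$.
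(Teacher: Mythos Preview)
Your proposal is correct. Note that the paper does not supply its own proof of this proposition; it is quoted directly from \cite{ghiandoni2025agcodes}, so there is nothing in the paper to compare against. Your counting argument in step one and the pairing observation in step two are exactly the right ingredients, and the floor reconciliation in step three goes through; in fact you can avoid the case analysis entirely by writing $L=2J-1$ and rewriting the budget constraint $2L\le\#\Ecal(\F_q)-X-T-\gamma-11$ as $4J\le\#\Ecal(\F_q)-X-T-\gamma-9$, which gives $J_{\max}=\lfloor(\#\Ecal(\F_q)-X-T-\gamma-9)/4\rfloor$ and hence the stated $L_{\max}^{\Ecal}$ immediately.
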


In the following, an upper bound on the maximum rates of XSTPIR schemes based on hyperelliptic curves is provided. It is an adaptation of the upper bounds in \cite[Corollary 4.5, and the statement right after it]{ghiandoni2025agcodes}, with slightly relaxed conditions (the curve is of the general form $y^2+H(x)y=F(x)$, over finite fields of arbitrary characteristics). For completeness, a proof of it is also included.
\begin{proposition} \label{prop:hyperelliptic_maximum}
Let $\Ycal:y^2+H(x)y=F(x)$ be a hyperelliptic curve of genus $g$ defined over $\F_q$, where $\F_q$ is a prime power, $\deg(H(x))\leq g$, and $\deg(F(x))=2g+1$. Let $J,L,N,X,T$ be as in Theorem \ref{thm:makk_hyperelliptic}.
Then we have \begin{equation} \label{eq:new_bound Jmax per gamma eq 0}
    J_{\max}^\Ycal \le \left\lfloor\frac{2q-(X+T+6g+2)}{4}\right\rfloor,
\end{equation} 
where $J^{\Ycal}_{\max}$ denotes the maximum possible value of $J$ with respect to the curve $\Ycal$ and fixed $X,T$. 
In particular, 
\begin{align}\label{eq:202510281457}
\Rcal_{\max}^{\Ycal}(q,X,T)=\frac{2J_{\max}^{\Ycal}-g}{2J_{\max}^{\Ycal}+X+T+5g+2}\leq \frac{2q-(X+T+8g+2)}{2q+X+T+4g+2},
\end{align}
where $\Rcal^{\Ycal}_{\max}(q,X,T)$ denotes the maximum rate of all XSTPIR schemes based on the hyperelliptic curve $\Ycal$, and $q,X,T$ denote the fixed field size, security parameter, and privacy parameter, respectively.

Define $\Rcal_{\max}^g(q,X,T)$ as the maximum value of $\Rcal_{\max}^\Ycal(q,X,T)$ with $\Ycal$ ranging among all the hyperellptic curves of equation $y^2+H(x)y=F(x)$ of genus $g$. By \eqref{eq:202510281457}, We have 
\begin{align}\label{eq:makk_hyper_maximum_bound}
  \Rcal_{\max}^g(q,X,T) \leq  \frac{2q-(X+T+8g+2)}{2q+X+T+4g+2}\xlongequal{\text{(we denote it by)}}\overline{\Rcal^{g}_{\max}}(q,X,T).  
\end{align}
\end{proposition}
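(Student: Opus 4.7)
The plan is to proceed by a counting argument on the $\F_q$-rational points of $\Ycal$ that satisfy all constraints of Theorem~\ref{thm:makk_hyperelliptic}, and then to convert the resulting upper bound on $J$ into the stated upper bound on the rate.

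The scheme of Theorem~\ref{thm:makk_hyperelliptic} requires $N+g = 2J+X+T+6g+2$ affine $\F_q$-rational points of $\Ycal$ whose $x$-coordinates lie outside $\{\gamma_1,\dots,\gamma_J\}$ and which do not lie on the line $y=0$. Because $\Ycal$ is cut out by an equation of degree exactly $2$ in $y$, each $x$-coordinate $a\in\F_q$ supports at most two affine $\F_q$-rational points, regardless of the characteristic of $\F_q$. Hence the number of candidate points is at most $2(q-J)$, even before the exclusion of the $y=0$ points is taken into account. This immediately yields the necessary condition
\[
2J+X+T+6g+2 \;\le\; 2(q-J),
\]
which rearranges to $4J\le 2q-(X+T+6g+2)$ and establishes the integer bound
\[
J_{\max}^{\Ycal} \;\le\; \Bigl\lfloor \tfrac{2q-(X+T+6g+2)}{4}\Bigr\rfloor
\]
claimed in \eqref{eq:new_bound Jmax per gamma eq 0}.

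For the rate inequality, I would substitute $L=2J-g$ into $N=L+X+T+6g+2=2J+X+T+5g+2$ to write
\[
\Rcal_{\max}^{\Ycal}(q,X,T) \;=\; \frac{2J_{\max}^{\Ycal} - g}{2J_{\max}^{\Ycal} + X+T+5g+2}.
\]
The right-hand side is monotonically increasing in $J_{\max}^{\Ycal}$, so applying the (unfloored) real bound $J_{\max}^{\Ycal}\le (2q-(X+T+6g+2))/4$ and clearing denominators by routine arithmetic yields the claimed bound $\Rcal_{\max}^{\Ycal}(q,X,T)\le \frac{2q-(X+T+8g+2)}{2q+X+T+4g+2}$. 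Since this upper bound depends only on the genus $g$ and not on the specific coefficients of $\Ycal$, taking the supremum over all hyperelliptic curves of the form $y^2+H(x)y=F(x)$ of genus $g$ yields \eqref{eq:makk_hyper_maximum_bound} at once.

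The argument is essentially a routine count; the only conceptual point to verify is that the degree-$2$-in-$y$ equation of $\Ycal$ places at most two affine rational points above any given $x$-coordinate in every characteristic, which is automatic because the leading coefficient of $y^2+H(x)y-F(x)$ in $y$ is $1$. This is precisely what allows us to state the bound in arbitrary characteristic, slightly relaxing the hypotheses of the original statement in~\cite{ghiandoni2025agcodes}.
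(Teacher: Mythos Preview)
Your proposal is correct and follows essentially the same approach as the paper: both arguments count the required $N+g=2J+X+T+6g+2$ points against the at most $2(q-J)$ available affine rational points with $x$-coordinate outside $\{\gamma_1,\dots,\gamma_J\}$, derive the same inequality $4J\le 2q-(X+T+6g+2)$, and then use monotonicity of the rate in $J$ to pass to the unfloored bound. Your explicit remark that the leading coefficient $1$ of $y^2$ guarantees at most two solutions in every characteristic is exactly the observation underpinning the paper's relaxation of the hypotheses.
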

\begin{proof}
   Adopting the notation in Theorem~\ref{thm:makk_hyperelliptic}. Assume that there exist $J$ distinct elements $\gamma_1,\dots,\gamma_{J}$ such that there exist $N+g=2J-g+X+T+6g+2+g=2J+X+T+6g+2$ distinct rational points $P_{1},\dots,P_{2J+X+T+6g+2}$ distinct from $P_{\infty}$, not lying on $y=0$, and such that $h(P_{n})\neq 0$ for each $1\leq n\leq 2J+X+T+6g+2$, where $h(x)=(x-\gamma_1)\cdots(x-\gamma_{J})$. 
   Note that \begin{itemize}
       \item $x(P_n)\in \F_q\backslash \{\gamma_1,\dots,\gamma_J\}$ for each $1\leq n\leq 2J+X+T+6g+2$, since $h(P_{n})\neq 0$ for each $1\leq n\leq 2J+X+T+6g+2$;
       \item for any $\alpha\in \F_q\backslash \{\gamma_1,\dots,\gamma_J\}$, the equation of $y$: $y^2+H(\alpha)y=F(\alpha)$ have most $2$ distinct solutions in $\F_q$.
   \end{itemize} 
   We have $2(q-J)\geq 2J+X+T+6g+2$, which yields $J\leq \floor{\frac{2q-(X+T+6g+2)}{4}}$ as $J$ is an integer. Therefore, we have $J_{\max}^{\Ycal}\leq \floor{\frac{2q-(X+T+6g+2)}{4}}$. 

   From the rate formula in Theorem~\ref{thm:makk_hyperelliptic}, it can be observed that for a hyperelliptic curve $\Ycal$ and fixed $X$, $T$, the maximum rate is obtained by substituting $J_{\max}^{\Ycal}$ into the rate formula, i.e., $\Rcal_{\max}^{\Ycal}(q,X,T)=\frac{2J_{\max}^{\Ycal}-g}{2J_{\max}^{\Ycal}+X+T+5g+2}$. Since $J_{\max}^{\Ycal}\leq \floor{\frac{2q-(X+T+6g+2)}{4}}$, the inequalities \eqref{eq:202510281457} and \eqref{eq:makk_hyper_maximum_bound} hold.
\end{proof}
 
\begin{remark}\label{rem:202512191943}
    From \eqref{eq:makk_hyper_maximum_bound}, when discussing $\Rcal_{\max}^g(q,X,T)$ and $\overline{\Rcal^{g}_{\max}}(q,X,T)$, the implicit necessary condition 
    \begin{align}\label{eq:makk_hyper_condition}
     \overline{\Rcal^{g}_{\max}}(q,X,T)=\frac{2q-(X+T+8g+2)}{2q+X+T+4g+2}\geq 0
    \end{align}
     must be satisfied. Otherwise, the maximum rate is less than 0, which means that no feasible scheme can exist in this case.
\end{remark} 
\begin{remark}\label{rem:elliptic_hyperelliptic_maximum}
In Section~\ref{sec:4.hyper}, we will compare the maximum rates of our two new XSTPIR schemes based on rational curves and Hermitian curves with the maximum rates of the XSTPIR schemes based on elliptic curves (Theorem~\ref{thm:makk_elliptic} and Proposition~\ref{prop:makk_elliptic_maximum}) and hyperelliptic curves (Theorem~\ref{thm:makk_hyperelliptic}, Proposition~\ref{prop:hyperelliptic_maximum}), specifically by demonstrating that the maximum rate of our XSTPIR schemes exceeds the upper bound $\overline{\Rcal^{g}_{\max}}(q,X,T)=\frac{2q-(X+T+8g+2)}{2q+X+T+4g+2}$ in \eqref{eq:makk_hyper_maximum_bound}, under certain constraints. And we will no longer conduct a separate comparison with the elliptic curve scheme (Theorem~\ref{thm:makk_elliptic} and Proposition~\ref{prop:makk_elliptic_maximum}). The reason is as follows. By \eqref{eq:202510281416}, we have
\begin{align*}
\Rcal^{\Ecal}_{\max}(q,X,T)&= \frac{2 \cdot \left\lfloor \frac{\# \Ecal(\F_q)-(X+T+\gamma+9)}{4} \right\rfloor-1}{2 \cdot \left\lfloor \frac{\# \Ecal(\F_q)-(X+T+\gamma+9)}{4} \right\rfloor-1+X+T+8}\\
&\leq \frac{2 \cdot \frac{\floor{q+2\sqrt{q}+1}-(X+T+9)}{4}-1}{2 \cdot   \frac{\floor{q+2\sqrt{q}+1}-(X+T+9)}{4}-1+X+T+8}\\
&\leq \frac{2 \cdot \frac{2q+1-(X+T+9)}{4}}{2 \cdot  \frac{2q+1-(X+T+9)}{4}-1+X+T+8}\\
&=\overline{\Rcal^{g}_{\max}}(q,X,T)|_{g=1},
\end{align*}
where the first ``$\leq$'' is due to the Hasse-Weil bound.
This inequality means that, when $g=1$, the upper bound $\overline{\Rcal^{g}_{\max}}(q,X,T)=\frac{2q-(X+T+8g+2)}{2q+X+T+4g+2}$ in \eqref{eq:makk_hyper_maximum_bound} for the maximum rate of the XSTPIR scheme based on hyperelliptic curves (Theorem~\ref{thm:makk_hyperelliptic}, Proposition~\ref{prop:hyperelliptic_maximum}) is also a valid upper bound for the maximum rate of the XSTPIR scheme based on elliptic curve (Theorem~\ref{thm:makk_elliptic} and Proposition~\ref{prop:makk_elliptic_maximum}). 
Thus, it suffices to compare our scheme with the upper bound $\overline{\Rcal^{g}_{\max}}(q,X,T)=\frac{2q-(X+T+8g+2)}{2q+X+T+4g+2}$ in \eqref{eq:makk_hyper_maximum_bound}.
This is why we only compare the maximum rates of our schemes with the upper bound \eqref{eq:makk_hyper_maximum_bound}, without a specific comparison with schemes based on elliptic and hyperelliptic curves, as in upcoming Proposition~\ref{prop:202511222144} and Propositions~\ref{prop:202511222143} in Section~\ref{sec:4}.
\end{remark} 
The maximum rate of the XSTPIR scheme based on Hermitian curves proposed in \cite{ghiandoni2025agcodes} is as follows.
\begin{proposition}[{\cite[Proposition 3.5]{ghiandoni2025agcodes}}]\label{prop:GGMT_hermitian_maximum}
 \label{prop rate migliore Hermitiana}
    The maximum rate  of the XSTPIR schemes in Theorem \ref{thm:GGMT_Hermitian} is given by
    \begin{align*}
        \Rcal_{\max}^{\Hcal_q}(q^2,X,T)=\frac{L}{N},
    \end{align*}
    where $L=mq-g(\Hcal_q)$ ($g(\Hcal_q)=\frac{q(q-1)}{2}$), $m=\lfloor \frac{q^3-3q^2+q+1-X-T}{2q} \rfloor$, and $N=L+X+T+3q^2-q-2$.
    Explicitly,
    \begin{align}\label{eq:ghi_hermitian_maximum}
        \Rcal_{\max}^{\Hcal_q}(q^2,X,T)=\frac{q\lfloor \frac{q^3-3q^2+q+1-X-T}{2q} \rfloor -\frac{q(q-1)}{2}}{q\lfloor \frac{q^3-3q^2+q+1-X-T}{2q} \rfloor-\frac{q(q-1)}{2}+X+T+3q^2-q-2}.
    \end{align}
\end{proposition}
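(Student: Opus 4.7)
The plan is to argue that the rate formula in Theorem~\ref{thm:GGMT_Hermitian}, namely
$$\Rcal^{\Hcal_q}=\frac{L}{L+X+T+3q^2-q-2},$$
is strictly increasing in $L$, so determining the maximum rate for fixed $q, X, T$ reduces to determining the largest admissible $L$. Since $L=mq-g(\Hcal_q)$ with $g(\Hcal_q)=\frac{q(q-1)}{2}$, this in turn amounts to maximizing the integer $m$ subject to the hypotheses of Theorem~\ref{thm:GGMT_Hermitian}.

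The binding constraint is a counting argument on the $\F_{q^2}$-rational points of $\Hcal_q$, which number exactly $q^3+1$. The hypotheses require that $P_{\infty}$, $P_0$, the $mq$ points $P_{i,z}$, and an additional $L+X+T+\frac{7q^2-3q-6}{2}+1$ rational points all be pairwise distinct. Summing gives
$$2+mq+L+X+T+\frac{7q^2-3q-6}{2}+1\;\leq\; q^3+1.$$
Substituting $mq=L+\frac{q(q-1)}{2}$ and simplifying yields $2mq\leq q^3-3q^2+q+1-X-T$, so $m\leq \frac{q^3-3q^2+q+1-X-T}{2q}$. Taking the floor gives exactly the $m_{\max}=\lfloor\frac{q^3-3q^2+q+1-X-T}{2q}\rfloor$ claimed, and substituting into the rate expression yields \eqref{eq:ghi_hermitian_maximum}.

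For achievability at $m=m_{\max}$, I would verify that the point-selection hypotheses of Theorem~\ref{thm:GGMT_Hermitian} can all be met: choose any $m_{\max}$ distinct values $\alpha_1,\dots,\alpha_{m_{\max}}\in\F_{q^2}^{*}$ (possible since $m_{\max}\leq \frac{q^2}{2}<q^2-1$); for each such $\alpha_i$ the equation $y^q+y=\alpha_i^{q+1}$ has exactly $q$ solutions in $\F_{q^2}$ because $\alpha_i^{q+1}\in\F_q$ and the trace $y\mapsto y^q+y$ is surjective onto $\F_q$, producing the $mq$ points $P_{i,z}$; finally the remaining $q^3-1-m_{\max}q$ rational points furnish, by the inequality above (now tight up to the floor), at least $L+X+T+\frac{7q^2-3q-6}{2}+1$ extra points. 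This exhibits an admissible configuration attaining the bound.

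The main work lies in the counting inequality and its algebraic simplification, which is routine but requires careful bookkeeping of the many $q$-dependent offsets. A secondary point to address is the implicit requirement $L\geq g(\Hcal_q)$ from Theorem~\ref{thm:GGMT_Hermitian}, which translates to $m_{\max}\geq q-1$; this is the regime of interest since it corresponds precisely to the maximum rate being positive, so no separate case analysis is needed.
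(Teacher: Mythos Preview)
The paper does not contain its own proof of this proposition: it is simply quoted as \cite[Proposition~3.5]{ghiandoni2025agcodes}, with only Remark~\ref{rem:implicit_condition_of_Herimitian_GGMT} added to record the implicit condition $m\geq q-1$. So there is no in-paper proof to compare against directly.

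That said, your argument is correct and is exactly the template the paper itself uses when proving the analogous results for its own schemes (Corollaries~\ref{cor:Our_rational_maximum} and~\ref{cor:Our_Hermitian_maximum}): observe that the rate is increasing in $L$, reduce to maximizing the integer parameter ($m$ here) under the point-count constraint of the relevant theorem, solve the resulting inequality, and take the floor. Your counting inequality
\[
2+mq+L+X+T+\tfrac{7q^2-3q-6}{2}+1\;\leq\;q^3+1
\]
simplifies correctly to $2mq\leq q^3-3q^2+q+1-X-T$, matching the stated $m_{\max}$. Your achievability sketch and your treatment of the side condition $m_{\max}\geq q-1$ (which the paper isolates in Remark~\ref{rem:implicit_condition_of_Herimitian_GGMT}) are also sound. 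In short: your proof is right and is the natural one; the paper just doesn't reprove this cited result.
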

\begin{remark}\label{rem:implicit_condition_of_Herimitian_GGMT}
The Proposition~\ref{prop:GGMT_hermitian_maximum} implicitly assumes a condition 
\begin{align}\label{eq:ghi_hermitian_condition}
m=\floor{\frac{q^3-3q^2+q+1-X-T}{2q}}\geq q-1,
\end{align}
which is consistent with the range of $L=mq-g(\Hcal_q)$, as specified in Theorem~\ref{thm:GGMT_Hermitian},
$$g(\Hcal_q)\leq L\leq q^3-g(\Hcal_q).$$ 
\end{remark}
Later in Section~\ref{sec:4}, we will make comparisons between the maximum rates of our proposed schemes and all existing schemes. For ease of comparison, in Table~\ref{tab:allmaximumrates} below, we list all maximum rates (or upper bounds of maximum rates) that we will compare later in Section~\ref{sec:4}, including all schemes introduced above (Rows 2-4) and our upcoming schemes that will be presented in Section~\ref{sec:3} (Rows 5-6).
\begin{table}[H]
    \centering
    \begin{tabular}{|c|c|c|c|} \hline
    The Curves Based on and References & Maximum Rates for Fixed Field Sizes and $X,T$ & Implicit Conditions \\ \hline
    \makecell{Rational curves \\
    Theorem~\ref{thm:Jia_makk_rational} and Proposition~\ref{prop:Jia_makk_rational_maximum} 
    }
    & $\Rcal_{\max}^\mathcal{X}(q,X,T)=\frac{\floor{\frac{q-X-T}{2}}}{\floor{\frac{q-X-T}{2}}+X+T}$ (see \eqref{eq:jia_makk_rational_maximum}) & $\floor{\frac{q-X-T}{2}}\geq 1$ (see \eqref{eq:jia_makk_rational_condition})   \\ \hline 
    \makecell{Hyperelliptic curves\\Theorem~\ref{thm:makk_hyperelliptic} and Proposition~\ref{prop:hyperelliptic_maximum}}& $\overline{\Rcal^{g}_{\max}}(q,X,T)=\frac{2q-(X+T+8g+2)}{2q+X+T+4g+2}$ (see \eqref{eq:makk_hyper_maximum_bound})  & $\frac{2q-(X+T+8g+2)}{2q+X+T+4g+2}\geq 0$ (see \eqref{eq:makk_hyper_condition})  \\ \hline
    \makecell{Hermitian curves\\
    Theorem~\ref{thm:GGMT_Hermitian} and Proposition~\ref{prop rate migliore Hermitiana}}
    &\makecell{$\Rcal^{\Hcal_q}_{\max}(q^2,X,T)=$\\$\frac{q\lfloor \frac{q^3-3q^2+q+1-X-T}{2q} \rfloor -\frac{q(q-1)}{2}}{q\lfloor \frac{q^3-3q^2+q+1-X-T}{2q} \rfloor-\frac{q(q-1)}{2}+(X+T+3q^2-q-2)}$ (see~\eqref{eq:ghi_hermitian_maximum}) }&$\floor{\frac{q^3-3q^2+q+1-X-T}{2q}}\geq q-1$ (see \eqref{eq:ghi_hermitian_condition})\\\hline
    \makecell{Rational curves \\
        Theorem~\ref{thm:our_XSTPIR_Rational} and Corollary~\ref{cor:Our_rational_maximum}}
        & \makecell{$\Rcal^{\Xcal}_{\new,\max}(q,X,T)=$\\ $\frac{2\floor{\frac{q-X-T-3}{2}}}{2\floor{\frac{q-X-T-3}{2}}+X+T+2}$ (see \eqref{eq:our_rational_maximum})} & $\floor{\frac{q-X-T-3}{2}}\geq 1$ (see \eqref{eq:condition_our_rational}) \\ \hline
        \makecell{Hermitian curves\\Theorem~\ref{thm:our_XSTPIR_Hermitian} and Corollary~\ref{cor:Our_Hermitian_maximum}} & \makecell{$\Rcal_{\new,\max}^{\Hcal_q}(q^2,X,T)=$\\ $\frac{2q\floor{\frac{q^3-3q^2-3q+2-X-T}{2q}}-\frac{q(q-1)}{2}}{2q\floor{\frac{q^3-3q^2-3q+2-X-T}{2q}}-\frac{q(q-1)}{2}+(X+T+3q^2+2q-2)}$ (see \eqref{eq:our_hermitian_maximum}) } &  $\floor{\frac{q^3-3q^2-3q+2-X-T}{2q}}\geq \frac{q-1}{2}$ (see \eqref{eq:condition_our_hermitian}) \\ \hline
    \end{tabular}
    \caption{The maximum rates of existing XSTPIR schemes and our new XSTPIR schemes for fixed field size, security parameter $X$, and privacy parameter $T$}
    \label{tab:allmaximumrates}
\end{table}

\section{XSTPIR Schemes based on Rational Curves and Hermitian Curves}
In this section, we present our new XSTPIR schemes based on rational curves and Hermitian curves. Before proceeding, we give a family of bases of the linear spaces of the form $\Span_{\F_q}\{1,x,\dots,x^{k-1}\}$, which is key for the construction of our XSTPIR schemes.
\label{sec:3}  
\subsection{A Family of Bases for $\Span_{\F_q}\{1,x,\dots,x^{k-1}\}$}
\label{sec:3.1} 
For arbitrary $n$ distinct irreducible polynomials $f_1,f_2,\dots,f_n\in \F_q[x]$ over $\F_q$, we denote 
\begin{equation}\label{eq:202510111701}
    \Bcal(f_1,f_2,\dots,f_n):=\braces{x^j\frac{f_1f_2\dots f_n}{f_i}:\; 1\leq i\leq n\text{ and } 0\leq j\leq \deg(f_i)-1}.
\end{equation}
\begin{lemma}\label{lem:202510111619}
    Let $f_1,f_2,\dots,f_n\in \F_q[x]$ be $n$ distinct irreducible polynomials over $\F_q$. Then the set $\Bcal(f_1,f_2,\dots,f_n)$ defined in \eqref{eq:202510111701} is a basis of the space $\Span_{\F_q}\{1,x,\dots,x^{\deg(f_1)+\dots+\deg(f_n)-1}\}$. 
\end{lemma}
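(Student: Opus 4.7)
The plan is to prove this by a dimension-count plus linear-independence argument, since showing spanning directly would be more cumbersome. Let me set $d_i := \deg(f_i)$ and $d := \sum_{i=1}^n d_i$. The ambient space $V := \Span_{\F_q}\{1,x,\dots,x^{d-1}\}$ has dimension $d$. First I would verify that $\Bcal(f_1,\dots,f_n) \subseteq V$: each element $x^j \prod_{k \neq i} f_k$ has degree $j + (d - d_i) \leq (d_i-1)+(d-d_i) = d-1$, so it indeed lies in $V$. Next I would count $|\Bcal(f_1,\dots,f_n)| = \sum_{i=1}^n d_i = d$, which matches $\dim V$. Hence it suffices to prove linear independence.

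For the linear independence step, I would write an arbitrary $\F_q$-linear relation in the form
\[
\sum_{i=1}^n g_i(x)\, \prod_{k\neq i} f_k(x) = 0,
\]
where $g_i(x) = \sum_{j=0}^{d_i-1} a_{i,j}\, x^j \in \F_q[x]$ with $\deg(g_i) \leq d_i - 1$. I would then fix an arbitrary index $i_0 \in [n]$ and reduce both sides modulo $f_{i_0}$. For every $i \neq i_0$, the product $\prod_{k \neq i} f_k$ contains $f_{i_0}$ as a factor and so vanishes modulo $f_{i_0}$. This collapses the relation to
\[
g_{i_0}(x)\, \prod_{k\neq i_0} f_k(x) \equiv 0 \pmod{f_{i_0}}.
\]
Since the $f_k$ are pairwise distinct irreducible polynomials over $\F_q$, one has $\gcd(f_{i_0}, f_k) = 1$ for every $k \neq i_0$, and therefore $\gcd\!\bigl(f_{i_0}, \prod_{k \neq i_0} f_k\bigr) = 1$. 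Hence $f_{i_0} \mid g_{i_0}$, but from $\deg(g_{i_0}) \leq d_{i_0} - 1 < \deg(f_{i_0})$ this forces $g_{i_0} = 0$. Running this argument for every $i_0$ yields all $g_i = 0$, i.e.\ all coefficients $a_{i,j}$ vanish, proving linear independence.

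There is no real obstacle here; the only subtle point is that the argument genuinely relies on the hypothesis that the $f_i$ are \emph{irreducible} and \emph{distinct}. If one dropped either assumption, the coprimality step $\gcd(f_{i_0}, f_k) = 1$ would fail and the reduction modulo $f_{i_0}$ would not force $g_{i_0} = 0$. Combined with the degree bound $\deg(g_{i_0}) < \deg(f_{i_0})$, this is precisely what makes the construction of $\Bcal$ in \eqref{eq:202510111701} work. Once linear independence and the dimension match are in place, the conclusion that $\Bcal(f_1,\dots,f_n)$ is a basis of $V$ is immediate.
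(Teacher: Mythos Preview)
Your proof is correct and takes a genuinely different route from the paper. The paper passes to the algebraic closure $\overline{\F_q}$, factors each $f_i$ completely, and reduces to the ordinary Lagrange interpolation basis: it observes that $\{1,x,\dots,x^{d_i-1}\}$ is $\overline{\F_q}$-equivalent to $\{f_i/(x-\alpha_{i,j})\}_j$, so that $\Bcal(f_1,\dots,f_n)$ is $\overline{\F_q}$-equivalent to the full Lagrange basis $\{(f_1\cdots f_n)/(x-\alpha_{i,j})\}_{i,j}$, whose linear independence is classical. Your argument instead stays entirely over $\F_q$ and is essentially a Chinese Remainder Theorem computation: reducing the relation $\sum_i g_i \prod_{k\neq i} f_k = 0$ modulo each $f_{i_0}$ kills all but one term, and pairwise coprimality then forces $g_{i_0}=0$. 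Your approach is more elementary (no algebraic closure, no appeal to Lagrange interpolation) and makes the role of the hypothesis ``distinct irreducible'' more transparent via the coprimality step; the paper's approach has the advantage of explicitly exhibiting $\Bcal$ as a linear transform of a Lagrange basis, which fits the narrative in the surrounding remark. One cosmetic point: since $\Bcal$ is defined as a \emph{set}, your count $|\Bcal|=d$ implicitly assumes the indexed elements are pairwise distinct; this is of course a consequence of the linear independence you prove next, so no gap, but it would read more cleanly to prove linear independence of the $d$ indexed elements first and then conclude they form a basis.
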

\begin{proof}
    Let $\overline{\F_q}$ be a fixed algebraic closure of $\F_q$. To prove this lemma, it suffices to prove that the elements of $\Bcal(f_1,f_2,\dots,f_n)$ are linearly independent over $\overline{\F_q}$. 

    Let $f_i(x)=\prod_{1\leq j\leq \deg(f_i)}(x-\alpha_{i,j})$, where $\alpha_{i,j}\in \overline{\F_q}$. 
    Since $\F_q$ is a perfect field and $f_i(x)$ are pairwise distinct irreducible polynomials over $\F_q$, the elements $\alpha_{i,j}\in\overline{\F_q}$ $(i\in [n],j\in [\deg(f_i)])$ are pairwise distinct. For each $i\in [n]$, the monomial basis $\{1,x,\dots,x^{\deg(f_i)-1}\}$ is equivalent (by an invertible $\overline{\F_q}$-linear transformation) to $\left\{\frac{f_i}{(x-\alpha_{i,j})}:\; 1\leq j\leq \deg(f_i)\right\}$ over $\overline{\F_q}$, as implied by the Lagrange interpolation theorem. 
    
    Thus, $\Bcal(f_1,f_2,\dots,f_{n})=\braces{x^j\frac{f_1f_2\dots f_n}{f_i}:\; 1\leq i\leq n\text{ and } 0\leq j\leq \deg(f_i)-1}$ is equivalent (by an invertible $\overline{\F_q}$-linear transformation) to 
    $\braces{\frac{f_i}{(x-\alpha_{i,j})}\frac{f_1f_2\dots f_n}{f_i}:\; 1\leq i\leq n\text{ and } 1\leq j\leq \deg(f_i)}$, whose elements are linearly independent over $\overline{\F_q}$ as implied by the Lagrange interpolation theorem. Therefore, the elements of $\Bcal(f_1,f_2,\dots,f_{n})$ are linearly independent over $\F_q$, and consequently, $\Bcal(f_1,f_2,\dots,f_{n})$ is a basis of $\Span_{\F_q}\{1,x,\dots,x^{\deg(f_1)+\dots+\deg(f_n)-1}\}$.
\end{proof}
\begin{remark} 
    The basis for the space $\Span_{\F_q}\{1,x,\dots,x^{k-1}\}$ is fundamental to the construction of XSTPIR schemes based on Lemma~\ref{lem:Jia_Makk_framework}. In previous works \cite{makkonen2024algebraic}, \cite{makkonen2024secretsharingsecureprivate}, \cite{ghiandoni2025agcodes}, the bases employed are the Lagrange bases (or their variations).
    
  Taking the construction in \cite[Section~\Rmnum{4}.B and \Rmnum{4}.C]{makkonen2024algebraic} using rational curves (function fields) as an example. Over there, a family of key rational functions $h_1,\dots,h_{L}$ are defined to be $h_{\ell}=\frac{\prod_{j\in [L]\backslash \{\ell\}}(x-\alpha_j)}{\prod_{j\in [L]}(x-\alpha_j)}=\frac{1}{x-\alpha_\ell}$ ($\ell\in [L]$), where $\alpha_1,\dots,\alpha_L\in \F_q$ are pairwise distinct. $h_1,\dots,h_{L}$ have rational places $(x-\alpha_\ell)_{0}$ ($\ell\in[L]$) as poles. These $L$ rational places can no longer be used for evaluation. 
  The new basis we introduced in the above lemma can be applied to reduce the number of such rational poles, which is key to obtaining XSTPIR schemes with higher maximum rates.
\end{remark}
\subsection{A New XSTPIR Scheme based on Rational Curves}
\label{sec:3.2}
In this subsection, we present our XSTPIR scheme based on rational curves.
\begin{theorem}\label{thm:our_XSTPIR_Rational}
    Let $q$ be a prime power. Let $L,X,T$ be three positive integers. If $q\geq L+X+T+3$ and $2\mid L$, then there exists an XSTPIR scheme over $\F_q$, with rate 
    \begin{align}\label{eq:202511021443}
    \Rcal_{\new}^{\Xcal}=\frac{L}{N},\text{ where } N=L+X+T+2.
    \end{align}
\end{theorem}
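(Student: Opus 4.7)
The plan is to apply the general framework of Lemma~\ref{lem:Jia_Makk_framework} over the rational curve $\Xcal = \Pbb^1_{\F_q}$, closely mirroring the original construction of Theorem~\ref{thm:Jia_makk_rational} but replacing the Lagrange-type query functions $h_\ell = 1/(x-\alpha_\ell)$ with query functions built from the new basis of Lemma~\ref{lem:202510111619}. The key conceptual point is that, in Theorem~\ref{thm:Jia_makk_rational}, the $L$ query functions have poles at $L$ pairwise distinct \emph{rational} places, so those places are unavailable for server evaluation; this is precisely what forces the condition $q \geq 2L+X+T$. If instead the poles are supported on \emph{non-rational} places of degree $\geq 2$, then all $q+1$ rational places of $\Pbb^1_{\F_q}$ remain available, and the condition on $q$ can be reduced significantly.

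Concretely, write $L = 2n$ (using $2\mid L$) and pick $n$ pairwise distinct monic irreducible quadratic polynomials $f_1,\dots,f_n \in \F_q[x]$; this is possible since the number of such polynomials is $(q^2-q)/2$, which exceeds $n$ under the hypothesis $q \geq L+X+T+3$. Let $F = \prod_i f_i$ and $D = \sum_i P_{f_i}$, a divisor of degree $L$ supported on $n$ non-rational places. For $i\in[n]$ and $j\in\{0,1\}$ set $h_{(i,j)} := x^j/f_i$, and relabel them as $h_1,\dots,h_L$; by Lemma~\ref{lem:202510111619} they are $\F_q$-linearly independent in $\Lcal(D)$. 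Following the pattern of Theorem~\ref{thm:Jia_makk_rational}, I take $V_\ell^{\enco} = \Span\{1\}$ and $V_\ell^{\query} = \Span\{h_\ell\}$, and define $V_\ell^{\secu}$ (dimension $X$) and $V_\ell^{\priv}$ (dimension $T$) as appropriate monomial-shift subspaces chosen so that $V^{\info}\oplus V^{\noise} \subseteq \Lcal(D + (X+T+1)P_\infty)$. Since $\deg(D+(X+T+1)P_\infty) = L+X+T+1$ and $g(\Xcal)=0$, the latter Riemann-Roch space has dimension $L+X+T+2 = N$. Finally, let $\varphi$ be evaluation at $N$ pairwise distinct rational places of $\Pbb^1_{\F_q}$, which exist since $|\Pbb^1(\F_q)|=q+1\geq N+1$ and $\Supp(D)$ consists only of non-rational places.

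Verification of Lemma~\ref{lem:Jia_Makk_framework} then proceeds as follows. Condition~(1) is immediate since each $h_\ell$ is a unit in the relevant localization of $\F_q(x)$; condition~(2) follows from the linear independence of $h_1,\dots,h_L$ provided by Lemma~\ref{lem:202510111619}; condition~(4) follows from Lemma~\ref{lem:AGcode_lem1} applied to $D+(X+T+1)P_\infty$, whose degree $L+X+T+1$ satisfies $2g(\Xcal)-2 = -2 < L+X+T+1 < N$; and the desired bounds $X \leq d(\varphi(V_\ell^{\secu})^{\perp})-1$ and $T \leq d(\varphi(V_\ell^{\priv})^{\perp})-1$ follow from Lemma~\ref{lem:AGcode_lem2} applied to the Riemann-Roch divisors associated with $V_\ell^{\secu}$ and $V_\ell^{\priv}$. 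The main obstacle I expect is condition~(3), namely $V^{\info}\cap V^{\noise}=\{0\}$. After multiplying through by the common denominator $F=\prod_i f_i$, this reduces to an $\F_q$-linear independence statement inside $\Span_{\F_q}\{1,x,\dots,x^{L+X+T+1}\}$ for a spanning set composed of the numerators $x^j\prod_{i'\neq i}f_{i'}$ together with the pure higher-degree monomials $x^L,\dots,x^{L+X+T+1}$. This is precisely the regime where Lemma~\ref{lem:202510111619} is tailored to apply: its basis of the polynomial space $\Span\{1,x,\dots,x^{L-1}\}$ supplies the independence of the numerator family, which, joined with the standard monomial basis of $\Span\{x^L,\dots,x^{L+X+T+1}\}$, gives the required total basis and hence the trivial intersection.
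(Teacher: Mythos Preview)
Your plan is correct and matches the paper's approach almost exactly: pick $L/2$ irreducible quadratics $f_i$, set $h_{(i,j)}=x^j/f_i$, invoke Lemma~\ref{lem:202510111619} for the independence needed in conditions~(2) and~(3), and apply Lemmas~\ref{lem:AGcode_lem1}--\ref{lem:AGcode_lem2} for injectivity and the dual-distance bounds. The one detail you have left vague is the precise form of $V^{\secu}_\ell$: the paper takes $V^{\secu}_\ell=h_\ell^{-1}\,\Lcal((X-1)P_\infty)$, and since $h_\ell^{-1}=f_i/x$ for half of the indices this forces an extra rational pole $(x)_0$ into $D^{\full}$, which is why the paper excludes two rational places (not one) and the hypothesis reads $q\geq L+X+T+3$ rather than the $q\geq N$ your outline implicitly uses.
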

\begin{proof}
Let $\Xcal$ be a rational curve defined over $\F_q$. Its function field, which is a rational function field, is denoted by $F=\F_q(x)$.
    Let $f_1,\dots,f_{\frac{L}{2}}\in \F_q[x]$ be $\frac{L}{2}$ distinct quadratic irreducible polynomial over $\F_q$. 
    
Define $h:=\frac{1}{f_1f_2\cdots f_{\frac{L}{2}}},$ and
\begin{equation}\label{eq:202510111502}
(h_{1},h_{2},\dots,h_{L}):=h\cdot\left(\frac{f_1f_2\cdots f_{\frac{L}{2}}}{f_1}, \frac{x\cdot f_1f_2\cdots f_{\frac{L}{2}}}{f_1},\dots,\frac{f_1f_2\cdots f_{\frac{L}{2}}}{f_{\frac{L}{2}}}, \frac{x\cdot f_1f_2\cdots f_{\frac{L}{2}}}{f_{\frac{L}{2}}}\right)
=\left(\frac{1}{f_1}, \frac{x}{f_1},\dots,\frac{1}{f_{\frac{L}{2}}}, \frac{x}{f_{\frac{L}{2}}}\right).
\end{equation} 
For each $a\in \brackets{1,\frac{L}{2}}$, we have
\begin{equation}\label{eq:202510161545}
   (h_{2a-1})=2P_{\infty}-(f_a)_0, \text{ and }  (h_{2a})=(x)_0+P_{\infty}-(f_a)_0.
\end{equation} 
For each $\ell\in [L]$, we define
    \begin{equation}\label{eq:202510052156}
        \begin{tabular}{ll}
$V^{\enco}_{\ell}=\Span\{1\},$ & $V^{\secu}_{\ell}=\frac{1}{h_{\ell}}\cdot \Lcal((X-1)P_{\infty})$, \\ 
        $V^{\query}_{\ell}=\Span\{h_{\ell}\},$ & $V^{\priv}_{\ell}=\Lcal((T-1)P_{\infty}).$
    \end{tabular}
    \end{equation}  

By Lemma~\ref{lem:202510111619} and the second term of \eqref{eq:202510111502}, $h_1,\dots,h_{L}$ are linearly independent over $\F_q$. Thus, $V^{\info}=\sum_{\ell\in [L]} V^{\enco}V^{\query}=\sum_{\ell\in [L]} \Span\{h_{\ell}\}$ is a direct sum. The condition 2) of Lemma~\ref{lem:Jia_Makk_framework} is satisfied. By \eqref{eq:202510161545}, we have
\begin{align}\label{eq:202601061645}
    \sum_{\ell=1}^{L}V^{\enco}_{\ell}V^{\priv}&=\Lcal((T-1)P_{\infty}),\\  
    \sum_{\ell=1}^{L}V^{\secu}_{\ell}V^{\query}_{\ell}&=\Lcal((X-1)P_{\infty}),\\ \notag
    \sum_{\ell=1}^{L}V^{\secu}_{\ell}V^{\priv}_{\ell}&\subseteq \sum_{\ell=1}^{L}\frac{1}{h_{\ell}}\Lcal((X-1+T-1)P_{\infty})\\  
    &\subseteq \Lcal((X+T-2)P_{\infty}+2P_{\infty}+(x)_0). 
\end{align} 
Therefore, we have
\begin{align}\label{eq:202510131621}
    V^{\info}=\sum_{\ell\in[L]}V^{\enco}_{\ell}V^{\query}_{\ell}=\sum_{\ell\in[L]}\Span\{h_{\ell}\}\subseteq \Lcal\parentheses{(f_1)_0+\cdots+(f_{\frac{L}{2}})_0-P_{\infty}}
\end{align}
and 
\begin{align}\label{eq:202510131624}
    V^{\noise}=\sum_{\ell\in [L]}(V^{\enco}_{\ell}V^{\priv}_{\ell}+V^{\secu}_{\ell}V^{\query}_{\ell}+V^{\secu}_{\ell}V^{\priv}_{\ell})\subseteq \Lcal((X+T)P_{\infty}+(x)_0).
\end{align}
By \eqref{eq:202510131621} and \eqref{eq:202510131624}, any nonzero element of $V^{\info}$ has at least one pole in $\Supp\parentheses{(f_1)_0+\cdots+(f_{\frac{L}{2}})_0}$, while any nonzero element of $V^{\noise}$ has no pole in $\Supp\parentheses{(f_1)_0+\cdots+(f_{\frac{L}{2}})_0}$. Thus, we have $V^{\info}\cap V^{\noise}=\{0\}$. The condition 3) of Lemma~\ref{lem:Jia_Makk_framework} is satisfied.
Also by \eqref{eq:202510131621} and \eqref{eq:202510131624}, we have $V^{\info}\oplus V^{\noise}\subseteq \Lcal(D^{\full})$, where
\begin{align}
  D^{\full}=(f_1)_0+\dots+(f_{\frac{L}{2}})_0+\big(X+T\big)P_{\infty}+(x)_0.   
\end{align} 
A direct calculation yields $\deg(D^{\full})=L+X+T+1$.

 Since $\#(\Supp(D^{\full})\cap \Pbb_F^1)=\#(\{P_{\infty}\}\cup \Supp((x)_0))=2$ and $\#\Pbb_{F}^1=q+1\geq (L+X+T+2)+2$, we can choose a set $\Pcal \subseteq \Pbb_F^1\backslash \Supp(D^{\full})$ with $\#\Pcal=N=\parentheses{L+X+T+2}=\deg(D^{\full})+1$. 

 Recall that $V^{\info}\oplus V^{\noise}\subseteq \Lcal(D^{\full})$. By Lemma~\ref{lem:Jia_Makk_framework}, to prove this theorem, it suffices to show that there exists an $\F_{q}$-algebra homomorphism $\vp:\; A \rightarrow \F_{q}^N$ ($A$ and $\varphi$ will be defined later) such that  
\begin{itemize}
    \item $\vp$ is injective over $\Lcal(D^{\full})$, 
    \item $h_{\ell}$ is invertible in $A$ for each $\ell\in [L]$,
    \item $d(\vp(V^{\secu}_{\ell})^{\perp})\geq X+1$ and $d(\vp(V^{\priv}_{\ell})^{\perp})\geq T+1$ for each $\ell\in [L]$.
\end{itemize}
 
 By Lemma~\ref{lem:AGcode_lem1}, since $\#\Pcal>L+X+T+1=\deg(D^{\full})$, the associated AG code $\Ccal(\Pcal,D^{\full})$ is of dimension $\ell(D^{\full})=\deg(D^{\full})+1-g(\Xcal)=L+X+T+2$, and the evaluation map $\vp:\; \Lcal(D^{\full})\rightarrow \F_{q}^N$ is injective by Lemma~\ref{lem:AGcode_lem1}. Extending the domain of $\vp$ from $\Lcal(D^{\full})$ to the holomorphy ring $A:=\bigcap\limits_{P\in \Pcal}\Ocal_{P}=\{f\in F:\; v_{P}(f)\geq 0 \text{ for all } P\in \Pcal\}$, which is an $\F_q$-algebra. The map $\vp$ becomes an $\F_{q}$-algebra homomorphism that is injective over $\Lcal(D^{\full})$. Also, it is direct to verify that $h_{\ell}$ is invertible in $A$ for each $\ell\in [L]$, by \eqref{eq:202510161545}. 

 Finally, we prove that $d(\vp(V^{\secu}_{\ell})^{\perp})\geq X+1$ and $d(\vp(V^{\priv}_{\ell})^{\perp})\geq T+1$ for each $\ell\in [L]$. Note that $V^{\secu}_{\ell}=\frac{1}{h_{\ell}}\cdot \Lcal((X-1)P_{\infty})=\Lcal\parentheses{(X-1)P_{\infty}-\parentheses{\frac{1}{h_{\ell}}}}$ and $V^{\priv}_{\ell}=\Lcal((T-1)P_{\infty}$. Since 
 \begin{align}
     &\Supp\parentheses{(X-1)P_{\infty}-\parentheses{\frac{1}{h_{\ell}}}}\cap \Pcal=\varnothing,\\
     &\Supp((T-1)P_{\infty})\cap \Pcal=\varnothing,\\
     &\# \Pcal-2=N-2=L+X+T\geq \deg\parentheses{(X-1)P_{\infty}-\parentheses{\frac{1}{h_{\ell}}}},\\
     &\# \Pcal-2=N-2=L+X+T\geq \deg((T-1)P_{\infty})),
 \end{align}
by Lemma~\ref{lem:AGcode_lem2}, the linear codes $\vp(V^{\secu}_{\ell})^{\perp}$ and $\vp(V^{\priv}_{\ell})^{\perp}$ each has dimension at least 1, and
\begin{align}
    d(\vp(V^{\secu}_{\ell})^{\perp})&\geq \deg\left((X-1)P_{\infty}-\left(\frac{1}{h_{\ell}}\right)\right)-(2g(\Xcal)-2)=X+1,\\
    d(\vp(V^{\priv}_{\ell})^{\perp})&\geq \deg((T-1)P_{\infty})-(2g(\Xcal)-2)=T+1.
\end{align}
By Lemma~\ref{lem:Jia_Makk_framework}, the proof is completed.
\end{proof} 
\begin{remark}  
The main difference between the above scheme (described in Theorem~\ref{thm:our_XSTPIR_Rational} and its proof) and the scheme in Theorem~\ref{thm:Jia_makk_rational} (described detailedly in \cite[Section~\Rmnum{4}.B]{makkonen2024algebraic}) is the selection of $h_{\ell}$ ($\ell\in [L]$). In \cite[Section~\Rmnum{4}.B]{makkonen2024algebraic}, $h_\ell=\frac{1}{x-\alpha_\ell}$ ($\ell \in [L]$), where $\alpha_{\ell}$ ($\ell \in [L]$) are pairwise distinct. Our $h_{\ell}$ ($\ell\in [L]$) are given in \eqref{eq:202510111502}. This adjustment of $h_{\ell}$ makes $\deg(D^{\full})$ from $L+X+T-1$ to $L+X+T+1$, and makes $\#(\Supp(D^{\full})\cap \Pbb_F^1)$ from $L+1$ to $2$. This leads to the improvement of maximum rates in some cases. See Proposition~\ref{prop:202511222202} for a detailed comparison.
\end{remark} 

\subsection{A New XSTPIR Scheme based on Hermitian Curves}
\label{sec:3.3}
In this subsection, we present our XSTPIR scheme based on Hermitian curves.
\begin{theorem}\label{thm:our_XSTPIR_Hermitian}
    Let $q$ be a prime power. 
    Let $L,X,T$ be three positive integers, where $L=mq-\frac{q(q-1)}{2}$ for an even integer $m\in [q-1,q^2-1]$. 
    Assume $$q^3+1-(q+1)\geq L+X+T+\frac{7q^2+3q-6}{2}+1
    .$$ 
    Then there exists an XSTPIR scheme over $\F_{q^2}$ with rate 
    \begin{align}\label{eq:202510311700}
    \Rcal_{\new}^{\Hcal_q}=\frac{L}{N}=1-\frac{X+T+3q^2+2q-2}{N},\text{ where } N=L+X+T+3q^2+2q-2.
    \end{align}
\end{theorem}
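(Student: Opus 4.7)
The plan is to apply Lemma~\ref{lem:Jia_Makk_framework} over the Hermitian function field $F = \F_{q^2}(\Hcal_q)$, paralleling the construction in the proof of Theorem~\ref{thm:our_XSTPIR_Rational}. The idea is to combine two bases: the new basis of $\Span_{\F_{q^2}}\{1, x, \ldots, x^{m-1}\}$ from Lemma~\ref{lem:202510111619}, obtained from $m/2$ pairwise distinct monic quadratic irreducible polynomials $f_1, \ldots, f_{m/2} \in \F_{q^2}[x]$; and the standard tower $\{1, y, \ldots, y^{q-1}\}$, which is an $\F_{q^2}(x)$-basis of $F$. Setting $h = 1/(f_1 \cdots f_{m/2})$, I would take the key functions $h_1, \ldots, h_L$ to be $h$ times a length-$L$ subset of the products $\{x^s y^b \cdot (f_1 \cdots f_{m/2})/f_i : 1 \leq i \leq m/2,\ 0 \leq s \leq 1,\ 0 \leq b \leq q-1\}$, specifically those whose image $h_\ell \cdot (f_1 \cdots f_{m/2})$ lies in $\Lcal((mq-1)P_\infty)$, a Riemann-Roch space of dimension $mq - g(\Hcal_q) = L$ since $mq - 1 > 2g - 2$.

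Next, define $V_\ell^{\enco} = \Span\{1\}$, $V_\ell^{\query} = \Span\{h_\ell\}$, $V_\ell^{\secu} = (1/h_\ell)\Lcal((X + 2g - 1)P_\infty)$, and $V_\ell^{\priv} = \Lcal((T + 2g - 1)P_\infty)$, where the shift $2g - 1 = q^2 - q - 1$ is the minimum amount required by Lemma~\ref{lem:AGcode_lem2} to guarantee $d(\vp(V_\ell^{\secu})^\perp) \geq X + 1$ and $d(\vp(V_\ell^{\priv})^\perp) \geq T + 1$ on a curve of genus $g$. Conditions 2 and 3 of Lemma~\ref{lem:Jia_Makk_framework} are then verified exactly as in the rational case: the direct-sum property comes from Lemma~\ref{lem:202510111619} combined with the linear independence of $1, y, \ldots, y^{q-1}$ over $\F_{q^2}(x)$, while $V^{\info} \cap V^{\noise} = \{0\}$ follows from the pole-location argument, since every nonzero element of $V^{\info}$ has a pole at some degree-$2$ place in $\Supp((f_1)_0 + \cdots + (f_{m/2})_0)$ whereas no element of $V^{\noise}$ has a pole there. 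The key observation, as in the rational setting, is that because the $f_i$ are \emph{quadratic irreducibles over $\F_{q^2}$}, their zero divisors are supported entirely on degree-$2$ places, so no $\F_{q^2}$-rational place is consumed by them.

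For conditions 1 and 4, I would aggregate the pole contributions of $V^{\info} \oplus V^{\noise}$ into a single divisor $D^{\full}$ supported on $\sum_j (f_j)_0$, on $P_\infty$, on $P_0$, and on the remaining rational zeros of $x$. Careful bookkeeping of the contributions from the $x^{-s}$ and $y^{-b}$ factors appearing in $1/h_\ell$ should produce $\deg(D^{\full}) = L + X + T + (7q^2 + 3q - 6)/2$ and $\Supp(D^{\full}) \cap \Pbb_F^1 = \{P_\infty\} \cup \{\text{the } q \text{ rational points above } x = 0\}$, a set of exactly $q + 1$ rational places. The hypothesis $q^3 + 1 - (q+1) \geq L + X + T + (7q^2 + 3q - 6)/2 + 1$ then supplies a set $\Pcal \subseteq \Pbb_F^1 \setminus \Supp(D^{\full})$ of size $N = L + X + T + 3q^2 + 2q - 2$, and extending the evaluation map $\vp$ to the holomorphy ring $A = \bigcap_{P \in \Pcal}\Ocal_P$ gives an $\F_{q^2}$-algebra homomorphism for which each $h_\ell$ is a unit (since $\Pcal$ avoids $\Supp(D^{\full})$, hence all poles and zeros of every $h_\ell$).

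The main obstacle lies in establishing injectivity of $\vp$ on $V^{\info} \oplus V^{\noise}$. In the rational case $\deg(D^{\full}) = N - 1$, and Lemma~\ref{lem:AGcode_lem1} delivers injectivity on all of $\Lcal(D^{\full})$ for free; but in the Hermitian case the positive genus contributes extra degree, yielding $\deg(D^{\full}) = N + g - 1 > N - 1$, which prevents a direct invocation of Lemma~\ref{lem:AGcode_lem1}. Overcoming this requires more refined analysis: either sharpening $D^{\full}$ by exploiting cancellation between the $V^{\secu}\cdot V^{\priv}$ contributions at $P_\infty$, or invoking the pole-location distinction from condition 3 to argue injectivity on the strictly smaller subspace $V^{\info} \oplus V^{\noise}$ directly, noting that $\Lcal(D^{\full} - \sum_{P \in \Pcal} P)$ has dimension at most $g$ and constraining where noncancellation can occur. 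Pushing this argument through cleanly, alongside the tedious divisor bookkeeping that produces the constant $(7q^2 + 3q - 6)/2$, is the technical heart of the proof.
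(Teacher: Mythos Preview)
Your overall architecture is right and matches the paper's, but there are two concrete gaps.

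\textbf{The basis construction does not produce $L$ functions.} Your candidate set is $\{x^s y^b/f_i : 1\le i\le m/2,\ s\in\{0,1\},\ b\in[0,q-1]\}$, and you propose to keep exactly those with $x^s y^b(f_1\cdots f_{m/2})/f_i\in\Lcal((mq-1)P_\infty)$, equivalently $v_{P_\infty}(x^sy^b/f_i)\ge 1$. But $v_{P_\infty}(x^sy^b/f_i)=2q-sq-b(q+1)$, and this is $\ge 1$ only for $(s,b)\in\{(0,0),(1,0),(0,1)\}$. So your criterion retains just $3m/2$ functions, not $L=mq-q(q-1)/2$. The reason is that a single fixed basis $\Bcal(f_1,\dots,f_{m/2})$ of $\Span\{1,\dots,x^{m-1}\}$ is too rigid: to fill $\Lcal((mq-1)P_\infty)$ you need, in the $y^{z-1}$-layer, a basis of the \emph{smaller} space $\Span\{1,\dots,x^{m-z}\}$, which has dimension $m-z+1$. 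The paper handles this by letting the collection of $f_i$'s shrink with $z$: in layer $z$ it uses $\Bcal(f_1,\dots,f_{(m-z+1)/2})$ when $m-z+1$ is even, and $\Bcal(f_0,f_1,\dots,f_{(m-z)/2})$ with $f_0:=x$ when $m-z+1$ is odd. This yields exactly $m-z+1$ functions per layer, summing to $L$, and one checks $v_{P_\infty}(h_\ell)\ge 1$ in every case. With this modification your pole-location argument for $V^{\info}\cap V^{\noise}=\{0\}$ and your divisor bookkeeping for $D^{\full}$ go through verbatim; in particular $\deg(D^{\full})=L+X+T+\frac{7q^2+3q-6}{2}$ and $\#(\Supp(D^{\full})\cap\Pbb_F^1)=q+1$ are correct.

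\textbf{The injectivity obstacle has a much simpler resolution than you suggest.} You correctly observe $\deg(D^{\full})=N+g-1$, so evaluating at only $N$ points does not give injectivity on all of $\Lcal(D^{\full})$ via Lemma~\ref{lem:AGcode_lem1}. Rather than sharpening $D^{\full}$ or analyzing $\Lcal(D^{\full}-\sum_{P\in\Pcal}P)$, the paper simply uses the hypothesis to first pick a \emph{larger} set $\overline{\Pcal}\subseteq\Pbb_F^1\setminus\Supp(D^{\full})$ of size $\deg(D^{\full})+1$; Lemma~\ref{lem:AGcode_lem1} then makes $\overline{\vp}:\Lcal(D^{\full})\to\F_{q^2}^{\#\overline\Pcal}$ injective. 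Since $\ell(D^{\full})=\deg(D^{\full})+1-g=N$, one now passes to an \emph{information set} $\Pcal\subseteq\overline{\Pcal}$ of size exactly $N$ on which evaluation remains injective. This is why the hypothesis asks for $q^3+1-(q+1)\ge \deg(D^{\full})+1$ rather than $\ge N$.
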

\begin{proof}
Let $\Hcal_q/\F_{q^2}$ be the Hermitian curve, and let $F/\F_{q^2}$ be its function field, where $F=\F_{q^2}(x,y)$ with transcendental elements $x,y$ satisfying $x^{q+1}=y^q+y$. They are of genus $g=\frac{q(q-1)}{2}$. 

Let $f_1,\dots,f_{\frac{m}{2}}\in \F_{q^2}[x]$ be $\frac{m}{2}$ pairwise distinct quadratic irreducible polynomial. This is valid since there are totally $\frac{q^4-q^2}{2}$ quadratic irreducible polynomial over $\F_{q^2}$, and $\frac{q^4-q^2}{2}\geq \frac{q^2-1}{2}\geq \frac{m}{2}$.
Denote $f_0:=x$.
    For $z\in [q]$, we define 
    \begin{align}\notag
        &\{h^{(z)}_{1},\dots,h^{(z)}_{m-z+1}\}:=\\ \label{eq:202510091433}
        &\begin{cases} 
            \left\{y^{z-1}\cdot x^j\frac{f_{1}\cdots f_{\frac{m-z+1}{2}}}{f_i}:\;1\leq i\leq \frac{m-z+1}{2},0\leq j\leq \deg(f_i)-1\right\}=y^{z-1}\cdot \Bcal(f_1,\dots,f_{\frac{m-z+1}{2}}),&\text{if }2\mid (m-z+1),\\ 
            \left\{y^{z-1}\cdot x^j\frac{f_{0}\cdots f_{\frac{m-z}{2}}}{f_i}:\;0\leq i\leq \frac{m-z}{2},0\leq j\leq \deg(f_i)-1\right\}=y^{z-1}\cdot \Bcal(f_0,f_1,\dots,f_{\frac{m-z}{2}}),&\text{if }2\mid (m-z),
        \end{cases}
    \end{align}
    where the symbol $\Bcal(\cdots)$ is defined as in \eqref{eq:202510111701}.
In the above, we have defined $\sum_{z=1}^{q} (m-z+1)=\frac{q}{2}(2m-q+1)=mq-\frac{q(q-1)}{2}=L$ distinct functions. Let 
    $h:=\frac{1}{f_1f_2\dots f_{\frac{m}{2}}}.$ We define 
\begin{align}\label{eq:202512192230}
\{h_1,\dots,h_{L}\}:=\braces{h\cdot h^{(z)}_i:\; 1\leq z\leq q,1\leq i\leq m-z+1},
\end{align}
where the order is immaterial. We claim that $h_1,\dots,h_{L}$ are linearly independent over $\F_{q^2}$. To prove this claim, we only need to prove that $\braces{h^{(z)}_i:\; 1\leq z\leq q,1\leq i\leq m-z+1}$ are linearly independent over $\F_{q^2}$. 
Indeed, by Lemma~\ref{lem:202510111619} and \eqref{eq:202510091433}, we can transform $\braces{h^{(z)}_i:\; 1\leq z\leq q,1\leq i\leq m-z+1}$ into $\braces{y^{z-1}x^{i}:\; 1\leq z\leq q,0\leq i\leq m-z}$ by an invertible $\F_{q^2}$-linear transformation. By the strict triangle inequality (see \cite[Lemma 1.1.11]{stichtenoth2009algebraic}), to prove that they are linearly independent over $\F_{q^2}$, it suffices to show that $v_{P_\infty}(y^{z-1}x^{i})$ ($1\leq z\leq q,0\leq i\leq m-z$) are pairwise distinct, where $P_{\infty}$ is the place at infinity of the Hermitian function field $F$. Assume $v_{P_\infty}(y^{z-1}x^{i})=v_{P_\infty}(y^{z'-1}x^{i'})$ for some $1\leq z,z'\leq q,0\leq i,i'\leq m-z$, then we have $-(z-1)(q+1)-iq=-(z'-1)(q+1)-i'q$, and then $(z-z')(q+1)=(i'-i)q$. Since $q$ is coprime to $q+1$, we have $q\mid (z-z')$, which implies $z=z'$ by the fact that $-(q-1)\leq z-z'\leq q-1$. Thus, we have $i=i'$. This claim is proved. 

For each $\ell\in [L]$, we define 
\begin{equation}\label{eq:202510091551}
        \begin{tabular}{ll}
        $V^{\enco}_{\ell}=\Span\{1\},$ & $V^{\secu}_{\ell}=\frac{1}{h_{\ell}}\cdot \Lcal((X+2g-1)P_{\infty})$,\\ 
        $V^{\query}_{\ell}=\Span\{h_{\ell}\},$ & $V^{\priv}_{\ell}=\Lcal((T+2g-1)P_{\infty}).$
    \end{tabular}
    \end{equation}  
By \eqref{eq:202510091433} and \eqref{eq:202512192230}, we have $v_{P_\infty}(h_{\ell})\geq 1$ and $(h_\ell)_{\infty}\leq (f_1)_0+\cdots+(f_{\frac{m}{2}})_0$ for each $\ell\in [L]$. Therefore,
\begin{align}\label{eq:202510091648}
    \sum_{\ell=1}^{L} V^{\enco}_{\ell}V^{\query}_{\ell}=\Span\{h_{\ell}:\;1\leq \ell\leq L\}\subseteq \Lcal((f_1)_0+\dots+(f_{\frac{m}{2}})_0-P_{\infty}).
\end{align}
Moreover, the sum above is a direct sum since $h_1,\dots,h_{L}$ are $\F_{q^2}$-linearly independent as previously established. The condition 2) of Lemma~\ref{lem:Jia_Makk_framework} is satisfied.

By \eqref{eq:202510091433} and \eqref{eq:202512192230}, we have $\parentheses{\frac{1}{h_{\ell}}}_{\infty}=\parentheses{h_\ell}_{0}\leq 2qP_{\infty}+(x^2)_0+(y^{q-1})_0$ for each $\ell\in[L]$. Therefore, we have 
\begin{align}\label{eq:202510101648}
    \sum_{\ell=1}^{L}V^{\enco}_{\ell}V^{\priv}&=\Lcal((T+2g-1)P_{\infty}),\\ \label{eq:202510101702}
    \sum_{\ell=1}^{L}V^{\secu}_{\ell}V^{\query}_{\ell}&=\Lcal((X+2g-1)P_{\infty}),\\ \notag
    \sum_{\ell=1}^{L}V^{\secu}_{\ell}V^{\priv}_{\ell}&\subseteq \sum_{\ell=1}^{L}\frac{1}{h_{\ell}}\Lcal((X+2g-1+T+2g-1)P_{\infty})\\ \label{eq:202510101703}
    &\subseteq \Lcal((X+2g-1+T+2g-1)P_{\infty}+2qP_{\infty}+2(x)_0+(q-1)(y)_0). 
\end{align} 
Thus, by \eqref{eq:202510091648}, \eqref{eq:202510101648}, \eqref{eq:202510101702}, \eqref{eq:202510101703} we have
\begin{align}\label{eq:202510132105}
    V^{\info}&=\sum_{\ell=1}^{L} V^{\enco}_{\ell}V^{\query}_{\ell} \subseteq \Lcal((f_1)_0+\dots+(f_{\frac{m}{2}})_0-P_{\infty}),\\ \label{eq:202510132335}
    V^{\noise}&=\sum_{\ell=1}^{L}(V^{\enco}_{\ell}V^{\priv}+V^{\secu}_{\ell}V^{\query}_{\ell}+V^{\secu}_{\ell}V^{\priv}_{\ell})\subseteq \Lcal((X+T+2q+4g-2)P_{\infty}+2(x)_0+(q-1)(y)_0).
\end{align}
Thus, it holds that $V^{\info}\cap V^{\noise}=\{0\}$ since any nonzero element of $V^{\info}$ has at least one pole in $\Supp((f_1)_0+\dots+(f_\frac{m}{2})_0)$ while any element of $V^{\noise}$ has no pole in $\Supp((f_1)_0+\dots+(f_\frac{m}{2})_0)$. The condition 3) of Lemma~\ref{lem:Jia_Makk_framework} is satisfied.

By \eqref{eq:202510132105} and \eqref{eq:202510132335}, we have $V^{\info}\oplus V^{\noise}\subseteq \Lcal(D^{\full})$, where
\begin{align}
    D^{\full}=(f_1)_0+\dots+(f_{\frac{m}{2}})_0+\big(X+T+2q+4g-2\big)P_{\infty}+2(x)_0+(q-1)(y)_0.
\end{align} 
A direct calculation yields $\deg(D^{\full})=L+X+T+\frac{7q^2+3q-6}{2}$.
 
 Recall that $V^{\info}\oplus V^{\noise}\subseteq \Lcal(D^{\full})$. By Lemma~\ref{lem:Jia_Makk_framework}, to prove this theorem, it suffices to show that there exists an $\F_{q^2}$-algebra homomorphism $\vp:\; A \rightarrow \F_{q^2}^N$ ($A$ and $\varphi$ will be defined later) such that  \begin{itemize}
    \item $\vp$ is injective over $\Lcal(D^{\full})$, 
    \item $h_{\ell}$ is invertible in $A$ for each $\ell\in[L]$,
    \item $d(\vp(V^{\secu}_{\ell})^{\perp})\geq X+1$ and $d(\vp(V^{\priv}_{\ell})^{\perp})\geq T+1$ for each $\ell \in[L]$.
\end{itemize} 
 
 By Lemma~\ref{lem:AGcode_lem1}, since $\#(\Supp(D^{\full})\cap \Pbb_F^1)=\#(\{P_{\infty}\}\cup \Supp((y)_0)\cup \Supp((x)_0))=q+1$ and $\#\Pbb_{F}^1=q^3+1\geq (L+X+T+\frac{7q^2+3q-6}{2}+1)+(q+1)$, we can choose a set $\overline{\Pcal} \subseteq \Pbb_F^1\backslash \Supp(D^{\full})$ with $\#\overline{\Pcal}=\parentheses{L+X+T+\frac{7q^2+3q-6}{2}+1}=\deg(D^{\full})+1$. Since $\#\overline{\Pcal}>L+X+T+\frac{7q^2+3q-6}{2}=\deg(D^{\full})$, the associated AG code $\Ccal(\overline{\Pcal},D^{\full})$ is of dimension $\ell(D^{\full})=\deg(D^{\full})+1-g=L+X+T+3q^2+2q-2$, and the evaluation map $\overline{\vp}:\; \Lcal(D^{\full})\rightarrow \F_{q^2}^{\# \overline{\Pcal}}$ is injective. Then we can choose an information set $\Pcal\subseteq \overline{\Pcal}$ of the code $\Ccal(\overline{\Pcal},D^{\full})$ with $\#\Pcal=N=\ell(D^{\full})=L+X+T+3q^2+2q-2$, such that the evaluation map $\vp:\;\Lcal(D^{\full})\rightarrow \F_{q^2}^N$ is also injective. Extending the domain of $\vp$ from $\Lcal(D^{\full})$ to the holomorphy ring $A:=\bigcap\limits_{P\in \Pcal}\Ocal_{P}=\{f\in F:\; v_{P}(f)\geq 0 \text{ for all } P\in \Pcal\}$, which is an $\F_{q^2}$-algebra. The map $\vp$ becomes an $\F_{q^2}$-algebra homomorphism that is injective over $\Lcal(D^{\full})$. And it is direct to verify that $h_\ell$ is invertible in $A$ for each $\ell\in [L]$.

 Finally, we prove that $d(\vp(V^{\secu}_{\ell})^{\perp})\geq X+1$ and $d(\vp(V^{\priv}_{\ell})^{\perp})\geq T+1$ for each $\ell \in[L]$. Note that $V^{\secu}_{\ell}=\frac{1}{h_{\ell}}\cdot \Lcal((X+2g-1)P_{\infty})=\Lcal\parentheses{(X+2g-1)P_{\infty}-\parentheses{\frac{1}{h_{\ell}}}}$ and $V^{\priv}_{\ell}=\Lcal((T+2g-1)P_{\infty})$. Since 
 \begin{align}
     &\Supp\parentheses{(X+2g-1)P_{\infty}-\parentheses{\frac{1}{h_{\ell}}}}\cap \Pcal=\varnothing,\\
     &\Supp((T+2g-1)P_{\infty})\cap \Pcal=\varnothing,\\
     &\# \Pcal-2=N-2=L+X+T+3q^2+2q-2-2\geq  \deg\parentheses{(X+2g-1)P_{\infty}-\parentheses{\frac{1}{h_{\ell}}}},\\
     &\# \Pcal-2=N-2=L+X+T+3q^2+2q-2-2\geq \deg((T+2g-1)P_{\infty})),
 \end{align} 
by Lemma~\ref{lem:AGcode_lem2}, the linear codes $\vp(V^{\secu}_{\ell})^{\perp}$ and $\vp(V^{\priv}_{\ell})^{\perp}$ each has dimension at least 1, and
\begin{align}
    d(\vp(V^{\secu}_{\ell})^{\perp})&\geq \deg \parentheses{(X+2g-1)P_{\infty}-\parentheses{\frac{1}{h_{\ell}}}}-(2g-2)=X+1,\\
    d(\vp(V^{\priv}_{\ell})^{\perp})&\geq \deg((T+2g-1)P_{\infty})-(2g-2)=T+1.
\end{align}

By Lemma~\ref{lem:Jia_Makk_framework}, the proof is completed.
\end{proof}
\begin{remark}
   The main difference between the above scheme (described in Theorem~\ref{thm:our_XSTPIR_Hermitian} and its proof) and the scheme in Theorem~\ref{thm:GGMT_Hermitian} (described detailedly in \cite[Theorem 3.3]{ghiandoni2025agcodes}) is the selection of $h_{\ell}$ ($\ell\in [L]$). For their selection of $h_{\ell}$ ($\ell\in [L]$), see \cite[Corollary 3.2, and the sentence after (9)]{ghiandoni2025agcodes}. For our selection of $h_{\ell}$ ($\ell\in [L]$), see \eqref{eq:202510091433} and \eqref{eq:202512192230}. This adjustment makes $\deg(D^{\full})$ from $\deg(D^{\full})=L+X+T+\frac{7q^2-3q-6}{2}$ to $L+X+T+\frac{7q^2+3q-6}{2}$, and makes $\#(\Supp(D^{\full})\cap \Pbb_F^1)$ from $L+\frac{q(q-1)}{2}+2$ to $q+1$. This leads to the improvement of maximum rates. See Proposition~\ref{prop:202510191639} for a detailed comparison.
\end{remark}

\section{Rate Comparison}
\label{sec:4}
In the above section, we introduced our new XSTPIR schemes based on rational curves and Hermitian curves. In this section, let us make comparisons between the maximum rates of our new XSTPIR schemes with known constructions of XSTPIR schemes. The maximum rate of a family of XSTPIR schemes refers to the maximum achievable rate for fixed field size and $X,T$, with an unrestricted number of servers $N$.
Before comparison, we determine the maximum rates of our new XSTPIR scheme proposed in Theorems~\ref{thm:our_XSTPIR_Rational} and \ref{thm:our_XSTPIR_Hermitian}. 
\begin{corollary}\label{cor:Our_rational_maximum}
   Fixing the field size $q$, the security parameter $X\geq 1$, and the privacy parameter $T\geq 1$ in Theorem~\ref{thm:our_XSTPIR_Rational}, the maximum rate of the XSTPIR schemes described in Theorem~\ref{thm:our_XSTPIR_Rational} is given by 
    \begin{align}\label{eq:our_rational_maximum}
    \Rcal^{\Xcal}_{\new,\max}(q,X,T):=\frac{2\floor{\frac{q-X-T-3}{2}}}{2\floor{\frac{q-X-T-3}{2}}+X+T+2},
    \end{align}
    under the condition that 
    \begin{align}\label{eq:condition_our_rational}
    \floor{\frac{q-X-T-3}{2}}\geq 1.
    \end{align}
\end{corollary}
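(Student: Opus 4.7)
The plan is to observe that the rate formula in Theorem~\ref{thm:our_XSTPIR_Rational}, namely $\mathcal{R}_{\new}^{\mathcal{X}} = \frac{L}{L+X+T+2}$, is a strictly increasing function of $L$ (with $X,T$ fixed). Indeed, writing it as $1 - \frac{X+T+2}{L+X+T+2}$, the monotonicity in $L$ is immediate since $X+T+2 > 0$. Consequently, to obtain the maximum rate one simply needs to choose the largest $L$ compatible with the hypotheses of Theorem~\ref{thm:our_XSTPIR_Rational}.

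The constraints on $L$ imposed by Theorem~\ref{thm:our_XSTPIR_Rational} are: $L$ is a positive integer, $L$ is even, and $L \leq q - X - T - 3$. The largest even positive integer not exceeding $q - X - T - 3$ is
$$L_{\max} \;=\; 2\floor{\frac{q-X-T-3}{2}},$$
and this correctly handles both parities of $q-X-T-3$. The existence of such an $L$ (i.e., $L_{\max} \geq 2$) is equivalent to $\floor{\frac{q-X-T-3}{2}} \geq 1$, which is precisely the condition \eqref{eq:condition_our_rational}. Substituting $L_{\max}$ into the rate formula then yields
$$\mathcal{R}_{\new,\max}^{\mathcal{X}}(q,X,T) \;=\; \frac{L_{\max}}{L_{\max} + X + T + 2} \;=\; \frac{2\floor{\frac{q-X-T-3}{2}}}{2\floor{\frac{q-X-T-3}{2}} + X+T+2},$$
which matches \eqref{eq:our_rational_maximum}.

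I do not anticipate a substantive obstacle here; the proof is essentially a direct verification. The only minor point that deserves explicit mention is the parity handling: because $L$ must be even, the tight upper bound on $L$ is $2\floor{\frac{q-X-T-3}{2}}$ rather than $q-X-T-3$ itself, and the feasibility condition \eqref{eq:condition_our_rational} is exactly what is needed to guarantee at least one admissible even positive $L$.
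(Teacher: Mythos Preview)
Your proposal is correct and follows essentially the same approach as the paper's proof: both observe that the rate $\frac{L}{L+X+T+2}$ is increasing in $L$, so maximizing the rate reduces to finding the largest even positive $L$ with $L\leq q-X-T-3$, which is $2\floor{\frac{q-X-T-3}{2}}$, and condition~\eqref{eq:condition_our_rational} is exactly what guarantees this is at least $2$.
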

\begin{proof}
    By the rate expression \eqref{eq:202511021443} in Theorem~\ref{thm:our_XSTPIR_Rational}, maximizing the rate $\Rcal^{\Xcal}_{\new}=\frac{L}{N}=\frac{L}{L+X+T+2}$ for fixed $q,X,T$ is equivalent to finding the maximum $L$ such that
    $$q\geq L+X+T+3$$ where $L$ is a positive even integer. Note that the largest admissible $L$ is equal to $2\floor{\frac{q-3-X-T}{2}}$ (it is positive by the condition~\eqref{eq:condition_our_rational}). Substituting it into the expression $\Rcal^{\Xcal}_{\new}=\frac{L}{N}=\frac{L}{L+X+T+2}$, the proof is complete.
\end{proof}
\begin{corollary}\label{cor:Our_Hermitian_maximum}
   Fixing the field size $q^2$, the security parameter $X\geq 1$, and the privacy parameter $T\geq 1$ in Theorem~\ref{thm:our_XSTPIR_Hermitian}, the maximum rate of the XSTPIR schemes described in Theorem~\ref{thm:our_XSTPIR_Hermitian} is given by 
    \begin{align}\label{eq:our_hermitian_maximum}
    \Rcal_{\new,\max}^{\Hcal_q}(q^2,X,T):=\frac{2q\floor{\frac{q^3-3q^2-3q+2-X-T}{2q}}-\frac{q(q-1)}{2}}{2q\floor{\frac{q^3-3q^2-3q+2-X-T}{2q}}-\frac{q(q-1)}{2}+(X+T+3q^2+2q-2)},
    \end{align}
    under the condition that 
    \begin{align}\label{eq:condition_our_hermitian}
      \floor{\frac{q^3-3q^2-3q+2-X-T}{2q}}\geq \frac{q-1}{2}.  
    \end{align}    
\end{corollary}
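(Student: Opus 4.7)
The plan is to mirror the strategy used in the proof of Corollary~\ref{cor:Our_rational_maximum}: view the rate as a function of $L$, observe monotonicity, and identify the largest feasible $L$ permitted by the constraints of Theorem~\ref{thm:our_XSTPIR_Hermitian}. Concretely, from \eqref{eq:202510311700} the rate equals
$$\Rcal_{\new}^{\Hcal_q}=\frac{L}{L+X+T+3q^2+2q-2},$$
which is strictly increasing in $L$ once $X,T,q$ are fixed. Since $L = mq - \tfrac{q(q-1)}{2}$, maximizing $L$ is equivalent to maximizing the even integer $m \in [q-1, q^2-1]$ subject to the hypothesis $q^3+1-(q+1)\geq L+X+T+\tfrac{7q^2+3q-6}{2}+1$ of Theorem~\ref{thm:our_XSTPIR_Hermitian}.

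First I would rewrite this inequality as an upper bound purely on $m$. Substituting $L=mq-\tfrac{q(q-1)}{2}$ and simplifying the arithmetic,
$$mq \le q^3-q-X-T-\frac{7q^2+3q-6}{2}-1+\frac{q^2-q}{2} = q^3-3q^2-3q+2-X-T,$$
so $m\le \frac{q^3-3q^2-3q+2-X-T}{q}$. The upper range constraint $m\le q^2-1$ is automatically implied since $q^3-3q^2-3q+2 \le q^3-q$ for all $q\geq 1$, hence it is never the binding constraint. Because $m$ must be \emph{even}, the largest admissible value is
$$m_{\max} = 2\left\lfloor\frac{q^3-3q^2-3q+2-X-T}{2q}\right\rfloor.$$

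Next I would verify that the condition \eqref{eq:condition_our_hermitian} is precisely what guarantees $m_{\max}\ge q-1$, i.e.\ that the lower endpoint of the admissible range of $m$ can actually be attained by some even integer respecting the bound; this justifies the existence of a feasible scheme attaining the claimed maximum. Plugging $L_{\max} = m_{\max}\cdot q - \tfrac{q(q-1)}{2} = 2q\bigl\lfloor\tfrac{q^3-3q^2-3q+2-X-T}{2q}\bigr\rfloor - \tfrac{q(q-1)}{2}$ into the rate expression yields exactly \eqref{eq:our_hermitian_maximum}, completing the argument.

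I do not anticipate a serious obstacle here: the proof is essentially a careful algebraic rearrangement together with a monotonicity observation, much like Corollary~\ref{cor:Our_rational_maximum}. The only subtle point requiring care is tracking the parity constraint on $m$ so as to arrive at a floor over $2q$ (rather than $q$); explicitly factoring the $2$ out of the denominator when passing from the continuous bound to the integer bound is the one step where a slip would be easy to make.
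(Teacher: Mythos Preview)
Your proposal is correct and follows essentially the same approach as the paper: observe monotonicity of the rate in $L$, rewrite the feasibility constraint of Theorem~\ref{thm:our_XSTPIR_Hermitian} as $mq\le q^3-3q^2-3q+2-X-T$, take the largest even $m$ to obtain $m_{\max}=2\lfloor(q^3-3q^2-3q+2-X-T)/(2q)\rfloor$, and substitute. Your explicit verification that the upper bound $m\le q^2-1$ is never binding is a small extra detail the paper omits.
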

\begin{proof}
    By the rate expression \eqref{eq:202510311700} in Theorem~\ref{thm:our_XSTPIR_Hermitian}, maximizing the rate $\Rcal^{\Hcal_q}_{\new}=\frac{L}{N}=\frac{L}{L+X+T+3q^2+2q-2}$ for fixed $q^2,X,T$ is equivalent to find the maximum $L$ such that
    $$q^3+1-(q+1)\geq L+X+T+\frac{7q^2+3q-6}{2}+1
    ,$$ where $L=mq-\frac{q(q-1)}{2}$ with $m\in[q-1,q^2-1]$ being an even integer. Thus, it suffices to find the maximum even integer $m$ such that 
    $$q^3+1-(q+1)\geq mq-\frac{q(q-1)}{2}+X+T+\frac{7q^2+3q-6}{2}+1
    .$$
    After rearranging the inequality, we find that the largest admissible $m$ is equal to $2\floor{\frac{q^3-3q^2-3q+2-X-T}{2q}}$ (it is in $[q-1,q^2-1]$ by the condition~\eqref{eq:condition_our_hermitian}). Substituting it into $\Rcal^{\Hcal_q}_{\new}=\frac{L}{N}=\frac{L}{L+X+T+3q^2+2q-2}$ ($L=mq-\frac{q(q-1)}{2}$), the proof is complete.
\end{proof}
Let us begin to compare the maximum rates of our XSTPIR schemes with existing ones.

\subsection{Comparison with the Existing XSTPIR Scheme based on Rational Curves}
\label{sec:4.rational}
In this subsection, we make comparisons between the maximum rates of our newly proposed XSTPIR schemes with existing XSTPIR schemes based on rational curves. 

In the following proposition, we make a comparison between the maximum rates of our proposed rational-curve-based XSTPIR scheme (Theorem~\ref{thm:our_XSTPIR_Rational}, Corollary~\ref{cor:Our_rational_maximum}, or see Row 5 of Table~\ref{tab:allmaximumrates}) and the rational-curve-based XSTPIR scheme by Jia \textit{et al.} \cite{jia2019cross} and Makkonen \textit{et al.} \cite{makkonen2024algebraic} (Theorem~\ref{thm:Jia_makk_rational}, Proposition~\ref{prop:Jia_makk_rational_maximum}, or see Row 2 of Table~\ref{tab:allmaximumrates}). 
\begin{proposition}\label{prop:202511222202}
Assume $4\leq X+T\leq q-10$ and $q\geq 21$, we have
    $$
    \Rcal_{\new,\max}^{\Xcal}(q,X,T)> \Rcal_{\max}^{\Xcal}(q,X,T).
      \footnote{\text{Under the assumption, the conditions~\eqref{eq:condition_our_rational} and \eqref{eq:jia_makk_rational_condition} are both satisfied, so we can make this comparison.}}
    $$ 
\end{proposition}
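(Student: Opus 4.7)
Write $s=X+T$ to simplify notation. Both rates in question have the form $\frac{a}{a+b}$, which is strictly increasing in $a/b$. So the inequality $\Rcal_{\new,\max}^{\Xcal}(q,X,T)>\Rcal_{\max}^{\Xcal}(q,X,T)$ is equivalent, by cross-multiplication, to
$$2s\cdot\floor{\tfrac{q-s-3}{2}}>(s+2)\cdot\floor{\tfrac{q-s}{2}}.$$
I plan to prove this purely arithmetic inequality under the hypotheses $s\geq 4$ and $q\geq\max\{21,s+10\}$.

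The first step is to eliminate the floors by splitting on the parity of $q-s$. Set $v:=\floor{(q-s)/2}$ and $u:=\floor{(q-s-3)/2}$. A direct computation gives $u=v-2$ when $q-s$ is even and $u=v-1$ when $q-s$ is odd. Substituting, the required inequality becomes $(s-2)v>4s$ in the even case and $(s-2)v>2s$ in the odd case. Since $s\geq 4$ we have $s-2>0$, so these are equivalent to $v>\frac{4s}{s-2}=4+\frac{8}{s-2}$ and $v>\frac{2s}{s-2}=2+\frac{4}{s-2}$, respectively. Clearly the even case is the binding one; the odd case follows from $v\geq 5$ (which is implied by $q-s\geq 11$) together with $s\geq 4$, an easy check.

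The main work is the even case: I need $v\geq\lceil 4+\frac{8}{s-2}\rceil+ \epsilon$ (i.e. strict inequality) under the assumptions. The plan is to tabulate the required lower bound on $v$ for each small value $s\in\{4,5,\dots,10\}$, use $q\geq 21$ to verify it directly, and observe that for $s\geq 11$ the bound $\frac{4s}{s-2}<5$, so $v\geq 5$ (which follows from $q-s\geq 10$ in the even case) already suffices. For example, when $s=4$ the even case requires $v\geq 9$, equivalently $q\geq 22$; but $q\geq 21$ forces $q$ even (since $q-s$ is even) and hence $q\geq 22$ automatically. Similar small-parity juggling handles $s=5,\dots,10$. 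For $s\geq 11$ the inequality is immediate from $v\geq 5$.

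The only subtle step is matching $q\geq 21$ to the boundary subcases (especially $s=4,5,6$), where the parity of $q$ interacts nontrivially with the parity of $q-s$. The calculation is routine and short, case-by-case on $s$; no structural difficulty should arise. The odd case is in every subcase strictly weaker than the even one, so no separate effort is required there beyond the initial reduction.
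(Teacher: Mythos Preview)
Your proposal is correct and complete in outline; the parity split together with the small-$s$ tabulation (using that $q-s$ even forces a specific parity of $q$, hence $q\geq 22$ when needed) and the observation $4s/(s-2)<5$ for $s\geq 11$ do the job. The odd case is indeed dominated.

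Your route differs substantively from the paper's. The paper never touches the floors directly: it replaces $\Rcal_{\new,\max}^{\Xcal}$ by the lower bound $\frac{q-s-4}{q-2}$ and $\Rcal_{\max}^{\Xcal}$ by the upper bound $\frac{q-s}{q+s}$, then shows the difference of these two smooth expressions is positive by checking that the numerator $N(q,s)=-s^2+(q-6)s-2q$ is positive on $[4,q-10]$ via concavity in $s$ and evaluation at the endpoints (both endpoints give $2q-40>0$ for $q\geq 21$). Your approach keeps the exact floor values and reduces to an integer inequality $(s-2)v>4s$ (or $>2s$), handled by a short case split. The paper's method is more uniform and scales to the later, more intricate comparisons (hyperelliptic and Hermitian), where analogous smooth bounds are used repeatedly; your method is sharper and more elementary for this particular proposition, but the parity bookkeeping would become unwieldy in the higher-genus comparisons. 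Either is fine here.
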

\begin{proof} 
For simplicity of comparison, we give estimations of $\Rcal_{\new,\max}^{\Xcal}(q,X,T)$ and $\Rcal_{\max}^{\Xcal}(q,X,T)$. By \eqref{eq:our_rational_maximum}, we have
    \begin{align}\label{eq:202510311622}
      \Rcal_{\new,\max}^{\Xcal}(q,X,T)\geq \frac{2\parentheses{\frac{q-X-T-3}{2}-\frac{1}{2}}}{2\parentheses{\frac{q-X-T-3}{2}-\frac{1}{2}}+X+T+2}= \frac{q-X-T-4}{q-2}\xlongequal{\text{(denoted by)}}\overline{\Rcal_{\new,\max}^{\Xcal}}(q,X,T).
    \end{align}
By \eqref{eq:jia_makk_rational_maximum}, we have
    \begin{align}\label{eq:202510311624}
      \Rcal^{\Xcal}_{\max}(q,X,T)\leq \frac{\frac{q-X-T}{2}}{\frac{q-X-T}{2}+X+T}=\frac{q-X-T}{q+X+T}\xlongequal{\text{(denoted by)}} \overline{\Rcal^{\Xcal}_{\max}}(q,X,T).
    \end{align}

    The difference of $\overline{\Rcal_{\new,\max}^{\Xcal}}(q,X,T)$ and $\overline{\Rcal^{\Xcal}_{\max}}(q,X,T)$ is equal to
    $$
      D(q,M):=\frac{(q-M-4)(q+M)-(q-M)(q-2)}{(q-2)(q+M)},
    $$
    where we use $M$ to denote $X+T$ for simplicity.
    Its numerator is 
    $$
    N(q,M):=(q-M-4)(q+M)-(q-M)(q-2)=-M^2-2q-6M+qM.
    $$
By \eqref{eq:202510311622} and \eqref{eq:202510311624}, to prove this proposition, it suffice to prove that 
$N(q,M)>0$ for any $4\leq M\leq q-10$ and $q\geq 21$.
Note that $N(q,M)$ is a concave function of $M$ for any fixed $q$, as it is a quadratic polynomial of $M$ with leading coefficient $-1$. It suffices to prove that $N(q,M)|_{M=4}=2q-40$ and $N(q,M)|_{M=q-10}=2q-40$ are both positive for any $q\geq 21$, which is straightforward to verify. The proof is complete.
\end{proof}
In the following proposition, we make a comparison between the maximum rates of our Hermitian-curve-based XSTPIR scheme (Theorem~\ref{thm:our_XSTPIR_Hermitian} and Corollary~\ref{cor:Our_Hermitian_maximum}, or see Row 6 of Table~\ref{tab:allmaximumrates}) and the rational-curve-based scheme by Jia \textit{et al.} \cite{jia2019cross} and Makkonen \textit{et al.} \cite{makkonen2024algebraic} (Theorem~\ref{thm:Jia_makk_rational}, Proposition~\ref{prop:Jia_makk_rational_maximum}, or see Row 2 of Table~\ref{tab:allmaximumrates}). Note that the field size in this proposition is $q^2$, rather than $q$. We need to replace $q$ with $q^2$ in the expression \eqref{eq:jia_makk_rational_maximum} of $\Rcal^{\Xcal}_{\max}(q,X,T)$ to get $\Rcal^{\Xcal}_{\max}(q^2,X,T)$. The $q$ in condition \eqref{eq:jia_makk_rational_condition} should also be replaced by $q^2$ accordingly. 
\begin{proposition}\label{prop:202511222203}
   Assume $3q\leq X+T\leq q^2-2$ and $q\geq 9$, we have
    $$
    \Rcal_{\new,\max}^{\Hcal_q}(q^2,X,T)>  \Rcal^{\Xcal}_{\max}(q^2,X,T). 
    \footnote{\text{Under the assumption, the conditions~\eqref{eq:condition_our_hermitian} and \eqref{eq:jia_makk_rational_condition} are both satisfied, so we can make comparison.}}
    $$ 
\end{proposition}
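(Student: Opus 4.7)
The plan is to follow exactly the strategy used in the proof of Proposition~\ref{prop:202511222202}: discard the floor functions by passing to simple affine upper/lower bounds, reduce the comparison to showing that a single rational function is positive, and then verify positivity of the numerator as a polynomial in $M := X + T$ over the prescribed range.

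First, I would establish a lower bound on $\Rcal_{\new,\max}^{\Hcal_q}(q^2, X, T)$. Using $\lfloor a/(2q)\rfloor \geq a/(2q) - 1$ with $a = q^3 - 3q^2 - 3q + 2 - M$, one gets
\[
2q\left\lfloor \tfrac{q^3-3q^2-3q+2-M}{2q}\right\rfloor \;\geq\; q^3 - 3q^2 - 5q + 2 - M.
\]
Since $L \mapsto L/(L+c)$ is increasing for any $c > 0$, substituting into \eqref{eq:our_hermitian_maximum} yields
\[
\Rcal_{\new,\max}^{\Hcal_q}(q^2, X, T) \;\geq\; \overline{\Rcal_{\new}}(q,M) \;:=\; \frac{2q^3 - 7q^2 - 9q + 4 - 2M}{2q^3 - q^2 - 5q}.
\]
For the rational-curve scheme (with field $\F_{q^2}$), the bound $\lfloor (q^2 - M)/2\rfloor \leq (q^2 - M)/2$ gives
\[
\Rcal_{\max}^{\Xcal}(q^2, X, T) \;\leq\; \overline{\Rcal^{\Xcal}}(q^2, M) \;:=\; \frac{q^2 - M}{q^2 + M}.
\]

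Second, the denominator of $\overline{\Rcal_{\new}}(q,M) - \overline{\Rcal^{\Xcal}}(q^2, M)$, namely $(2q^3 - q^2 - 5q)(q^2 + M)$, is positive for $q \geq 3$, so it suffices to show that the cross-multiplied numerator is positive. A direct expansion gives the numerator (up to the positive constant factor $2$) equal to
\[
N(q,M) \;=\; -M^2 + M(2q^3 - 5q^2 - 7q + 2) - 3q^4 - 2q^3 + 2q^2,
\]
which, for each fixed $q$, is a concave quadratic in $M$. Therefore it suffices to check positivity at the two endpoints $M = 3q$ and $M = q^2 - 2$ of the interval $[3q, q^2 - 2]$.

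Finally, at $M = 3q$ one computes $N(q, 3q) = q(3q^3 - 17q^2 - 28q + 6)$; the cubic factor is monotone increasing for $q \geq 5$ and evaluates to $564$ at $q = 9$, so it is positive for all $q \geq 9$. At $M = q^2 - 2$ one finds $N(q, q^2 - 2) = 2q^5 - 9q^4 - 13q^3 + 18q^2 + 14q - 8$; the leading term $2q^5$ dominates already at $q \geq 5$, and a direct check at $q = 9$ (value $51148$) confirms positivity for all $q \geq 9$. Concavity in $M$ then yields $N(q, M) > 0$ throughout the stated region, completing the comparison. The main obstacle is purely bookkeeping: one must keep track of the polynomial expansions carefully and make sure the endpoint checks are correctly verified, since the degree-5 endpoint polynomial in $q$ offers little margin for algebraic error.
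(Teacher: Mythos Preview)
Your argument is correct and follows the same strategy as the paper's proof (floor removal, concavity in $M=X+T$, endpoint check at $M=3q$ and $M=q^2-2$); the paper merely uses a slightly coarser lower bound $\overline{\Rcal_{\new,\max}^{\Hcal_q}}(q^2,X,T)=\frac{q^3-5q^2+3-M}{q^3-2q^2+2q+1}$, leading to different but entirely analogous endpoint polynomials. One small slip: $2q^5-9q^4-13q^3+18q^2+14q-8$ equals $-488$ at $q=5$, so the phrase ``dominates already at $q\geq 5$'' is inaccurate, but the needed positivity for $q\geq 9$ still holds (the polynomial is increasing for $q\geq 5$ and already positive at $q=6$), so your conclusion stands.
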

\begin{proof}
We begin with the following two estimations that will simplify the comparison. By \eqref{eq:our_hermitian_maximum},
   \begin{align}\notag
           \Rcal_{\new,\max}^{\Hcal_q}(q^2,X,T)&\geq \frac{q^3-3q^2-3q+2-X-T-(2q-1)-\frac{q(q-1)}{2}}{q^3-3q^2-3q+2-X-T-(2q-1)-\frac{q(q-1)}{2}+X+T+3q^2+2q-2} \\\notag
           &\geq   \frac{q^3-5q^2+3-X-T}{q^3-5q^2+3-X-T+X+T+3q^2+2q-2} \\\notag
           &=\frac{q^3-5q^2+3-X-T}{q^3-2q^2+2q+1}\\\label{eq:202510311559}
           &\xlongequal{\text{(denoted by)}}\overline{\Rcal_{\new,\max}^{\Hcal_q}}(q^2,X,T),
    \end{align}
where the first ``$\geq $'' is due to $2q\floor{\frac{q^3-3q^2-3q+2-X-T}{2q}}-\frac{q(q-1)}{2}\geq 2q\frac{q^3-3q^2-3q+2-X-T-(2q-1)}{2q}-\frac{q(q-1)}{2}$, and the second ``$\geq $'' is due to $q^3-3q^2-3q+2-X-T-(2q-1)-\frac{q(q-1)}{2}\geq q^3-5q^2+3-X-T$ for any $q\geq 3$.

By \eqref{eq:jia_makk_rational_maximum}, we have
     \begin{align}\label{eq:202510311600}
       \Rcal^{\Xcal}_{\max}(q^2,X,T)\leq \frac{\frac{q^2-X-T}{2}}{\frac{q^2-X-T}{2}+X+T}=\frac{q^2-X-T}{q^2+X+T}\xlongequal{\text{(denoted by)}} \overline{\Rcal^{\Xcal}_{\max}}(q^2,X,T).
    \end{align}
The difference of $\overline{\Rcal_{\new,\max}^{\Hcal_q}}(q^2,X,T)$ and $\overline{\Rcal^{\Xcal}_{\max}}(q^2,X,T)$ is 
    $$
      \overline{\Rcal_{\new,\max}^{\Hcal_q}}(q^2,X,T)-\overline{\Rcal^{\Xcal}_{\max}}(q^2,X,T)
=
        \frac{(q^3-5q^2+3-M)(q^2+M)-(q^2-M)(q^3-2q^2+2q+1)}{(q^3-2q^2+2q+1)(q^2+M)},
    $$
    where we use $M$ to denote $X+T$ for simplicity.
    Its numerator is 
    $$
    N(q^2,M){:=}(q^3{-}5q^2{+}3{-}M)(q^2{+}M){-}(q^2{-}M)(q^3{-}2q^2{+}2q{+}1)=-3q^4{+}2q^3M{-}2q^3{-}8q^2M{+}2q^2{+}2qM{-}M^2{+}4M.
    $$ 
    By \eqref{eq:202510311559} and \eqref{eq:202510311600}, to prove this proposition, it suffices to prove that 
    $N(q^2,M)>0$ for any $3q\leq M\leq q^2-2$ and $q\geq 8$.
    
    Note that $N(q^2,M)$ is a concave function of $M$ for any fixed $q$ since it is a quadratic polynomial of $M$ with leading coefficient $-1$.
    To prove this proposition, it suffices to prove that $N(q^2,M)|_{M=3q}=3q^4-26q^3-q^2+12q$ and $N(q^2,M)|_{M=(q^2-2)}=2q^5-12q^4-4q^3+26q^2-4q-12$ are both positive for any $q\geq 9$, which are straightforward to verify. The proof is complete.
\end{proof}

\subsection{Comparison with XSTPIR Scheme based on Elliptic curves and Hyperelliptic Curves}
\label{sec:4.hyper}
In this subsection, we make comparisons between the maximum rates of our XSTPIR schemes with existing XSTPIR schemes based on elliptic curves and hyperelliptic curves. According to Proposition~\ref{prop:hyperelliptic_maximum} and Remark~\ref{rem:elliptic_hyperelliptic_maximum}, it suffices to compare the maximum rates of our schemes with the upper bound $\overline{\Rcal^{g}_{\max}}(q,X,T)=\frac{2q-(X+T+8g+2)}{2q+X+T+4g+2}$ in \eqref{eq:makk_hyper_maximum_bound}.  

In the following, we make comparisons between the maximum rate of our XSTPIR schemes based on rational curves (Theorem~\ref{thm:our_XSTPIR_Rational}, Corollary~\ref{cor:Our_rational_maximum}, or see Row 5 of Table~\ref{tab:allmaximumrates}) and the upper bound of the maximum rate of the XSTPIR schemes based on hyperelliptic curves (Theorem~\ref{thm:makk_hyperelliptic}, Proposition~\ref{prop:hyperelliptic_maximum}, or see Row 3 of Table~\ref{tab:allmaximumrates}).
\begin{proposition}\label{prop:202511222144}
Assume that the conditions~\eqref{eq:condition_our_rational} and \eqref{eq:makk_hyper_condition} are satisfied, and $2\leq X+T\leq \floor{2\sqrt{3}q}-8$, $q\geq \undetermined$, $g\geq 1$. Then we have 
$$
\Rcal_{\new,\max}^{\Xcal}(q,X,T)> \overline{\Rcal^{g}_{\max}}(q,X,T).
$$
\end{proposition}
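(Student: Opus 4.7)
The plan is to mirror the template of Propositions~\ref{prop:202511222202} and \ref{prop:202511222203}: replace $\Rcal_{\new,\max}^{\Xcal}(q,X,T)$ with the clean lower bound $\overline{\Rcal_{\new,\max}^{\Xcal}}(q,X,T)=\frac{q-X-T-4}{q-2}$ already derived in \eqref{eq:202510311622}, keep the closed form $\overline{\Rcal^{g}_{\max}}(q,X,T)=\frac{2q-(X+T+8g+2)}{2q+X+T+4g+2}$ as the quantity to be beaten, and reduce the inequality $\overline{\Rcal_{\new,\max}^{\Xcal}}(q,X,T)>\overline{\Rcal^{g}_{\max}}(q,X,T)$ to positivity of a polynomial in $M:=X+T$, with $q$ and $g$ treated as parameters.

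After clearing denominators, the difference takes the form $N(q,M,g)\big/\bigl((q-2)(2q+M+4g+2)\bigr)$, whose denominator is manifestly positive on the admissible range since $q\geq\undetermined$. A direct expansion, entirely analogous to the one performed inside the proof of Proposition~\ref{prop:202511222202}, yields
\begin{equation*}
N(q,M,g)=12qg-M^2-(4g+8)M-32g-12.
\end{equation*}
For each fixed $(q,g)$ this is a quadratic in $M$ with leading coefficient $-1$, and therefore concave. Consequently, checking $N(q,M,g)>0$ on $M\in[2,\floor{2\sqrt{3}q}-8]$ reduces to checking it at the two endpoints only.

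The lower endpoint is routine: $N(q,2,g)=12qg-40g-32=g(12q-40)-32$, which is positive for $q\geq\undetermined$ and $g\geq 1$. The main obstacle is the upper endpoint, where one substitutes $M=\floor{2\sqrt{3}q}-8$ (using $M\le 2\sqrt{3}q-8$ as an overestimate) and must verify that the $-M^2$ contribution is compensated by $12qg$ minus the lower-order terms. Since $\partial N/\partial g = 12q-4M-32$ stays nonnegative throughout the admissible range (the admissible $M$ satisfies $4M+32\le 12q$ by the hypothesis), the function $N(q,\cdot,g)$ is nondecreasing in $g$, so the worst case for a uniform bound over $g\geq 1$ is $g=1$.

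Thus the hard step boils down to a single-variable inequality in $q$, namely showing that $N(q,\floor{2\sqrt{3}q}-8,1)>0$ for $q\geq\undetermined$. The plan for this final step is to replace the floor by its upper estimate $2\sqrt{3}q-8$, expand, regroup by powers of $\sqrt{q}$, and isolate the dominant terms that witness positivity; if needed, one squares to eliminate the surd and verifies the resulting polynomial inequality for $q\geq\undetermined$ directly. Combined with concavity in $M$, this endpoint estimate completes the proof.
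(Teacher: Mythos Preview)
Your approach is correct and starts identically to the paper (same lower bound $\frac{q-M-4}{q-2}$, same numerator $N=12qg-M^{2}-(4g+8)M-32g-12$), but your reduction is cleaner than what the paper does. After obtaining $N$, the paper further lower-bounds it by $12qg-(M+\max\{4g+2,8\})^{2}$ (completing the square), turning positivity into the condition $M<\sqrt{12qg}-\max\{4g+2,8\}$; it must then verify this for every admissible $g$, splitting into $g=1$ versus $g\ge 2$ and, in the latter case, analysing $g\mapsto\sqrt{12qg}-4g-2-(2\sqrt{3q}-8)$ via its derivative and checking the endpoints $g=2$ and $g=(q-2)/4$ (the upper bound on $g$ coming from condition~\eqref{eq:makk_hyper_condition}). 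Your observation that $\partial N/\partial g=12q-4M-32$ is constant in $g$ and nonnegative on the stated $M$-range collapses all of this to the single case $g=1$, with no upper restriction on $g$ required. The remaining endpoint check is in fact immediate: a direct expansion gives $N(q,\,2\sqrt{3q}-8,\,1)=8\sqrt{3q}-12>0$ for every $q\ge 1$, so no squaring is needed. One small point to tighten: ``replace the floor by its upper estimate'' is not automatically a lower bound on $N$ when $\floor{2\sqrt{3q}}-8$ lies to the left of the vertex $M=6$ (which happens for small $q$); the clean fix is to run the concavity argument on the slightly larger interval $[2,\,2\sqrt{3q}-8]$ from the outset, so that positivity at both of its endpoints covers the integer endpoint $\floor{2\sqrt{3q}}-8$ automatically.
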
 
\begin{proof}
By \eqref{eq:our_rational_maximum}, we have
    \begin{equation}\label{eq:202510291711}
        \Rcal_{\new,\max}^{\Xcal}(q,X,T)\geq \frac{2\parentheses{\frac{q-X-T-3}{2}-\frac{1}{2}}}{2\parentheses{\frac{q-X-T-3}{2}-\frac{1}{2}}+X+T+2}= \frac{q-X-T-4}{q-2}\xlongequal{\text{(denoted by)}}\overline{\Rcal_{\new,\max}^{\Xcal}}(q,X,T)
    \end{equation} 
Recall the upper bound $\overline{\Rcal^{g}_{\max}}(q,X,T)$ given in \eqref{eq:makk_hyper_maximum_bound}:
        \begin{align}\label{eq:202510291716}
           \overline{\Rcal^{g}_{\max}}(q,X,T)=\frac{2q-(X+T+8g+2)}{2q+X+T+4g+2} .
    \end{align} 
Consider the difference
\begin{align} 
\overline{\Rcal_{\new,\max}^{\Xcal}}(q,X,T)- \overline{\Rcal^{g}_{\max}}(q,X,T)
=\frac{(q-M-4)(2q+M+4g+2)-(2q-(M+8g+2))(q-2)}{(q-2)(2q+M+4g+2)}, \label{eq:202510301436}
\end{align}
where we use $M$ to denote $X+T$ for simplicity.
For the numerator of \eqref{eq:202510301436}, we have   
\begin{align*}
    (q-M-4)(2q+M+4g+2)-(2q-(M+8g+2))(q-2)&=12qg-4gM-M^2-32g-8M-12\\
    &=12qg-12-(M+4g)(M+8)\\
    (\text{when }M\geq 2\text{ and }g\geq 1)&\geq 12qg-(M+\max\{4g+2,8\})^2\\
    &\xlongequal{\text{(denoted by)}} N(q,g,M).
\end{align*}
By the above inequality and \eqref{eq:202510291711}, \eqref{eq:202510291716}, \eqref{eq:202510301436}, to prove this proposition, it suffices to show that $N(q,g,M)>0$ for any positive integers $q,g,M$ satisfying 
\begin{equation}\label{eq:202510301541}
    M\leq q-5 \text{ (i.e., condition \eqref{eq:condition_our_rational})}, M\leq 2q-8g-2 \text{ (i.e., condition \eqref{eq:makk_hyper_condition})}, 2\leq M\leq \floor{2\sqrt{3q}}-8, q\geq \undetermined, g\geq 1.
\end{equation}

Note that $N(q,g,M)>0$ for any $-\sqrt{12qg}-\max\{4g+2,8\}\leq M\leq \sqrt{12qg}-\max\{4g+2,8\}$ and $q\geq 1,g\geq 1$. To prove this proposition, it suffices to show that 
\begin{align}\label{eq:202510251631}
-\sqrt{12qg}-\max\{4g+2,8\}\leq 2\leq \floor{2\sqrt{3q}}-8\leq \sqrt{12qg}-\max\{4g+2,8\}
\end{align} 
for any $q,g$ satisfying $q\geq \undetermined,1\leq g\leq \frac{q-2}{4}$ \footnote{Here, we impose a stronger constraint $1\leq g\leq \frac{q-2}{4}$ rather than $g\geq 1$ as in \eqref{eq:202510301541}. This is because when $g>\frac{q-2}{4}$, no valid $M$ exists, as $M\leq 2q-8g-2$ and $2\leq M$ by \eqref{eq:202510301541}.}.
For this, we consider the following two cases according to the value of $g$:
\begin{itemize}
    \item[1)] $q\geq \undetermined,g=1$. In this case, \eqref{eq:202510251631} holds directly.
    \item[2)] $q\geq \undetermined,2\leq g\leq \frac{q-2}{4}$. In this case, to prove \eqref{eq:202510251631}, it suffices to prove that the function $\sqrt{12qg}-4g-2-(2\sqrt{3q}-8)$ is positive for any $q\geq \undetermined,2\leq g\leq \frac{q-2}{4}$. Since 
    $\pfrac{(\sqrt{12qg}-4g-2-(2\sqrt{3q}-8))}{g}=\frac{\sqrt{3q}}{\sqrt{g}}-4$. The function $\sqrt{12qg}-4g-2-(2\sqrt{3q}-8)$ of $g$ is increasing on $2\leq g\leq \frac{3}{16}q$ and decreasing on $\frac{3}{16}q\leq g\leq \frac{q-2}{4}$ for any fixed $q\geq 11$. Note that for any $q\geq \undetermined$, we have
    \begin{align*}
        &\sqrt{12qg}-4g-2-(2\sqrt{3q}-8)|_{g=2}=2\sqrt{6q}-2\sqrt{3q}-2> 0,\\
        &\sqrt{12qg}-4g-2-(2\sqrt{3q}-8)|_{g={\frac{q-2}{4}}}=\sqrt{3q(q-2)}-(q-2)-2\sqrt{3q}+6\geq \sqrt{3}(q-2)-2\sqrt{3q}+6>0.
    \end{align*}
\end{itemize}
The proof is complete.
\end{proof}
In the following, we make comparisons between the maximum rate of our XSTPIR schemes based on Hermitian curves (Theorem~\ref{thm:our_XSTPIR_Hermitian}, Corollary~\ref{cor:Our_Hermitian_maximum}, or see Row 6 of Table~\ref{tab:allmaximumrates}) and the upper bound of the maximum rate of the XSTPIR schemes based on hyperelliptic curves (Theorem~\ref{thm:makk_hyperelliptic}, Proposition~\ref{prop:hyperelliptic_maximum}, or see Row 3 of Table~\ref{tab:allmaximumrates}).
Note that in the following proposition, the field size is $q^2$, rather than $q$.
\begin{proposition}\label{prop:202511222143} 
 When $q\geq 7$, the condition~\eqref{eq:condition_our_hermitian} is weaker than the condition~\eqref{eq:makk_hyper_condition} \footnote{Note that condition~\eqref{eq:makk_hyper_condition} now becomes $\overline{\Rcal^{g}_{\max}}(q^2,X,T)=\frac{2q^2-(X+T+8g+2)}{2q^2+X+T+4g+2}\geq 0$ since the field size here is $q^2$.} for any $g\geq 1$. Moreover, when both conditions~\eqref{eq:condition_our_hermitian} and~\eqref{eq:makk_hyper_condition} are satisfied (as long as \eqref{eq:makk_hyper_condition} is satisfied), and one of the following extra conditions
 
    1) $X+T\geq 4q$, $q\geq 14, g=1$,
    
    2) $X+T\geq 4q$, $q\geq 11 , g\geq 2$,
    
    3) $X+T\geq \floor{3.4q}$, $q\geq 28, g\geq 1$
is satisfied, we have $$\Rcal_{\new,\max}^{\Hcal_q}(q^2,X,T)> \overline{\Rcal^{g}_{\max}}(q^2,X,T).$$
\end{proposition}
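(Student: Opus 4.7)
The plan is to follow the same strategy as Propositions~\ref{prop:202511222144} and~\ref{prop:202511222203}: replace $\Rcal_{\new,\max}^{\Hcal_q}(q^2,X,T)$ by an explicit lower bound (obtained by dropping the floor), use the closed-form expression for $\overline{\Rcal^{g}_{\max}}(q^2,X,T)$, and show the difference is positive via endpoint analysis of a quadratic polynomial in $M:=X+T$.

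First I would dispose of the claim that condition~\eqref{eq:condition_our_hermitian} is weaker than condition~\eqref{eq:makk_hyper_condition}. Unpacking the floor, condition~\eqref{eq:condition_our_hermitian} is equivalent to $M\leq q^3-4q^2-2q+2$ when $q$ is odd and to $M\leq q^3-4q^2-3q+2$ when $q$ is even, while condition~\eqref{eq:makk_hyper_condition} with field size $q^2$ becomes $M\leq 2q^2-8g-2$. In either parity the difference of these two bounds is at least $q^2(q-6)+8g-3q+4$, which is positive for every $q\geq 7$ and $g\geq 1$, yielding the claim.

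For the main rate inequality I would apply the elementary estimate $2q\floor{y/(2q)}\geq y-(2q-1)$ to \eqref{eq:our_hermitian_maximum} exactly as in the proof of Proposition~\ref{prop:202511222203}, followed by the same tidying step $q^3-3q^2-3q+2-M-(2q-1)-\tfrac{q(q-1)}{2}\geq q^3-5q^2+3-M$ (valid for $q\geq 3$), to obtain
\[
\Rcal_{\new,\max}^{\Hcal_q}(q^2,X,T)\;\geq\;\frac{q^3-5q^2+3-M}{q^3-2q^2+2q+1}\;=:\;\overline{\Rcal_{\new,\max}^{\Hcal_q}}(q^2,X,T).
\]
With $\overline{\Rcal^{g}_{\max}}(q^2,X,T)=\frac{2q^2-M-8g-2}{2q^2+M+4g+2}$ from Row~3 of Table~\ref{tab:allmaximumrates}, both denominators are positive under the hypotheses, so it suffices to prove positivity of the numerator $N(q,g,M)$ of the cross-multiplied difference $\overline{\Rcal_{\new,\max}^{\Hcal_q}}-\overline{\Rcal^{g}_{\max}}$. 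A routine expansion shows that $N(q,g,M)$ is a quadratic polynomial in $M$ with leading coefficient $-1$, hence concave in $M$ for each fixed pair $(q,g)$.

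By concavity, positivity of $N$ on the admissible interval reduces to checking the two endpoints. The right endpoint $M_{\max}=2q^2-8g-2$ is easy: there $\overline{\Rcal^{g}_{\max}}(q^2,X,T)=0$, so the difference equals $\overline{\Rcal_{\new,\max}^{\Hcal_q}}$, whose numerator simplifies to $q^3-7q^2+8g+5$ and is positive for all $q\geq 7,\,g\geq 1$. The left endpoint is $M_{\min}=4q$ in Cases~1 and~2, and $M_{\min}=\floor{3.4q}$ in Case~3. After substitution, each $N(q,g,M_{\min})$ becomes a quartic in $q$ that is affine in $g$, and the threshold on $q$ in each case is calibrated so that this endpoint polynomial stays positive. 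The main obstacle I expect is precisely this left-endpoint check in Case~1: for $g=1$ the $g$-linear slack cannot help, forcing the larger threshold $q\geq 14$ (whereas for $g\geq 2$ in Case~2 the slack permits $q\geq 11$); likewise the widened interval $M_{\min}=\floor{3.4q}$ in Case~3 tightens the inequality and forces $q\geq 28$. Each resulting endpoint inequality is a low-degree polynomial inequality in $q$, verifiable by the same elementary estimation used in the closing paragraphs of Propositions~\ref{prop:202511222202}, \ref{prop:202511222203}, and~\ref{prop:202511222144}.
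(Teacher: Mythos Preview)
Your proposal is correct and follows essentially the same route as the paper: the same lower estimate $\overline{\Rcal_{\new,\max}^{\Hcal_q}}=\frac{q^3-5q^2+3-M}{q^3-2q^2+2q+1}$, the same cross-multiplied numerator $N(q,g,M)$ recognized as concave in $M$, and the same reduction to endpoint checks. The only cosmetic differences are that the paper handles the ``weaker condition'' claim by the coarser bound $M\le 2q^2$ rather than your parity-split floor analysis, and that the paper takes the right endpoint as $M=2q^2$ (and, for Case~3, the left endpoint as $3.4q-1$) instead of your choices $M_{\max}=2q^2-8g-2$ and $M_{\min}=\floor{3.4q}$; your right-endpoint choice is arguably tidier since it makes $\overline{\Rcal^{g}_{\max}}$ vanish.
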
 
\begin{proof} 
Assume that $q\geq 7$ and the condition~\eqref{eq:makk_hyper_condition} is satisfied, that is, $\overline{\Rcal^{g}_{\max}}(q^2,X,T)=\frac{2q^2-(X+T+8g+2)}{2q^2+(X+T+4g+2)}\geq 0$. Then we have $X+T\leq 2q^2$, which implies $\floor{\frac{q^3-3q^2-3q+2-X-T}{2q}}\geq \frac{q^3-3q^2-3q+2-2q^2}{2q}-1\geq \frac{q-1}{2}$. This means that the condition~\eqref{eq:condition_our_hermitian} is weaker than the condition~\eqref{eq:makk_hyper_condition}. 
 
We now make a comparison between $\Rcal_{\new,\max}^{\Hcal_q}(q^2,X,T)$ and $\overline{\Rcal^{g}_{\max}}(q^2,X,T)$.
Before proceeding, we recall a lower bound of $\Rcal_{\new,\max}^{\Hcal_q}(q^2,X,T)$ in \eqref{eq:202510311559} for simplicity of comparison: 
    \begin{align}  \label{eq:202510291603}
           \Rcal_{\new,\max}^{\Hcal_q}(q^2,X,T) \geq \frac{q^3-5q^2+3-X-T}{q^3-2q^2+2q+1}
           \xlongequal{\text{(denoted by)}}\overline{\Rcal_{\new,\max}^{\Hcal_q}}(q^2,X,T) 
    \end{align} 
By replacing $q$ with $q^2$ in \eqref{eq:makk_hyper_maximum_bound}, we obtain
    \begin{align}\label{eq:202510291615}
           \overline{\Rcal^{g}_{\max}}(q^2,X,T)=\frac{2q^2-(X+T+8g+2)}{2q^2+X+T+4g+2}.
    \end{align} 
Their difference is
\begin{align}\notag
   &\overline{\Rcal_{\new,\max}^{\Hcal_q}}(q^2,X,T)-\overline{\Rcal^{g}_{\max}(q^2,X,T)}\\\label{eq:202601080113}
&=\frac{-6q^4+12q^3g+2q^3M-36q^2g-9q^2M-10q^2+16qg+2qM-4gM-M^2+4q+20g+2M+8}{(q^3-2q^2+2q+1)(2q^2+X+T+4g+2)},
\end{align}
where we set $M=X+T$ for simplicity. By \eqref{eq:202510291603}, \eqref{eq:202510291615}, and \eqref{eq:202601080113}, to prove this proposition, it suffices to show
$$
N(q,g,M):=-6q^4+12q^3g+2q^3M-36q^2g-9q^2M-10q^2+16qg+2qM-4gM-M^2+4q+20g+2M+8>0
$$
for any triples $q,g,M=X+T$ satisfying the conditions of this proposition.

Note that $N(q,g,M)$ is a concave function of $M$ for any fixed $q$ and $g$ since it is a quadratic polynomial of $M$ with leading coefficient $-1$. Furthermore, observe that condition \eqref{eq:makk_hyper_condition} implies $M=X+T\leq 2q^2-8g-2\leq 2q^2$. To prove this proposition, it suffices to show that $N(q,g,M)|_{M=4q}>0$ and $N(q,g,M)|_{M=2q^2}>0$ for any integers $q,g$ satisfying
\begin{itemize}
    \item[1)] $q\geq 14$ and $g=1$, or
    \item[2)] $q\geq 11$ and $g>1$;
\end{itemize}
and show that $N(q,g,M)|_{M=3.4q-1}>0$ and $N(q,g,M)|_{M=2q^2}>0$ for any integers $q,g$ satisfying
\begin{itemize}
   \item[3)] $q\geq 28$ and $g\geq 1$.
\end{itemize}
Indeed, by direct computation, we have
\begin{align*}
    &N(q,g,M)|_{M=4q}=2q^4+12q^3g-36q^3-36q^2g-18q^2+12q+20g+8,\\
    &N(q,g,M)|_{M=3.4q-1}=0.8q^4+12q^3g-32.6q^3-36q^2g-5.76q^2+2.4qg+15.6q+24g+5,\\
    &N(q,g,M)|_{M=2q^2}=4q^5-28q^4+12q^3g+4q^3-44q^2g-6q^2+16qg+4q+20g+8.
\end{align*}
They are all positive under the corresponding conditions of $q,g$ described in 1), 2), 3) above. The proof is complete.
\end{proof}

\subsection{Comparison with Existing XSTPIR Schemes based on Hermitian Curves}
\label{sec:4.Hermitian}
In this subsection, we compare the maximum rate of our proposed Hermitian-curve-based XSTPIR scheme (Theorem~\ref{thm:our_XSTPIR_Hermitian} and Corollary~\ref{cor:Our_Hermitian_maximum}, or see Row 6 of Table~\ref{tab:allmaximumrates}) with the existing Hermitian-curve-based XSTPIR scheme proposed in \cite{ghiandoni2025agcodes} (Theorem~\ref{thm:GGMT_Hermitian} and Proposition~\ref{prop:GGMT_hermitian_maximum}, or see Row 4 of Table~\ref{tab:allmaximumrates}).
\begin{proposition}\label{prop:202510191639}
    For any prime power $q\geq 5$ \footnote{When $q\leq 4$, neither the condition \eqref{eq:condition_our_hermitian} nor the condition \eqref{eq:ghi_hermitian_condition} can be satisfied for any $X\geq 1,T\geq 1$.}, the condition~\eqref{eq:condition_our_hermitian} is weaker than the condition~\eqref{eq:ghi_hermitian_condition}. Moreover, when $q\geq 6$ and conditions \eqref{eq:condition_our_hermitian} and \eqref{eq:ghi_hermitian_condition} are satisfied (as long as the condition~\eqref{eq:ghi_hermitian_condition} is satisfied), we have
    $$\Rcal_{\new,\max}^{\Hcal_q}(q^2,X,T)> \Rcal^{\Hcal_q}_{\max}(q^2,X,T).$$
\end{proposition}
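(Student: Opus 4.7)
My plan is to treat the two assertions of the proposition separately, as each reduces to a short manipulation once the floor expressions are untangled.

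For the condition comparison, the first step is to unfold \eqref{eq:ghi_hermitian_condition} into the linear inequality $X+T \leq q^3 - 5q^2 + 3q + 1$. I will then substitute this upper bound for $X+T$ into the argument of the floor appearing in \eqref{eq:condition_our_hermitian}, obtaining a lower bound of the form $q - 3 + \frac{1}{2q}$ whose floor is at least $q - 3$. The inequality $q - 3 \geq \frac{q-1}{2}$ is equivalent to $q \geq 5$, which yields the first assertion.

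For the rate comparison, I will write both rates in the standard form $A/(A+C)$, which reduces $\Rcal_{\new,\max}^{\Hcal_q}(q^2,X,T) > \Rcal^{\Hcal_q}_{\max}(q^2,X,T)$ to the single cross-multiplied inequality
\[
A_{\new} C_{\old} - A_{\old} C_{\new} > 0,
\]
where $A_{\new} = 2q\alpha - \tfrac{q(q-1)}{2}$, $A_{\old} = q\beta - \tfrac{q(q-1)}{2}$, $C_{\new} = M + 3q^2 + 2q - 2$, $C_{\old} = M + 3q^2 - q - 2$, with $M := X+T$ and $\alpha, \beta$ the respective floor values. The key algebraic input I need is a comparison between $\alpha$ and $\beta$. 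Since
\[
\frac{q^3 - 3q^2 - 3q + 2 - M}{2q} = \frac{q^3 - 3q^2 + q + 1 - M}{2q} - 2 + \frac{1}{2q}
\]
and the shift $-2 + \frac{1}{2q}$ lies in $(-2,-1)$ for every $q \geq 1$, the floor identity forces $\alpha \in \{\beta - 2,\, \beta - 1\}$, so in particular $\alpha \geq \beta - 2$. I will substitute this bound together with $\beta \geq q - 1$ (which is precisely \eqref{eq:ghi_hermitian_condition}) into the numerator above, expand, and collect terms to obtain a lower bound of the form $q\bigl((q-5)M + 3q^3 - \tfrac{35q^2}{2} + \tfrac{9q}{2} + 10\bigr)$. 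For $q \geq 6$ the coefficient $q - 5$ is positive, so this expression is increasing in $M$, and direct evaluation at the edge case $q = 6$, $M = 0$ shows it is already strictly positive there; this closes the second assertion.

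The delicate step I anticipate is verifying that the worst-case bound $\alpha = \beta - 2$, which does occur for some parameter values and so cannot be avoided, still leaves enough slack to push the argument through at the smallest admissible $q$. Once this is handled, the remainder is routine polynomial bookkeeping: expanding the cross-multiplied numerator, replacing $\beta$ by its lower bound $q - 1$ rather than carrying a floor through the algebra, and checking the sign of a cubic in $q$ on the relevant range.
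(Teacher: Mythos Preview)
Your proposal is correct. For the first assertion, your argument and the paper's are essentially the same idea in different dress: you unfold \eqref{eq:ghi_hermitian_condition} to the linear bound $X+T\leq q^3-5q^2+3q+1$ and substitute, while the paper bounds the difference of the two floor arguments by $2$ and compares the two thresholds directly; both reduce to $q\geq 5$.

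For the rate comparison your route is genuinely cleaner than the paper's. The paper writes $X+T=2qt+b$ with $-(q-2)\leq b\leq q+1$, splits into two cases to compute both floors \emph{exactly}, then observes that the first case can be bounded below by the second (which amounts precisely to your inequality $\alpha\geq\beta-2$), and finally analyzes a numerator $N(q,t,b)$ that is linear in $b$ and concave in $t$, checking endpoints in both auxiliary variables. You instead invoke $\alpha\geq\beta-2$ and $\beta\geq q-1$ from the outset, which avoids the $(t,b)$-parametrization entirely and leaves a polynomial that is linear in $M$ with coefficient $q-5>0$, whose constant term is a cubic in $q$. The paper's exact-floor computation yields a slightly sharper intermediate bound but no stronger conclusion; your approach trades that sharpness for a shorter, more transparent argument. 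One small point to tighten in your write-up: you need $3q^3-\tfrac{35}{2}q^2+\tfrac{9}{2}q+10>0$ for \emph{all} $q\geq 6$, not only at $q=6$; since its derivative $9q^2-35q+\tfrac{9}{2}$ is positive for $q\geq 4$, this follows immediately from the single evaluation you mention.
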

\begin{proof}
Let us show that the condition~\eqref{eq:ghi_hermitian_condition} implies the condition~\eqref{eq:condition_our_hermitian}. 

Since $\frac{q^3-3q^2+q+1-X-T}{2q}-\frac{q^3-3q^2-3q+2-X-T}{2q}=2-\frac{1}{2q}\leq 2$, we have 
$$\floor{\frac{q^3-3q^2+q+1-X-T}{2q}}-\floor{\frac{q^3-3q^2-3q+2-X-T}{2q}}\leq 2.$$
Since also $q-1-\frac{q-1}{2}=\frac{q-1}{2}\geq 2$ for any $q\geq 5$, the condition~\eqref{eq:ghi_hermitian_condition} implies \eqref{eq:condition_our_hermitian}.

Next, we conduct a comparison between $\Rcal_{\new,\max}^{\Hcal_q}(q^2,X,T)$ and $\Rcal^{\Hcal_q}_{\max}(q^2,X,T)$ when $q\geq 6$ and the condition~\eqref{eq:ghi_hermitian_condition} is satisfied.
Let $X+T=2qt+b$ for two integers $t,b$, where $-(q-2)\leq b\leq q+1$.
We divide our discussion into two cases: 1) $b=-(q-2)$ and 2) $-(q-3)\leq b\leq q+1$.
\begin{itemize}
    \item[1)] $b=-(q-2)$. Since $\frac{q^3-3q^2}{2q}=\frac{q(q-1)}{2}-q \in \Zbb$, we have $\floor{\frac{q^3-3q^2-3q+2-X-T}{2q}}=\frac{q(q-1)}{2}-q-t-1$ and $\floor{\frac{q^3-3q^2+q+1-X-T}{2q}}=\frac{q(q-1)}{2}-q-t$. Applying them to \eqref{eq:our_hermitian_maximum} and \eqref{eq:ghi_hermitian_maximum}, the difference of two maximum rates for comparison is
    \begin{align}
    \Rcal_{\new,\max}^{\Hcal_q}(q^2,X,T)-\Rcal^{\Hcal_q}_{\max}(q^2,X,T)&=\frac{2q(\frac{q(q-1)}{2}-q-t-1)-\frac{q(q-1)}{2}}{2q(\frac{q(q-1)}{2}-q-t-1)-\frac{q(q-1)}{2}+2qt+b+3q^2+2q-2} \notag \\\notag
    &\qquad\qquad\qquad-\frac{q(\frac{q(q-1)}{2}-q-t)-\frac{q(q-1)}{2}}{q(\frac{q(q-1)}{2}-q-t)-\frac{q(q-1)}{2}+2qt+b+3q^2-q-2}\\\notag
    &\geq\frac{2q(\frac{q(q-1)}{2}-q-t-2)-\frac{q(q-1)}{2}}{2q(\frac{q(q-1)}{2}-q-t-2)-\frac{q(q-1)}{2}+2qt+b+3q^2+2q-2}\\ \notag 
    &\qquad\qquad\qquad-\frac{q(\frac{q(q-1)}{2}-q-t)-\frac{q(q-1)}{2}}{q(\frac{q(q-1)}{2}-q-t)-\frac{q(q-1)}{2}+2qt+b+3q^2-q-2}\\ \label{eq:202510211814}
    &\xlongequal{\text{(denoted by)}}D(q,t,b).
    \end{align}

    \item[2)] $-(q-3)\leq b\leq q+1$. Since $\frac{q^3-3q^2}{2q}=\frac{q(q-1)}{2}-q \in \Zbb$, we have $\floor{\frac{q^3-3q^2-3q+2-X-T}{2q}}=\frac{q(q-1)}{2}-q-t-2$ and $\floor{\frac{q^3-3q^2+q+1-X-T}{2q}}=\frac{q(q-1)}{2}-q-t$.
    Applying them to \eqref{eq:our_hermitian_maximum} and \eqref{eq:ghi_hermitian_maximum}, the difference of two maximum rates for comparison is
   \begin{align}\notag 
    \Rcal_{\new,\max}^{\Hcal_q}(q^2,X,T)-\Rcal^{\Hcal_q}_{\max}(q^2,X,T)&=
    \frac{2q(\frac{q(q-1)}{2}-q-t-2)-\frac{q(q-1)}{2}}{2q(\frac{q(q-1)}{2}-q-t-2)
    -\frac{q(q-1)}{2}+2qt+b+3q^2+2q-2}\\ \notag
    &\qquad\qquad\qquad-\frac{q(\frac{q(q-1)}{2}-q-t)-\frac{q(q-1)}{2}}{q(\frac{q(q-1)}{2}-q-t)-\frac{q(q-1)}{2}+2qt+b+3q^2-q-2}\\
    &\xlongequal{\text{(see \eqref{eq:202510211814})}}D(q,t,b). \label{eq:202512171558}
    \end{align}
\end{itemize}
Recall that the condition~\eqref{eq:ghi_hermitian_condition} must be satisfied. That is, $\floor{\frac{q^3-3q^2+q+1-X-T}{2q}}=\frac{q(q-1)}{2}-q-t\geq q-1$, which implies $t\leq \frac{q^2}{2}-\frac{5}{2}q+1$. By \eqref{eq:202510211814} and \eqref{eq:202512171558}, to prove this proposition, it suffices to prove that $D(q,t,b)>0$ for any integers $q,t,b$ satisfying $q\geq 6$, $0\leq t\leq \frac{q^2}{2}-\frac{5}{2}q+1$, $-(q-2)\leq b\leq q+1$, and $2qt+b\geq 2$. Next, we proceed to prove this.

By direct computation, we have 
\begin{align}\label{eq:202512171619}
    D(q,t,b){=}
    \frac{\frac{3}{2}q^5+q^4t-\frac{13}{2}q^4-6q^3t-2q^2t^2+\frac{1}{2}q^3b-\frac{11}{2}q^3-4q^2t-\frac{3}{2}q^2b-qtb+\frac{11}{2}q^2+2qt-4qb+8q}{\parentheses{2q(\frac{q(q-1)}{2}{-}q{-}t{-}2)
    {-}\frac{q(q-1)}{2}{+}2qt{+}b{+}3q^2{+}2q{-}2}\parentheses{q(\frac{q(q-1)}{2}{-}q{-}t)-\frac{q(q-1)}{2}{+}2qt{+}b{+}3q^2{-}q{-}2}}.
\end{align}
Note that its denominator is positive for those $q,t,b$. Hence, to prove that $D(q,t,b)>0$ for any integers $q,t,b$ satisfying $q\geq 6$, $0\leq t\leq \frac{q^2}{2}-\frac{5}{2}q+1$, $-(q-2)\leq b\leq q+1$, and $2qt+b\geq 2$, we only need to prove that its numerator 
\begin{align}\label{eq:202510221425}
    N(q,t,b)=\frac{3}{2}q^5+q^4t-\frac{13}{2}q^4-6q^3t-2q^2t^2+\frac{1}{2}q^3b-\frac{11}{2}q^3-4q^2t-\frac{3}{2}q^2b-qtb+\frac{11}{2}q^2+2qt-4qb+8q
\end{align} 
is positive. 
Note that $\pfrac{N(q,t,b)}{b}=\frac{1}{2}q^3-\frac{3}{2}q^2-qt-4q > 0$ when $q\geq 6$ and $t\leq \frac{1}{2}q^2-\frac{5}{2}q+1,$ in which case we have $N(q,t,b)\geq N(q,t,-(q-2))$ for any $-(q-2)\leq b\leq q+1$. Therefore, 
to prove that $N(q,t,b)>0$ for any integers $q,t,b$ satisfying $q\geq 6$, $0\leq t\leq \frac{q^2}{2}-\frac{5}{2}q+1$, $-(q-2)\leq b\leq q+1$, and $2qt+b\geq 2$, it suffices to prove that 
$$
N(q,t,b)|_{b=-(q-2)}=\frac{3}{2}q^5+q^4t-7q^4-6q^3t-2q^2t^2-3q^3-3q^2t+\frac{13}{2}q^2
$$
is positive for any integers $q,t$ satisfying $q\geq 6$, $0\leq t\leq \frac{q^2}{2}-\frac{5}{2}q+1$. This is true because, for any fixed $q\geq 6$,
\begin{itemize}
    \item[1)] $N(q,t,b)|_{b=-(q-2)}$ is a concave function of $t$ since it is a quadratic polynomial of $t$ with leading coefficient $-2q^2$,
    \item[2)] $N(q,t,b)|_{b=-(q-2), t=0}=\frac{3}{2}q^5 - 7q^4-3q^3+\frac{13}{2}q^2>0$ and $N(q,t,b)|_{b=-(q-2), t=\frac{q^2}{2}-\frac{5}{2}q+1}=q^5-7q^4+\frac{17}{2}q^3+\frac{3}{2}q^2>0$.
\end{itemize}
The proof is complete.
\end{proof}

\begin{remark}
  From the above comparison, we cannot obtain a quantitative sense of the improvement. We briefly estimate how much larger our maximum rate can be than that of the existing scheme based on the Hermitian function fields when $X+T=\frac{2}{5}q^3$. For this, we only need to recall the lower estimation $D(q,t,b)$ of the difference $\Rcal_{\new,\max}^{\Hcal_q}(q^2,X,T)-\Rcal^{\Hcal_q}_{\max}(q^2,X,T)$, in \eqref{eq:202510211814} and \eqref{eq:202512171558}. The value  $D(q,t,b)|_{t=\frac{q^2}{5},b=0}$ is a lower estimation of the difference $\Rcal_{\new,\max}^{\Hcal_q}(q^2,X,T)-\Rcal^{\Hcal_q}_{\max}(q^2,X,T)$ when $X+T=\frac{2q^3}{5}$ (recall that we set $X+T=2qt+b$ in \eqref{eq:202510211814} and \eqref{eq:202512171558}). Using the explicit expression \eqref{eq:202512171619} of $D(q,t,b)$, a direct computation yields that 
  $$D(q,t,b)|_{t=\frac{q^2}{5},b=0}=\frac{\frac{3}{25}q^6 + \frac{3}{10}q^5 - \frac{73}{10}q^4 - \frac{51}{10}q^3 + \frac{11}{2}q^2 + 8q}{\frac{7}{10}q^6 + \frac{13}{20}q^5 - \frac{41}{20}q^4 - \frac{93}{20}q^3 - \frac{1}{4}q^2 + 4q + 4},$$ which tends to $\frac{\frac{3}{25}}{\frac{7}{10}}\approx 0.1714$ when $q\rightarrow \infty$. And it holds $D(q,t,b)|_{t=\frac{q^2}{5},b=0}\geq 0.14$ for any $q\geq 15$. That is to say, when $q\geq 15$ and $X+T=\frac{2q^2}{5}$, $\Rcal_{\new,\max}^{\Hcal_q}(q^2,X,T)$ is at least 0.14 larger than $\Rcal_{\max}^{\Hcal_q}(q^2,X,T)$. This represents a significant improvement.
\end{remark}

\subsection{Overall Comparisons}
In this subsection, we summarize those comparisons in Section~\ref{sec:4.rational}, \ref{sec:4.hyper}, \ref{sec:4.Hermitian}, to get overall comparisons between the maximum rates of our proposed XSTPIR schemes and all existing XSTPIR schemes.

When the fixed field size is $q^2\geq 14^2$ and $X+T\geq 4q$, our XSTPIR schemes based on Hermitian curves have higher maximum rates than all existing XSTPIR schemes. 
\begin{corollary}\label{cor:comparison_our_hermitian}
For any $q\geq 14$ and $X,T$ satisfying $4q\leq X+T$ and $\floor{\frac{q^3-3q^2-3q+2-X-T}{2q}}\geq \frac{q-1}{2}$, we have 
$$
\Rcal^{\Hcal_q}_{\new,\max}(q^2,X,T)>\max\braces{\Rcal^{\Hcal_q}_{\max}(q^2,X,T),\max_{g\geq 1}\overline{\Rcal^{g}_{\max}}(q^2,X,T),\Rcal^{\Xcal}_{\max}(q^2,X,T)}.
$$
(For the terms in this inequality, see \eqref{eq:our_hermitian_maximum}, \eqref{eq:ghi_hermitian_maximum}, \eqref{eq:makk_hyper_maximum_bound}, \eqref{eq:jia_makk_rational_maximum}. When $X,T$ do not satisfy their corresponding conditions~\eqref{eq:condition_our_hermitian}, \eqref{eq:ghi_hermitian_condition}, \eqref{eq:makk_hyper_condition}, \eqref{eq:jia_makk_rational_condition}, the value will be specified to be $0$ since in this case no valid scheme exists.)
\end{corollary}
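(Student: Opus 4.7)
The plan is to assemble the three pairwise comparisons already established in Propositions~\ref{prop:202511222203}, \ref{prop:202511222143}, and \ref{prop:202510191639}, while carefully handling the boundary cases where the implicit conditions of the existing schemes fail. Since the statement uses the convention that when a scheme's implicit feasibility condition is violated its maximum rate is set to $0$, the strategy will be, for each of the three competing quantities, to either invoke the relevant pairwise comparison or to show the competitor is $0$ while $\mathcal{R}^{\mathcal{H}_q}_{\mathrm{new},\max}(q^2,X,T)>0$ under our hypotheses.

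First, I would check that our own rate is strictly positive. The hypothesis $\lfloor \frac{q^3-3q^2-3q+2-X-T}{2q}\rfloor\geq\frac{q-1}{2}$ is exactly condition~\eqref{eq:condition_our_hermitian}, so Corollary~\ref{cor:Our_Hermitian_maximum} gives a valid scheme whose rate~\eqref{eq:our_hermitian_maximum} is strictly positive. This is what enables the ``$>0$'' reductions in the degenerate cases below.

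Next I would handle the three competitors separately.
\begin{itemize}
\item \emph{Rational-curve competitor $\mathcal{R}^{\mathcal{X}}_{\max}(q^2,X,T)$.} If $X+T>q^2-2$, condition~\eqref{eq:jia_makk_rational_condition} (with $q$ replaced by $q^2$) fails and by convention $\mathcal{R}^{\mathcal{X}}_{\max}(q^2,X,T)=0$, so the inequality is immediate. Otherwise $3q\leq 4q\leq X+T\leq q^2-2$, and since $q\geq 14\geq 9$, Proposition~\ref{prop:202511222203} applies directly.
\item \emph{Hyperelliptic upper bound $\max_{g\geq 1}\overline{\mathcal{R}^g_{\max}}(q^2,X,T)$.} For each fixed $g\geq 1$, if $X+T>2q^2-8g-2$ then condition~\eqref{eq:makk_hyper_condition} fails and $\overline{\mathcal{R}^g_{\max}}(q^2,X,T)=0$. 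Otherwise Proposition~\ref{prop:202511222143} applies: for $g=1$ the hypothesis ``$q\geq 14,\ X+T\geq 4q$'' matches case 1); for $g\geq 2$ the hypothesis ``$q\geq 11,\ X+T\geq 4q$'' is satisfied since $q\geq 14$, matching case 2). Taking the supremum over $g$ preserves the strict inequality because each $g$ gives either $0$ or a value strictly smaller than $\mathcal{R}^{\mathcal{H}_q}_{\mathrm{new},\max}(q^2,X,T)$; note that only finitely many $g$ can give nonzero contributions since $\overline{\mathcal{R}^g_{\max}}(q^2,X,T)=0$ once $g>(2q^2-X-T-2)/8$.
\item \emph{Existing Hermitian scheme $\mathcal{R}^{\mathcal{H}_q}_{\max}(q^2,X,T)$.} If condition~\eqref{eq:ghi_hermitian_condition} fails, the competitor is $0$ by convention. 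Otherwise Proposition~\ref{prop:202510191639} applies since $q\geq 14\geq 6$.
\end{itemize}

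I do not expect a serious obstacle here: the content is really a careful case analysis to verify that the hypotheses ``$q\geq 14$'' and ``$X+T\geq 4q$'' cover every regime allowed by condition~\eqref{eq:condition_our_hermitian}. The only place that requires minor care is the supremum over $g$ in the hyperelliptic comparison, where one must argue uniformity: because Proposition~\ref{prop:202511222143} is applied to each $g$ individually and the set of $g$ with nonzero $\overline{\mathcal{R}^g_{\max}}$ is finite, the strict inequality survives the maximum. Once this bookkeeping is spelled out, the corollary follows as an immediate packaging of the three preceding propositions.
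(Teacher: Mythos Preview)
Your proposal is correct and follows essentially the same approach as the paper: both proofs split into the same three pairwise comparisons and invoke Propositions~\ref{prop:202511222203}, \ref{prop:202511222143} (cases 1) and 2)), and \ref{prop:202510191639} respectively. Your version is in fact slightly more careful than the paper's, since you explicitly treat the degenerate cases where a competitor's feasibility condition fails (and is therefore set to $0$), and you justify why the strict inequality survives the supremum over $g$; the paper simply cites the three propositions without spelling out these boundary cases.
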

    
\begin{proof}  
We prove this corollary by proving the following three separate claims for any prime power $q\geq 14$ and any positive integers $X,T$ satisfying $4q\leq X+T$ and $\floor{\frac{q^3-3q^2-3q+2-X-T}{2q}}\geq \frac{q-1}{2}$:  
\begin{itemize}
    \item[1)] $\Rcal^{\Hcal_q}_{\new,\max}(q^2,X,T)> \Rcal^{\Xcal}_{\max}(q^2,X,T)$. It is true by  Proposition~\ref{prop:202511222203}. 
    \item[2)] $\Rcal^{\Hcal_q}_{\new,\max}(q^2,X,T)> \max_{g\geq 1}\overline{\Rcal^{g}_{\max}}(q^2,X,T)$. It is true by Proposition~\ref{prop:202511222143} 1), 2).
    \item[3)] $\Rcal^{\Hcal_q}_{\new,\max}(q^2,X,T)> \Rcal^{\Hcal_q}_{\max}(q^2,X,T)$. It is true by Proposition~\ref{prop:202510191639}.
\end{itemize} 
\end{proof}

When the fixed field size is $q^2\geq 28^2$ and $X+T\geq 4$, our two XSTPIR schemes jointly provide the largest known maximum rates.
\begin{corollary}\label{cor:comparison_our_rational_and_hermitian}
For any $q\geq 28$ and $X,T$ satisfying $4\leq X+T$ and $\floor{\frac{q^3-3q^2-3q+2-X-T}{2q}}\geq \frac{q-1}{2}$, we have 
$$
\max\braces{\Rcal^{\Hcal_q}_{\new,\max}(q^2,X,T),\Rcal^{\Xcal}_{\new,\max}(q^2,X,T)}>\max\braces{\Rcal^{\Hcal_q}_{\max}(q^2,X,T),\max_{g\geq 1}\overline{\Rcal^{g}_{\max}}(q^2,X,T), \Rcal^{\Xcal}_{\max}(q^2,X,T)}.
$$
(For the terms in this inequality, see \eqref{eq:our_hermitian_maximum}, \eqref{eq:our_rational_maximum}, \eqref{eq:ghi_hermitian_maximum}, \eqref{eq:makk_hyper_maximum_bound}, \eqref{eq:jia_makk_rational_maximum}. When $X,T$ do not satisfy their corresponding conditions \eqref{eq:condition_our_hermitian}, \eqref{eq:condition_our_rational}, \eqref{eq:ghi_hermitian_condition}, \eqref{eq:makk_hyper_condition}, \eqref{eq:jia_makk_rational_condition}, the value will be specified to be $0$ since in this case no valid scheme exists.)
\end{corollary}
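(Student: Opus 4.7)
The plan is to split the range of $X+T$ into two regimes and apply our two schemes separately in each. First I observe that Proposition~\ref{prop:202510191639}, valid for $q \ge 28 \ge 6$, gives $\Rcal^{\Hcal_q}_{\new,\max}(q^2,X,T) > \Rcal^{\Hcal_q}_{\max}(q^2,X,T)$ whenever condition~\eqref{eq:ghi_hermitian_condition} is satisfied, while the inequality is trivial otherwise by the convention that sets $\Rcal^{\Hcal_q}_{\max}$ to $0$. Hence $\max\{\Rcal^{\Xcal}_{\new,\max},\Rcal^{\Hcal_q}_{\new,\max}\} > \Rcal^{\Hcal_q}_{\max}$ will hold unconditionally, and it will remain only to dominate $\Rcal^{\Xcal}_{\max}$ and $\max_{g\ge 1}\overline{\Rcal^g_{\max}}$.

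For the regime $X+T \ge 4q$ I would apply Corollary~\ref{cor:comparison_our_hermitian} directly: its hypotheses ($q \ge 14$, $X+T \ge 4q$, and the condition on $\floor{\cdot}$) are all met here, so $\Rcal^{\Hcal_q}_{\new,\max}(q^2,X,T)$ alone already dominates both $\Rcal^{\Xcal}_{\max}(q^2,X,T)$ and $\max_{g\ge 1}\overline{\Rcal^g_{\max}}(q^2,X,T)$, and this regime is finished.

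For the remaining regime $4 \le X+T < 4q$ I would use the rational-new scheme instead. Applying Proposition~\ref{prop:202511222202} with the field-size parameter replaced by $q^2$ should yield $\Rcal^{\Xcal}_{\new,\max}(q^2,X,T) > \Rcal^{\Xcal}_{\max}(q^2,X,T)$: the required bound $X+T \le q^2-10$ follows from $X+T < 4q \le q^2 - 10$ (valid for $q \ge 6$), and $q^2 \ge 21$ is immediate. Similarly, Proposition~\ref{prop:202511222144} applied with field size $q^2$ will give $\Rcal^{\Xcal}_{\new,\max}(q^2,X,T) > \overline{\Rcal^g_{\max}}(q^2,X,T)$ for every $g \ge 1$, since the interval $[4, 4q)$ is dwarfed by $\floor{2\sqrt{3}\,q^2}-8$ when $q \ge 28$. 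Whenever an existing-scheme condition fails, the convention forces the corresponding rate to $0$ and the inequality becomes trivial.

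There is no serious obstacle here, only bookkeeping. The essential conceptual point is that Proposition~\ref{prop:202510191639} handles the Hermitian-old comparison once and for all, so the dichotomy $X+T < 4q$ versus $X+T \ge 4q$ only needs to address the rational-old and hyperelliptic comparisons, both already settled by existing propositions from Sections~\ref{sec:4.rational} and~\ref{sec:4.hyper}. The main thing to check carefully is the compatibility of the various numerical conditions; in particular one should confirm $4q \le q^2 - 10$ and $4q \le \floor{2\sqrt{3}\,q^2}-8$ for $q \ge 28$, both of which are straightforward.
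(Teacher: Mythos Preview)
Your proposal is correct and uses essentially the same ingredients as the paper (Propositions~\ref{prop:202511222202}, \ref{prop:202511222203}, \ref{prop:202511222144}, \ref{prop:202511222143}, and \ref{prop:202510191639}); the only difference is organizational. You split once at $X+T=4q$ and recycle Corollary~\ref{cor:comparison_our_hermitian} for the high regime, whereas the paper argues target-by-target and, for the hyperelliptic comparison, invokes Proposition~\ref{prop:202511222143} part~3) directly with its lower threshold $\lfloor 3.4q\rfloor$ (which is precisely why that third case, tailored to $q\ge 28$, was included).
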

    
\begin{proof}  
We prove this corollary by proving the following three separate claims for any prime power $q\geq 28$ and any positive integers $X,T$ satisfying $4\leq X+T$ and $\floor{\frac{q^3-3q^2-3q+2-X-T}{2q}}\geq \frac{q-1}{2}$: 
\begin{itemize}
    \item[1)] $\max\braces{\Rcal^{\Hcal_q}_{\new,\max}(q^2,X,T),\Rcal^{\Xcal}_{\new,\max}(q^2,X,T)}> \Rcal^{\Xcal}_{\max}(q^2,X,T)$. It is true by Propositions~\ref{prop:202511222202} and \ref{prop:202511222203}. 
    
    \item[2)] $\max\braces{\Rcal^{\Hcal_q}_{\new,\max}(q^2,X,T),\Rcal^{\Xcal}_{\new,\max}(q^2,X,T)}> \max_{g\geq 1}\overline{\Rcal^{g}_{\max}}(q^2,X,T)$. It is true by Proposition~\ref{prop:202511222144} and Proposition~\ref{prop:202511222143} 3).
    \item[3)] $\max\braces{\Rcal^{\Hcal_q}_{\new,\max}(q^2,X,T),\Rcal^{\Xcal}_{\new,\max}(q^2,X,T)}> \Rcal^{\Hcal_q}_{\max}(q^2,X,T)$. It is true by Proposition~\ref{prop:202510191639}.
\end{itemize} 
\end{proof}
The following Figure~\ref{fig:comparison} visualizes the comparison of maximum rates given in Corollary~\ref{cor:comparison_our_rational_and_hermitian}, when the field size is equal to $q^2=29^2$. 
To reduce the complexity of the figure and avoid excessive line overlapping, we only plot $\overline{\Rcal^{g}_{\max}}(q^2,X,T)$ for the case $g=1$.  
It can be seen from the figure that at least one of the maximum rates of our two XSTPIR schemes exceeds all the maximum rates of existing XSTPIR schemes, for any $X\geq 2$ (we set $X=T$ in Figure~\ref{fig:comparison} for simplicity).
\begin{figure}[H]
    \centering
    \includegraphics[width=1\linewidth]{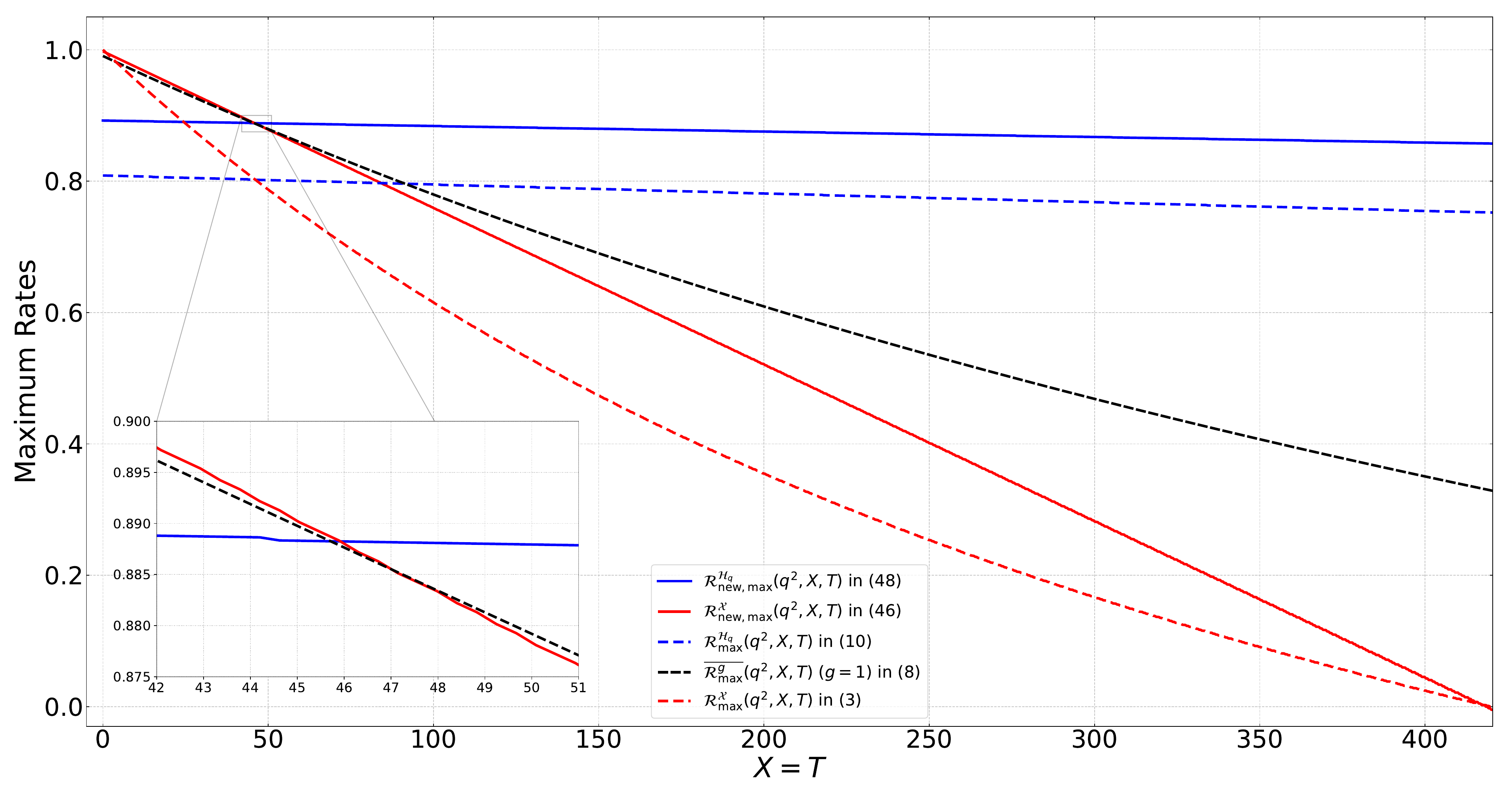}
    \caption{Comparison of maximum rates for $q=29$ (field size $q^2=841$).}
    \label{fig:comparison}
\end{figure}
\section{Conclusions}\label{sec:5} 
    In this paper, we focus on constructing new XSTPIR schemes with higher maximum rates. Moving beyond the mainstream approach of seeking curves with higher genus and more rational points, we achieve this goal by enhancing the utilization efficiency of rational points on curves that have already been considered in previous work. Specifically, by introducing a family of bases of $\Span_{\F_q}\{1,x\dots,x^{k-1}\}$ as an alternative to the Lagrange interpolation basis, we demonstrate that significant rate improvements are possible. We believe our approach provides valuable insights for future research, even when applied to other types of algebraic curves, such as Norm-Trace curves, Ree curves, and Suzuki curves.
\bibliographystyle{IEEEtran}
\bibliography{References.bib}

\end{document}